\documentclass[11pt]{llncs}

\usepackage{geometry}
\usepackage{amssymb}

\usepackage{amsmath}
\allowdisplaybreaks 
\numberwithin{equation}{section} 

\usepackage{breakcites}
\usepackage{caption}
\usepackage{subcaption}
\usepackage{nicefrac}
\usepackage[para,online,flushleft]{threeparttable}

\usepackage{soul}

\usepackage{ifdraft}

\usepackage{anyfontsize}

\usepackage{color}
\usepackage[dvipsnames]{xcolor}

\usepackage{graphicx} \setkeys{Gin}{draft=false}
\usepackage{mathtools}

\usepackage{booktabs}
\usepackage{array}
\newcolumntype{L}[1]{>{\raggedright\let\newline\\\arraybackslash\hspace{0pt}}m{#1}}
\newcolumntype{C}[1]{>{\centering\let\newline\\\arraybackslash\hspace{0pt}}m{#1}}
\newcolumntype{R}[1]{>{\raggedleft\let\newline\\\arraybackslash\hspace{0pt}}m{#1}}

\usepackage{multirow}

\usepackage{makecell}

\usepackage{booktabs,colortbl}

\usepackage{longtable}

\usepackage{physics}

\usepackage{xspace}
\usepackage[utf8]{inputenc}

\usepackage{amsthm}

\usepackage{cases} 

\usepackage{enumitem}
\setitemize{
topsep=0.5em,
itemsep=0.5em
}
\setenumerate{
topsep=0.5em,
itemsep=0.5em
}

\usepackage{scalerel}

\usepackage{algorithm,algpseudocode}

\usepackage[pagebackref=true,draft=false]{hyperref}
\hypersetup{
  hypertexnames=true, 
	linktoc=all, 
	hidelinks, 
	colorlinks=true,
	allcolors=blue
}

\AddToHook{cmd/appendix/before}{%
}

\ifdraft{
  \usepackage[multiuser,inline,nomargin,marginclue,draft]{fixme}
}{
  \usepackage[multiuser,inline,nomargin,marginclue]{fixme}
}
\fxusetheme{color}
\colorlet{fxtarget}{Red}

\FXRegisterAuthor{x}{ex}{\color{Plum} Xiao}

\FXRegisterAuthor{Ziqing}{EnvZiqing}{\color{Red} Ziqing}

\FXRegisterAuthor{n}{en}{\color{ForestGreen} Fuyuki}

\usepackage{nameref}
\makeatletter
\let\orgdescriptionlabel\descriptionlabel
\renewcommand*{\descriptionlabel}[1]{%
  \let\orglabel\label
  \let\label\@gobble
  \phantomsection
  \edef\@currentlabel{#1}%
  \let\label\orglabel
  \orgdescriptionlabel{#1}%
}
\makeatother

\usepackage[nameinlink]{cleveref}
\Crefname{Algorithm}{Algorithm}{Algorithms}
\Crefname{appendix}{Appx.}{Appx.}
\Crefname{Assumption}{Assumption}{Assumption}
\Crefname{Case}{Case}{Cases}
\Crefname{Condition}{Condition}{Conditions}
\Crefname{definition}{Def.}{Def.}
\Crefname{equation}{Eq.}{Eq.}
\Crefname{Equation}{Eq.}{Eq.}
\Crefname{Execution}{Execution}{Executions}
\Crefname{Expression}{Expression}{Expressions}
\creflabelformat{Expression}{#2(#1)#3}
\Crefname{figure}{Fig.}{Fig.}
\Crefname{Fact}{Fact}{Facts}
\Crefname{Inequality}{Inequality}{Inequalities}
\creflabelformat{Inequality}{#2(#1)#3}
\Crefname{Issue}{Issue}{Issues}
\Crefname{lemma}{Lem.}{Lem.}
\Crefname{Language}{Language}{Languages}
\creflabelformat{Language}{#2(#1)#3}
\Crefname{obs}{Observation}{Observations}
\Crefname{Path}{Path}{Paths}
\Crefname{Problem}{Problem}{Problems}
\Crefname{Property}{Property}{Properties}
\Crefname{remark}{Rmk.}{Rmk.}
\Crefname{Round}{Round}{Round}
\Crefname{Reason}{Reason}{Reasons}
\Crefname{section}{Sec.}{Sec.}
\Crefname{Stage}{Stage}{Stages}
\Crefname{Step}{Step}{Steps}
\Crefname{theorem}{Thm.}{Thm.}
\Crefname{Trick}{Trick}{Tricks}
\Crefname{Type}{Type}{Types}

\usepackage[many]{tcolorbox}
\newtcolorbox
[auto counter,
Crefname={Constr.}{Constr.},
]{ConstructionBox}[2][]
{
  title    = Construction~\thetcbcounter: #2,#1,
  fonttitle=\small\bfseries,
  enforce breakable, 
  colframe = black,
  coltitle = black,
  colbacktitle = white, 
  coltext  = black, 
  colback  = white,
  arc=0pt,
  outer arc=0pt,
  boxrule=0.5pt,
  left=0pt, 
  right=0pt, 
   top=0pt,
  bottom=0pt,
}

\newtcolorbox
[auto counter,
Crefname={Prot.}{Prot.}
]{ProtocolBox}[2][]
{
  title    = Protocol~\thetcbcounter: #2,#1,
  fonttitle=\small\bfseries,
  enforce breakable, 
  colframe = black,
  coltitle = black,
  colbacktitle = white, 
  coltext  = black, 
  colback  = white,
  arc=0pt,
  outer arc=0pt,
  boxrule=0.5pt,
  toprule at break = 0pt,
  bottomrule at break = 0pt,
  left=0pt, 
  right=0pt, 
   top=0pt,
  bottom=0pt,
}

\newtcolorbox
[auto counter,
Crefname={Fig.}{Fig.},
]{FigureBox}[2][]
{
  title    = Figure~\thetcbcounter: #2,#1,
  fonttitle=\small\bfseries,
  enforce breakable, 
  colframe = black,
  coltitle = black,
  colbacktitle = white, 
  coltext  = black, 
  colback  = white,
  arc=0pt,
  outer arc=0pt,
  boxrule=0.5pt,
  toprule at break = 0pt,
  bottomrule at break = 0pt,
  left=0pt, 
  right=0pt, 
   top=0pt,
  bottom=0pt,
}

\newtcolorbox[auto counter]{CommentBox}[2][]
{
  breakable,
  colframe = gray!10,
  colback  = gray!2,
  coltitle = red, 
  coltext  = red, 
  title    = #2,
  arc=0pt,
  outer arc=0pt,
  boxrule=1.5pt,
  left=0mm,
  right=0pt,
  top=0pt,
  bottom=0pt,
}

\newtcolorbox[
auto counter,
number within=section, 
Crefname={Algo.}{Algo.}
]{AlgorithmBox}[2][]
{
  title    = Algorithm~\thetcbcounter: #2,#1,
  fonttitle=\small\bfseries,
  enforce breakable, 
  colframe = black,
  coltitle = black,
  colbacktitle = white, 
  coltext  = black, 
  colback  = white,
  arc = 0pt,
  outer arc = 0pt,
  boxrule = 0.5pt,
  toprule at break = 0pt,
  bottomrule at break = 0pt,
  left = 0pt, 
  right = 0pt, 
  top=0pt,
  bottom=0pt
}

\newtcolorbox
[auto counter,
number within=section,
Crefname={Expr.}{Expr.}
]{ExperimentBox}[2][]
{
  title    = Experiment~\thetcbcounter: #2,#1,
  fonttitle=\small\bfseries,
  enforce breakable, 
  colframe = black,
  coltitle = black,
  colbacktitle = white, 
  coltext  = black, 
  colback  = white,
  arc=0pt,
  outer arc=0pt,
  boxrule=0.5pt,
  toprule at break = 0pt,
  bottomrule at break = 0pt,
  left=0pt, 
  right=0pt, 
   top=0pt,
  bottom=0pt,
}

\newtcolorbox
[auto counter,
number within=section,
Crefname={Game}{Game}
]{GameBox}[2][]
{
  title    = Game~\thetcbcounter: #2,#1,
  fonttitle=\small\bfseries,
  enforce breakable, 
  colframe = black,
  coltitle = black,
  colbacktitle = white, 
  coltext  = black, 
  colback  = white,
  arc=0pt,
  outer arc=0pt,
  boxrule=0.5pt,
  toprule at break = 0pt,
  bottomrule at break = 0pt,
  left=0pt, 
  right=0pt, 
   top=0pt,
  bottom=0pt,
}

\spnewtheorem{construction}{Construction}{\bfseries}{\itshape}
\crefname{construction}{construction}{constructions}
\Crefname{construction}{Construction}{Constructions}

\spnewtheorem*{MyComment}{Comment}{\bfseries}{\itshape}

\spnewtheorem{MyClaim}[theorem]{Claim}{\bfseries}{\itshape}
\Crefname{MyClaim}{Claim}{Claims}
\creflabelformat{Claim}{#2(#1)#3}

\makeatletter
\renewcommand\subsubsection{\@startsection{subsubsection}{3}{\z@}%
                       {-18\p@ \@plus -4\p@ \@minus -4\p@}%
                       {0.5em \@plus 0.22em \@minus 0.1em}%
                       {\normalfont\normalsize\bfseries\boldmath}}

\newcommand\subsubsubsection{\@startsection{paragraph}{4}{\z@}
{-2.5ex\@plus -1ex \@minus -.25ex}{1.25ex \@plus .25ex}{\normalfont\normalsize\bfseries}}
\makeatother

\Crefname{subsection}{Sec.}{Sec.}
\Crefname{APPXsubsubsection}{Appx.}{Appx.}

\newcommand{\para}[1]{\vspace{1em}\noindent {\bf #1}}
\newcommand{\subpara}[1]{\vspace{1em}\noindent\underline{#1}}

\usepackage[normalem]{ulem}
\newcommand{\dlt}{\bgroup\markoverwith{\textcolor{red}{\rule[0.5ex]{1pt}{0.9pt}}}\ULon}

\usepackage{tikz}
\usepackage{etoolbox}
\newcommand{\circledtwodigital}[2][]{%
  \tikz[baseline=(char.base)]{%
    \node[shape = circle, draw, inner sep = 0pt]
    (char) {\phantom{\ifblank{#1}{#2}{#1}}};%
    \node at (char.center) {\makebox[0pt][c]{#2}};}}
\robustify{\circledtwodigital}

\newcommand{\msf}{\mathsf}
\newcommand{\mcal}{\mathcal}
\newcommand{\mrm}{\mathrm}
\newcommand{\mbf}{\mathbf}

\newcommand{\Adv}{\ensuremath{\mathcal{A}}\xspace}
\newcommand{\A}{\ensuremath{\mathcal{A}}\xspace}

\renewcommand{\bar}{\overline}

\newcommand{\bits}{\ensuremath{\{0,1\}}\xspace}

\newcommand{\calX}{\ensuremath{\mathcal{X}}}

\newcommand{\cind}{\ensuremath{\stackrel{\text{c}}{\approx}}\xspace}

\renewcommand{\epsilon}{\varepsilon}

\newcommand{\Naturals}{\ensuremath{\mathbb{N}}\xspace}

\newcommand{\negl}{\ensuremath{\mathsf{negl}}\xspace}

\newcommand{\NP}{\ensuremath{\mathbf{NP}}\xspace}

\renewcommand{\paragraph}{\para}

\newcommand{\poly}{\ensuremath{\mathsf{poly}}\xspace}

\newcommand{\ra}{\ensuremath{\rightarrow}\xspace}

\newcommand{\reg}{\mathtt}

\newcommand{\sample}{\gets}

\newcommand{\secpar}{\ensuremath{\lambda}\xspace}

\newcommand{\Set}[1]{\ensuremath{\{#1\}}\xspace}

\newcommand{\Sim}{\ensuremath{\mathcal{S}}\xspace}

\renewcommand{\tilde}{\widetilde}

\newcommand{\xor}{\ensuremath{\oplus}\xspace}

\usepackage{bm}
\usepackage{mdframed}

\makeatletter
\def\@fnsymbol#1{\ensuremath{\ifcase#1\or *\or \dagger\or \ddagger\or
   \mathsection\or \mathparagraph\or \|\or **\or \dagger\dagger
   \or \ddagger\ddagger \else\@ctrerr\fi}}
\makeatother

\begin{document}

\ifdraft{
}{}

\ifdraft{
    \pagenumbering{gobble} 
    \listoffixmes
    \addcontentsline{toc}{section}{List of Corrections}
    \clearpage
}{}

\title{I Prove, Therefore I Am: \\ 
Spatiotemporal Multi-Party Computation}

\author{
Ziqing Guo\inst{1}
	 \and
Fuyuki Kitagawa\inst{2}
 \and
 Xiao Liang\inst{3}\thanks{The work described in this paper was partially supported by a 
grant from the Research Grants Council of the Hong Kong Special Administrative 
Region, China [Project No.: CUHK 
24208025].}
 }
 \institute{
 The Chinese University of Hong Kong, Hong Kong\\ \email{ziqing-guo@link.cuhk.edu.hk}
  \and
  NTT Social Informatics Laboratories, Japan\\ \email{fuyuki.kitagawa@ntt.com}
 \and
 The Chinese University of Hong Kong, Hong Kong\\ \email{xiaoliang@cuhk.edu.hk}
 }

\let\oldaddcontentsline\addcontentsline
\def\addcontentsline#1#2#3{}
\maketitle
\def\addcontentsline#1#2#3{\oldaddcontentsline{#1}{#2}{#3}}

\begin{abstract}
  {\phantomsection
  \addcontentsline{toc}{section}{Abstract}}

  Secure multiparty computation (MPC) enables mutually distrustful parties to compute on private digital inputs. We initiate the study of \emph{spatiotemporal MPC}, extending this paradigm to functionalities whose inputs additionally depend on physical facts such as the parties' locations, times, or trajectories. Such protocols must simultaneously hide spatiotemporal information and ensure its physical consistency: a malicious party should not be able to make the functionality operate on a spatiotemporal input inconsistent with its actual physical state.\\[-0.5em]
  
  The main conceptual challenge is to formulate extraction of spatiotemporal information within the simulation-based security framework. We introduce \emph{arguments of spatiotemporal knowledge}, following the principle ``I prove, therefore I am:'' rather than defining physical presence directly through a mathematical relation, we define it operationally through the ability to complete a sound spatiotemporal verification protocol. Accordingly, an extractor recovers a spatiotemporal point from a successful prover and certifies its physical validity by using the extracted prover to succeed in an auxiliary spatiotemporal verification protocol. Building on this notion, we define universally composable (UC) security for spatiotemporal MPC, capturing privacy, physical consistency, and composability.\\[-0.5em]

  We provide constructions achieving this new MPC notion. We first construct UC-secure commit-and-prove protocols for spatiotemporal knowledge: in the CRS model under LWE against quantum provers without pre-shared entanglement, and in the QROM against quantum provers with unbounded pre-shared entanglement. Using these protocols, we obtain general-purpose UC-secure spatiotemporal MPC from semi-honest post-quantum MPC. We also extend our framework to UC-secure spatiotemporal multiparty quantum computation over private quantum inputs and classical spatiotemporal information.

  \keywords{Secure Multi-Party Computation \and Position Verification \and Universal Composability \and Quantum Crytography \and Proof of Knowledge}
  
  \end{abstract}

	\pagenumbering{roman}
\setcounter{tocdepth}{4} 
\tableofcontents
\addcontentsline{toc}{section}{Table of Contents}
\clearpage

\pagenumbering{arabic}


\section{Introduction}

A central theme in modern cryptography is how to compute while preserving privacy. This theme is epitomized by \emph{Secure Multiparty Computation} (MPC)
\cite{FOCS:Yao82b,FOCS:Yao86,STOC:GolMicWig87}, which allows a set of mutually distrustful parties to jointly evaluate a function on their private inputs while revealing no information beyond what is implied by the output. In this sense, MPC offers an idealized form of private computation for essentially any efficiently computable functionality. In recent decades, advances in both the theoretical and practical
foundations of MPC have brought us closer to realizing this vision for
a broad range of \emph{conventional} computational tasks.

The scope of what we regard as ``computation,'' however, has expanded
together with advances in science and technology. Computation is no
longer confined to evaluating mathematical functions on explicitly
represented digital inputs. For example, \emph{position-verification} protocols~\cite{C:CGMO09} seek to
certify that a prover is located at a specified geographic position; \emph{physical zero-knowledge}
protocols~\cite{C:FisFreNao14} allow a prover to establish physical properties (such as the equality of neutron radiographs or properties of
a DNA profile) without revealing the physical information underlying
those properties; 
\emph{proof of quantumness}~\cite{FOCS:BCMVV18} allows an efficient classical
verifier to certify that a device
possesses genuinely quantum computational capabilities; \emph{verification of
quantum depth}~\cite{chia2022classical} allows a classical verifier to certify that a device
performed a computation having sufficiently large quantum circuit
depth. In each of these examples, the task of
interest depends not only on abstract data, but also on a physical
object or a physical property of a participating device. This broader perspective naturally raises the following
question:
\begin{quote}
    \centering
    \emph{Can the notion of MPC be extended to support the secure computation of physical functionalities?}
\end{quote}

\para{Spatiotemporal MPC.}
In this work, we focus on a particular class of physical functionalities, which we call \emph{spatiotemporal functionalities}. Informally, consider parties $P_1, \ldots, P_n$, where each $P_i$ holds an ordinary private input $x_i$ and is associated with spatiotemporal information $\ell_i$, such as its physical location at a specified time or, more generally, a trajectory over an interval of time. The parties wish to jointly evaluate a functionality
\(
    F\bigl((x_1,\ell_1), \ldots, (x_n,\ell_n)\bigr)
\)
without revealing their inputs or spatiotemporal information beyond what follows from the prescribed output.

A protocol for \emph{spatiotemporal MPC} should provide guarantees analogous to those of traditional MPC, together with an additional form of \emph{physical consistency}. In particular, the private inputs $x_i$ and spatiotemporal data $\ell_i$ should remain hidden, while a corrupted party should be unable to make the functionality operate on spatiotemporal information that is inconsistent with its physical reality. Also, we emphasize the \emph{general-purpose} nature of this objective. Our goal is not merely to design a specialized location-verification mechanism for each individual application. Rather, as in general-purpose MPC, we seek a framework capable of securely realizing arbitrary functionalities whose inputs may include physical locations, times, or trajectories.

We next give three motivating examples:

\begin{enumerate}
    
\item {\bf Privacy-preserving regional analytics.}
Consider a service provider (e.g., an LLM provider) that offers different subscription plans across geographic regions. The provider, payment processors, and regional partners may wish to compute statistics such as regional conversion rates, demand estimates, fraud rates, or candidate price tiers. In the notation above, the ordinary inputs $x_i$ include subscription attributes and payment records, while the spatiotemporal inputs $\ell_i$ describe the participating devices' locations at the relevant transaction times. Traditional MPC can protect the privacy of the ordinary inputs. However, merely supplying a geographic region as another ordinary input does not establish that it corresponds to the participant's actual location. Spatiotemporal MPC adds precisely this physical-consistency guarantee while preserving the input privacy of MPC: the computation uses locations consistent with the participants' physical behavior, yet hides both the ordinary records and the underlying spatiotemporal information beyond what follows from the aggregate output.

\item {\bf Anonymous voting with geography-dependent representation.}
    Geographic information can also determine how votes are aggregated or translated into representation. In the U.S.\ Electoral College, each state is allocated a number of electors equal to the size of its congressional delegation, and, with the exceptions of Maine and Nebraska, states use a winner-take-all rule for appointing electors.\footnote{\url{https://www.archives.gov/electoral-college/allocation}}
    Similarly, seats in the European Parliament are allocated among Member States according to \emph{degressive proportionality}: more populous states elect more representatives, while less populous states receive more seats per capita.\footnote{\url{https://www.europarl.europa.eu/news/en/faq/4/how-are-meps-elected}} These examples motivate voting functionalities in which the rule applied to a ballot depends on the voter's jurisdiction. Spatiotemporal MPC could allow voters to submit secret ballots while privately binding each ballot to the correct district or region. The protocol would then apply the appropriate aggregation or weighting rule without revealing voters' identities, precise addresses, or districts beyond what is implied by the final result. Such a mechanism could also be useful in shareholder voting, union governance, digital cooperatives, or cross-regional online communities that seek geographically structured representation while retaining ballot secrecy.

\item {\bf Collaborative augmented reality.}
Co-located augmented-reality (AR) systems integrate virtual objects into a shared physical environment in real time~\cite{azuma1997survey}. Collaborative AR additionally requires multiple users to observe and interact with a consistent collection of virtual objects~\cite{billinghurst2002collaborative}. Participating devices may wish to jointly compute object placement, scene alignment, occlusion, and interaction updates from ordinary inputs $x_i$, such as camera data, spatial maps, and user interactions, together with spatiotemporal inputs $\ell_i$ describing their physical locations and trajectories. Although traditional MPC can keep the supplied data private, it cannot by itself prevent a malicious device from manipulating the shared scene through a fictitious location or trajectory. Our spatiotemporal extension addresses this gap by binding the locations and trajectories used in these updates to the devices' actual physical behavior. Crucially, this binding does not come at the expense of privacy: both the ordinary data and the spatiotemporal information remain hidden except for what is revealed by the prescribed scene updates.

\end{enumerate}
These examples are only illustrative. Other potential applications include privacy-preserving proximity services, location-dependent access control, collaborative robotics, traffic and mobility analytics, disaster response, contact tracing, and the enforcement of geographic or temporal constraints on distributed computations. Rather than treating each such application in isolation, spatiotemporal MPC seeks a common cryptographic abstraction for computations whose correctness depends on spatiotemporal facts.

\para{Wait! Why Not Position Verification?}
A vigilant reader may observe that there is already a closely related notion known as \emph{position verification} (PV)~\cite{C:CGMO09,C:BCFGGO11,buhrman2014position}.
As mentioned earlier, a PV protocol allows a prover to convince a (collection of) verifier(s) that the prover is located at a designated position. However, PV alone does not suffice for our intended notion of spatiotemporal MPC. A standard PV protocol merely certifies  the prover's asserted position; it does not enable multiple parties to evaluate an arbitrary functionality of their respective inputs and physical states. Also, it does not provide the input-privacy guarantee required by general-purpose MPC: the asserted position is ordinarily part of the public statement, and the prover and verifiers do not have the additional private  inputs that appear in an MPC execution. 

 
Recent work of Girish et al.~\cite{girish2026private} introduces a zero-knowledge variant of position verification, allowing a prover to establish its location while hiding all information not implied by this fact. This represents an important step toward privacy-preserving proofs of spatiotemporal facts. Nevertheless, zero-knowledge PV remains a proof functionality rather than a general-purpose computation functionality. Moreover, the soundness guarantee in~\cite{girish2026private} is seemingly relaxed: a prover can convince the verifier that it is located within a specified (large) region without revealing its exact position within that region. While this suffices for certain applications, it does not support finer-grained spatiotemporal functionalities in which the prover must prove that it is at a specific point in space-time. 

In fact, this should not be treated as a drawback of \cite{girish2026private}; it is  due to an inherent distinction between hiding a position and computing on it. If the public statement identifies an exact point in space-time, then that point is necessarily revealed by the statement itself and cannot simultaneously be hidden by zero knowledge. Conversely, a statement asserting membership in a larger region may conceal the exact point, but does not \emph{by itself} enable a functionality whose output depends on that exact point. Spatiotemporal MPC must reconcile these requirements by allowing the functionality to use fine-grained spatiotemporal information internally while hiding it from the other parties, except insofar as it is revealed by the output. Thus, a substantive gap remains between existing notions of position verification and our desired notion of spatiotemporal MPC.

\para{Spatiotemporal Knowledge Extraction.}
The aforementioned gap directs attention to a critical difference between proof/argument systems and general-purpose MPC: the role of \emph{input extraction} (and \emph{arguments of knowledge}). The security of traditional MPC is typically formulated through the real/ideal simulation paradigm. In the ideal world, each party submits an input to a trusted functionality, which evaluates the desired function and returns the prescribed outputs. Security requires that every real-world adversary can be simulated by an ideal-world adversary, so that the real execution reveals no more and permits no greater influence than the ideal execution. For a corrupted party, the input supplied to the ideal functionality need not coincide with the input that the party was nominally instructed to use. A malicious party may deviate from the protocol and thereby use some other \emph{effective input}. The simulator must therefore determine (or \emph{extract}) the effective input on which the corrupted party's behavior is based, and submit that input to the ideal functionality. Without such an extracted input, there is no well-defined ideal-world computation against which the real execution can be compared.

For ordinary MPC, the extracted object is mathematical: a bit string, a field element, an $\NP$ witness, or some other explicitly represented value. Spatiotemporal MPC requires something conceptually different. A corrupted party's relevant input may include its spatiotemporal property. To formulate simulation security, we must explain what it means for a simulator to \emph{extract such a property} and how the extracted object is related to the party's physical reality.

To the best of our knowledge, no existing work provides the corresponding notion of \emph{spatiotemporal knowledge extraction}.\footnote{A concurrent work by Bartusek et al.~\cite{bartusek2026track} introduces a related definition. We provide more information in \Cref{sec:related-works}. }  This issue is not merely definitional. A party may move during the execution of a protocol, relay messages through agents at different locations, or behave in a manner that is not naturally associated with a single point in space-time. A satisfactory solution must determine when such behavior corresponds to a well-defined spatiotemporal input, what an extractor should output when it does not, and which physical resources or observations the extractor is permitted to use. These questions have no direct analogue when the target of extraction is an ordinary mathematical witness.

In summary, spatiotemporal MPC presents new conceptual and technical challenges, and calls for a formal treatment of conventional cryptographic guarantees in a physically-consistent manner.

\subsection{Our Results}

We initiate a systematic study of spatiotemporal MPC and develop a
collection of definitions and constructions that lay the foundation
for this new notion. We summarize our main results below.

\subsubsection{New Definitions: I Prove, Therefore I Am}
\label{sec:new-definitions}

\para{Arguments of Spatiotemporal Knowledge.}
As discussed above, the main conceptual gap between position
verification and spatiotemporal MPC is the absence of an appropriate
notion of \emph{extraction} for spatiotemporal information. As our first
conceptual contribution, we introduce a new paradigm for overcoming
this definitional obstacle.

Our approach is guided by the following philosophy: it is challenging
to define a physical property using mathematical logic. We first
examine successful examples of defining physical properties, such as
proofs of quantumness and position verification, and observe that they
all rely on the notion of a ``proof.'' This observation suggests a
simple reversal of perspective: why should we undertake the
challenging task of capturing a physical property using mathematical
logic, \emph{when the interactive protocol used to test that property can
itself serve as its definition}? In other words, we can directly use the prover's   \textbf{ability} to pass position verification      as the \textbf{definition} of its physical presence at the asserted position.\footnote{This is reminiscent of Ren\'e Descartes'
famous dictum ``\emph{cogito, ergo sum}'' (``I think, therefore I
am''), which arose from his reflections on how to establish the
existence of one's own mind. We therefore call our framework
``\emph{I prove, therefore I am}.''} Inspired by this perspective, we propose the following notion of an
\emph{argument of spatiotemporal knowledge}:

\begin{definition}[Informal]
\label{def:informal-zk-aostk}
A position-verification protocol
$\Pi=\langle P,V\rangle$ is an \emph{argument of spatiotemporal
knowledge} if there exist an efficient two-stage extractor
$\mathsf{Ext}=(\mathsf{Ext}_0,\mathsf{Ext}_1)$ and a sound
position-verification protocol $\Pi'=\langle P',V'\rangle$ with the
following property:
\begin{itemize}
    \item For every potentially malicious efficient prover $P^*$, if
    $P^*$ convinces $V$ with probability $\delta$, then
    $\mathsf{Ext}_0^{P^*}$ extracts a spatiotemporal point
    $\mathsf{st}^*$ together with auxiliary information
    $\mathsf{aux}$ such that
    $\mathsf{Ext}_1^{P^*}(\mathsf{st}^*,\mathsf{aux})$ can act as the
    prover in $\Pi'$ and convince $V'$ that it is located at
    $\mathsf{st}^*$ with probability at least
    $\poly(\delta)-\mathsf{negl}(\lambda)$.
\end{itemize}
\end{definition}

A few aspects of this definition merit further clarification.
First, observe that the second position-verification protocol $\Pi'$
is {existentially} quantified. Thus, satisfying
\Cref{def:informal-zk-aostk} does not require
$\mathsf{Ext}_1^{P^*}(\mathsf{st}^*,\mathsf{aux})$ to succeed in every
possible position-verification protocol. It suffices for it to succeed
in a single protocol $\Pi'$, provided that $\Pi'$ satisfies the
standard soundness requirement for position verification. This choice reflects the role of $\Pi'$ in the definition: it is used
only to certify the physical validity of the extracted spatiotemporal
information $\mathsf{st}^*$. If $\mathsf{st}^*$ were not $P^*$'s real location, then $\mathsf{Ext}_1^{P^*}(\mathsf{st}^*,\mathsf{aux})$
could not successfully complete \emph{any} sound position-verification protocol.
Taking the contrapositive, successful participation in \emph{some} sound protocol $\Pi'$
suffices to certify the validity of $\mathsf{st}^*$. This existential
formulation also provides greater flexibility in designing $\Pi$ and
$\mathsf{Ext}$, as their construction and security analysis may be
tailored to a particular sound protocol $\Pi'$.

Second, \Cref{def:informal-zk-aostk} subsumes standard
position-verification soundness in a straightforward manner. In a
standard position-verification protocol, the claimed position is
public. The extractor can therefore output this claimed position as
$\mathsf{st}^*$ and take $\Pi'$ to be the original protocol $\Pi$.
Indeed, this serves as a useful sanity check on the definition. The main strength of \Cref{def:informal-zk-aostk}, however, is that it
captures spatiotemporal soundness in settings where the relevant
position is hidden, such as zero-knowledge position verification
\cite{girish2026private} and the MPC setting developed in this work.
In these settings, there is no publicly specified position against
which the prover's behavior can be checked. Instead, the position is
implicitly determined by the behavior of the potentially malicious
prover. Our definition addresses this difficulty by requiring the
extractor to recover spatiotemporal information $\mathsf{st}^*$ from
the prover's behavior and by certifying the physical validity of
$\mathsf{st}^*$ through the auxiliary position-verification protocol
$\Pi'$.

Finally, we emphasize that \Cref{def:informal-zk-aostk} is deliberately
simplified and intended only to convey the central idea behind
``I prove, therefore I am.'' In particular, no formal definition in the main body
corresponds directly to \Cref{def:informal-zk-aostk}. Rather, its
operational interpretation of spatiotemporal extraction underlies
the formal definitions introduced in this work
(e.g., \Cref{def:ext-stc,def:uc-smpc}). These formal definitions must
address several additional subtleties, including the need to
strengthen the extraction guarantees to support simulation
against quantum provers. We defer discussion of these subtleties
to \Cref{sec:tech-overview}.

\para{Spatiotemporal MPC.}
The notion of extracting spatiotemporal knowledge provides the
foundation needed to define spatiotemporal MPC. We formulate our
definition in the universally composable (UC) framework
\cite{FOCS:Canetti01}, extending it with our notion of arguments of
spatiotemporal knowledge. This yields a definition of
\emph{UC-secure spatiotemporal MPC}. Informally, our definition captures what one would naturally expect: it allows
parties to jointly evaluate a spatiotemporal functionality over their
private inputs while hiding both their ordinary inputs and their
spatiotemporal information. At the same time, it prevents a malicious
party from supplying spatiotemporal information that is inconsistent
with its actual physical state. Finally, our framework inherits UC's general
composition guarantees: UC-secure spatiotemporal MPC protocols can be
used as modular components of larger UC (spatiotemporal) protocols while preserving their
security guarantees.

\subsubsection{New Constructions}

We obtain a collection of constructions realizing our notion of
UC-secure spatiotemporal MPC under standard assumptions.

\para{Commit-and-Prove of Spatiotemporal Knowledge.}
Our first step is to construct a special two-party functionality with
a commit-and-prove flavor. Roughly speaking, this functionality allows
a party to commit to both an ordinary private input and private
\emph{spatiotemporal} information, and subsequently prove statements about
the committed values without revealing them. In addition to the
standard binding and zero-knowledge guarantees of commit-and-prove,
the functionality guarantees that the committed spatiotemporal
information is consistent with the committing party's actual physical
state. In our UC spatiotemporal framework, this consistency is
captured by requiring the simulator to extract the committed
spatiotemporal information and certify its physical validity through
a sound spatiotemporal verification protocol. A protocol realizing this
functionality plays a central role in all of our subsequent
constructions. We realize it under standard assumptions.

\begin{theorem}[Informal]
    \label{thm:informal:commit-and-prove}
    UC-secure \emph{commit-and-prove of spatiotemporal
    knowledge} exists in each of the following settings:
    \begin{enumerate}
        \item In the common reference string (CRS) model, assuming the
        quantum hardness of Learning with Errors (QLWE), knowledge
        soundness holds against quantum-polynomial-time (QPT) provers
        without pre-shared entanglement.
        
        \item In the quantum random oracle model (QROM), knowledge
        soundness holds against QPT provers with unbounded pre-shared
        entanglement.
    \end{enumerate}
\end{theorem}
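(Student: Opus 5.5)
The plan is to build the commit-and-prove protocol by compiling a sound position-verification (PV) protocol $\Pi'$ into a hidden-position protocol, and then proving UC security by constructing a simulator. The prover commits to its private input $x$ and spatiotemporal point $\mathsf{st}$ using an extractable and equivocal commitment. It then runs the PV challenge--response phase with the verifiers in a form whose transcript depends on $\mathsf{st}$ only through encrypted or committed values. Finally, it gives a zero-knowledge argument that three things hold: the committed $\mathsf{st}$ is the one used, the responses are correct for $\mathsf{st}$, and the timing constraints implied by $\mathsf{st}$ are met. This last point needs care, because the verifiers must check timing without learning $\mathsf{st}$.

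To handle timing, I would have the verifiers release their challenges at all candidate times for a fine grid of positions, or have them encrypt per-position timing data. The prover then proves in zero knowledge that its answers arrived within the window prescribed for the committed point. The simulator then works as follows.
\begin{itemize}
\item For an honest prover, it equivocates the commitments and simulates the zero-knowledge proof.
\item For a corrupted prover, it extracts $(x,\mathsf{st})$ from the commitment. This is $\mathsf{Ext}_0$, and it uses the CRS trapdoor in setting 1 and the online extraction of the QROM in setting 2.
\end{itemize}
Physical validity of the extracted $\mathsf{st}$ is then certified by $\mathsf{Ext}_1$. This extractor runs the residual malicious prover in a fresh execution of $\Pi'$ for the public point $\mathsf{st}$. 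It relays $V'$'s challenges into the prover's view, translating them into the format the prover expects, and uses the soundness of the zero-knowledge argument to argue that the prover's responses are correct for $\mathsf{st}$ with probability $\poly(\delta)-\negl(\lambda)$.

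The two settings differ in which PV protocol and which extraction tools are available.
\begin{itemize}
\item \textbf{CRS model under QLWE, no pre-shared entanglement.} Here I would instantiate $\Pi'$ with a computationally sound PV protocol against provers without entanglement. I would use LWE-based dual-mode commitments or NIZKs and a post-quantum ZK argument of knowledge, so that extraction is straight-line and does not need rewinding.
\item \textbf{QROM, unbounded entanglement.} Here I would instantiate $\Pi'$ with the PV protocol that is secure in the random-oracle model against entangled attackers, in the style of Unruh's QROM position verification. Commitments would be random-oracle hashes, and extraction would go through Zhandry's compressed-oracle technique, which lets us read off queried preimages online.
\end{itemize}

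I expect the main obstacle to be $\mathsf{Ext}_1$, that is, proving that a prover who convinces the hidden-position verifier can be turned into a prover that passes $\Pi'$ at the extracted point. The difficulty is that extracting $\mathsf{st}$ must not disturb the quantum prover's state so much that its later success probability collapses.

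My first approach would be to make extraction straight-line, so there is no rewinding, and to arrange the protocol so that the commitment phase precedes the timed phase. Then $\mathsf{Ext}_1$ simply continues the same execution with the real $\Pi'$ verifiers; success is preserved up to the negligible distinguishing advantage of the trapdoor mode, and the bound degrades only to $\poly(\delta)$ when we condition on extraction succeeding.

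In the QROM setting, the additional hurdle is showing that compressed-oracle extraction commutes with the timed responses of entangled colluders. I would handle this with a measure-and-reprogram argument, which costs the polynomial loss permitted by the definition.
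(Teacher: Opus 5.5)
Your architecture matches the paper's: an extractable commitment placed before the timed phase, hidden position-verification responses, a ZK consistency proof afterwards, straight-line extraction, and an $\msf{Ext}_1$ that embeds the external verifier's challenges into the commit-phase PV slots. However, the step you flag as the main obstacle is missing its key ingredient.

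\textbf{Converting hidden responses for $\mcal{V}'$.} $\mcal{V}'$ must receive the \emph{plaintext} response before the PV deadline. Soundness of the ZK argument only certifies that the hidden response is correct; it does not produce it. Extracting just $(x,\msf{st})$ gives $\msf{Ext}_1$ no way to decode the prover's encrypted messages in real time, and your relay only describes the challenge direction. The paper solves this by having the committer put one-time-pad keys $\Set{\msf{sk}_{v,u}}$ for every mesh slot into the initial extractable commitment, alongside $\msf{st}$. Then $\msf{Ext}_0$ learns the keys before any timed message, and $\msf{Ext}_1$ decrypts each $c_{v,u}$ instantaneously and forwards it. ZK soundness then transfers commit-phase acceptance (probability $\delta$) to acceptance by $\mcal{V}'$. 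The alternative, as in the paper's overview, is to commit to each response with a straight-line extractable commitment and extract it online; you would need to state one of these.

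\textbf{Timing does not give hiding.} Arrival times at the towers are physical observations of the (possibly malicious) receiver, so no ZK statement can hide them. A single genuine response already reveals the location through its arrival time. The paper instead broadcasts one challenge per mesh time $t_v$ and has the prover send a ciphertext $c_{v,u}$ for \emph{every} pair $(t_v,L_u)$, each timed as if sent from $L_u$. Off the trajectory these are challenge-independent dummies $\msf{sk}_{v,u}\oplus\bot$, which can be sent in advance; this is why the mesh begins at $t_{\msf{min}}(\msf{T})-\Delta$. The paper also claims hiding only in the one-dimensional two-tower geometry, because of the directional-delivery attack.

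\textbf{The QROM obstacle is a different one.} Your consistency statement includes the validity of a random-oracle-based commitment, i.e., oracle evaluations. So it is not an \NP relation with an explicit circuit, and a generic post-quantum ZK argument cannot be run on it. The paper handles this by instantiating the commitment step and the consistency-proof step jointly with a black-box extractable commit-and-prove protocol adapted from \cite{C:CCLY22}, which uses the QROM extractable commitment only as a black box. No compressed-oracle or measure-and-reprogram argument about timed responses appears there, since extraction is performed once on the initial commitment, before the timed phase.
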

Here, pre-shared entanglement refers to entanglement distributed among
the spatially separated components of a malicious prover before the
protocol begins. Such entanglement constitutes a particularly powerful
resource in position-verification protocols: it allows a malicious prover's
components to coordinate their attacks across different locations and
poses a significant obstacle to standard extraction techniques. 


\para{Spatiotemporal MPC.}
We next turn to general-purpose functionalities. At a high level, we
use \Cref{thm:informal:commit-and-prove} in a similar way that
traditional commit-and-prove protocols are used in the GMW compiler
\cite{STOC:GolMicWig87,STOC:CLOS02}, upgrading a semi-honest MPC
protocol for traditional functionalities into a maliciously secure
MPC protocol for \emph{spatiotemporal functionalities}. The
spatiotemporal nature of our setting, however, introduces several
unique challenges. We discuss these challenges and our solutions in
\Cref{sec:tech-overview}. For now, we state our result as follows.
\begin{theorem}[Informal]
    \label{thm:informal:spatiotemporal-mpc}
    Assuming the existence of semi-honest post-quantum MPC and
    UC-secure commit-and-prove of spatiotemporal
    knowledge (as provided by
    \Cref{thm:informal:commit-and-prove}), UC-secure spatiotemporal MPC
    exists.
\end{theorem}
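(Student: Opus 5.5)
The plan is to follow the GMW/CLOS compilation paradigm, using the commit-and-prove functionality of \Cref{thm:informal:commit-and-prove} as a hybrid. Concretely, I would work in the $\mathcal{F}$-hybrid model, where $\mathcal{F}$ is the ideal commit-and-prove of spatiotemporal knowledge functionality. Given a spatiotemporal functionality $F$, define the ordinary functionality $\hat F$ that takes $(x_i,\ell_i)$ as digital inputs and outputs $F((x_1,\ell_1),\ldots,(x_n,\ell_n))$. Let $\pi$ be a semi-honest post-quantum MPC protocol for $\hat F$. The compiled protocol runs in three phases.

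\begin{enumerate}
\item \emph{Input commitment.} Each $P_i$ uses $\mathcal{F}$ to commit to its ordinary input $x_i$ and to its spatiotemporal information $\ell_i$. The functionality itself guarantees, via extraction and certification by a sound verification protocol, that $\ell_i$ is physically consistent.
\item \emph{Coin tossing.} Each $P_i$ receives committed random tapes $r_i$ via the standard augmented coin-tossing step: $P_i$ commits to $r_i'$, the others send random shares in the clear, and $r_i=r_i'\oplus\cdots$.
\item \emph{Execution.} The parties run $\pi$. Each message of $P_i$ comes with a proof through $\mathcal{F}$ that it equals the next-message function of $\pi$ applied to the committed $(x_i,\ell_i)$, the committed $r_i$, and the transcript so far.
\end{enumerate}

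Security would then be proved by UC composition: first prove the compiled protocol secure in the $\mathcal{F}$-hybrid model, then replace $\mathcal{F}$ by the protocol from \Cref{thm:informal:commit-and-prove}.

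The simulator $\mathcal{S}$ for corrupted parties plays $\mathcal{F}$ itself. On a corrupted $P_i$'s commit phase, it obtains $(x_i^*,\ell_i^*)$ directly from the hybrid functionality. The physical-validity certification of $\ell_i^*$ is inherited from $\mathcal{F}$'s definition: whatever extraction-and-certification guarantee \Cref{def:uc-smpc} demands should be implied by the one \Cref{def:ext-stc} gives for $\mathcal{F}$. $\mathcal{S}$ sends $(x_i^*,\ell_i^*)$ to the ideal spatiotemporal functionality and receives the output. It fixes the corrupted parties' random tapes through coin tossing and runs the semi-honest simulator of $\pi$ on $(x_i^*,\ell_i^*,r_i)$ and the output. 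Whenever a corrupted party sends a message whose consistency proof fails in $\mathcal{F}$, $\mathcal{S}$ aborts exactly as honest parties would. Because every accepted message is forced to be the honest next-message function, the corrupted parties behave semi-honestly with the extracted inputs. Indistinguishability then reduces, via a hybrid argument, to the post-quantum semi-honest security of $\pi$ against quantum environments. Adaptive or other corruption nuances would be handled as in CLOS.

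I expect the main obstacle to be the interface between spatiotemporal extraction and the timing of the protocol. The spatiotemporal value $\ell_i$ refers to the party's physical state at a specific time, and the certification in $\mathcal{F}$ may consume rounds or require the extracted prover to act at that location. I must make sure the ideal spatiotemporal MPC functionality's notion of ``valid input'' matches exactly what $\mathcal{F}$ certifies, and that the binding between the committed $\ell_i$ and later proofs survives composition with the (possibly entangled, multi-location) adversary. My first attempt would be to define the ideal spatiotemporal MPC functionality to obtain each $\ell_i$ through the same spatiotemporal sub-interface as $\mathcal{F}$, so that $\mathcal{S}$ simply forwards. I would then check that the certification step can be performed once, at commit time, independently of the later digital phases.
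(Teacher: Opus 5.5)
\paragraph{Gap: certifying the trajectory once, up front.} Your GMW skeleton is fine. The gap is the step you defer as the ``main obstacle'': certifying $\ell_i$ once at commit time, independently of the later digital phases, does not work.

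In this model each party traverses its trajectory \emph{during} the protocol. A physical commit phase can only authenticate observations that have already occurred, so the commit phase of the spatiotemporal CnP protocol must span the trajectory's observation window. A commitment made ``at the outset'' binds only a \emph{description} of an intended trajectory, which a malicious party can simply not follow. Treating $\mathcal{F}$'s commit as an atomic step in the hybrid model hides this.

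Suppose instead the physical commit for the whole trajectory runs concurrently with $\pi$ and is accepted only at the end. Then you get the failure the paper calls ``too late.'' Take a functionality that releases a secret only to a party meeting a location condition. A corrupted party commits to a fabricated trajectory $\mathsf{st}'$, all next-message proofs about the committed string pass, and it obtains the secret before the physical check rejects. To simulate that view, $\mathcal{S}_1$ must submit $(x_i^*,\mathsf{st}')$ with noticeable probability. Yet no $\mathtt{prv}$-mode strategy confined to $\mathsf{S}^*$ can pass a sound $\Pi'$ on $\mathsf{st}'\not\subseteq\mathsf{S}^*$, so the spatiotemporal verification condition of \Cref{def:uc-smpc} fails. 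Finishing the whole physical commit before any computation avoids this only by forcing the trajectory to end before computation starts. That abandons the setting the construction targets: trajectories covering $[t_0,t_L]$, with each message tied to the party's position when it is sent.

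\paragraph{The missing idea: round-by-round authentication with withholding.}
\begin{enumerate}
\item The paper fixes a description of the trajectory up front with an ordinary binding commitment $C_i=\mathsf{Com}(\mathsf{st}_i;r_i)$.
\item It splits each round into $I_{r,\mathsf{comp}}$ and $I_{r,\mathsf{proof}}$. During each interval a fresh TCnP commit phase authenticates the current segment $\mathsf{st}_i|_{I_r}$. The subsequent prove phase shows the segment agrees with the trajectory committed in $C_i$.
\item Round-$r$ messages, and any output depending on them, are withheld until those proofs are accepted.
\end{enumerate}
In the main body, the computation is a maliciously UC-secure protocol for $\mathcal{F}_{\mathsf{CC\text{-}MPC}}$. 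It receives $(x_i,\mathsf{st}_i,\mathsf{decom}_i)$ and checks the opening of $C_i$. Your GMW compilation of a semi-honest $\pi$ could play this role, as in the overview, provided it is interleaved this way.

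The simulator then takes the effective trajectory from the CC-MPC interface and the authenticated segments from the $\mathcal{F}^{\mathsf{ST}}_{\mathsf{CnP}}$ instances, and uses binding of $C_i$ to identify the two. It assembles the $\mathtt{prv}$-mode certificate for $\mathsf{st}_i^*$ from the per-segment ones.

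Finally, you cannot push physical validity into the target functionality's interface. $\mathcal{F}^{\mathsf{ST}}_{\mathsf{MPC}}$ only receives strings; physical meaning enters solely through the simulator's $\mathtt{prv}$ mode.
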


\para{Spatiotemporal MPQC.}
A promising approach to constructing position-verification
protocols (and consequently, spatiotemporal MPC protocols) is to
exploit the no-cloning property of quantum information
(e.g.,~\cite{C:BCFGGO11,beigi2011simplified,
kent2011quantum,tomamichel2013monogamy,C:Unruh14,
das2021practically,ITCS:LiuLiuQia22,girish2026private,
bartusek2026track}). Since the parties in this setting are already
granted quantum capabilities, it is natural to ask whether our notion
of spatiotemporal MPC can be extended to support \emph{quantum}
functionalities. More specifically, we consider a setting in which each party holds a
quantum input, and the parties wish to securely and jointly evaluate a
quantum circuit over their respective quantum inputs and their
respective classical\footnote{Although the parties possess quantum capabilities and may
hold quantum inputs, we model their spatiotemporal information
classically. Quantum-mechanical notions of position, relational
time, quantum clocks, and quantum reference frames have been studied e.g.,~\cite{PhysRevD.27.2885,
PhysRevX.7.031022,GiacominiCastroRuizBrukner2019,
CastroRuizEtAl2020}. However, to the best of our knowledge, these formalisms have not yet
yielded a canonical operational model of quantum spatiotemporal
information suitable for our cryptographic setting. We leave the
formulation of a cryptographically meaningful quantum analogue to
future work.} spatiotemporal information. We call this notion
\emph{spatiotemporal multi-party quantum computation}
(\emph{spatiotemporal MPQC}). We show that this is indeed possible
assuming UC-secure MPQC (for traditional quantum functionalities without spatiotemporal aspects).

\begin{theorem}[Informal]
    \label{thm:informal:spatiotemporal-mpqc}
    Assuming the existence of post-quantum one-way functions
    (PQ-OWFs), UC-secure MPQC, and UC-secure zero-knowledge
    commit-and-prove of spatiotemporal knowledge (as provided by
    \Cref{thm:informal:commit-and-prove}), UC-secure
    spatiotemporal MPQC exists.
\end{theorem}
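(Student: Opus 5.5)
The plan is to reduce spatiotemporal MPQC to ordinary UC-secure MPQC. Each party first binds its classical spatiotemporal information to a commitment using the commit-and-prove functionality of \Cref{thm:informal:commit-and-prove}. The parties then run a standard MPQC protocol for a modified quantum functionality that re-verifies these commitments inside the computation.

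\textbf{Construction.}
\begin{enumerate}
\item In a first phase, each $P_i$ samples randomness $r_i$ and invokes the zero-knowledge commit-and-prove functionality for spatiotemporal knowledge. It thereby commits to $\ell_i$, with the functionality certifying that $\ell_i$ is consistent with $P_i$'s physical state.
\item The commitment is computationally hiding and binding via a PQ-OWF-based commitment scheme (e.g., Naor), producing $c_i=\mathsf{Com}(\ell_i;r_i)$ that is broadcast.
\item The commit-and-prove functionality additionally certifies that $c_i$ opens to the same $\ell_i$ that was spatiotemporally certified.
\item In a second phase, the parties invoke UC-secure MPQC on the quantum functionality $G$. Party $i$ feeds in its quantum input $\rho_i$ together with the classical pair $(\ell_i,r_i)$.
\item $G$ first checks $c_i=\mathsf{Com}(\ell_i;r_i)$ for all $i$. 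If any check fails, it aborts, identifying the offending party. Otherwise it applies the target circuit $F$ to $((\rho_1,\ell_1),\ldots,(\rho_n,\ell_n))$.
\end{enumerate}

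\textbf{Simulation.} I would work in the hybrid model with ideal commit-and-prove and ideal MPQC, and invoke UC composition at the end.
\begin{itemize}
\item In the first phase, the simulator plays the commit-and-prove functionality for corrupted parties. It thereby obtains each corrupted party's extracted $\ell_i^*$ together with the physical-validity certification that the definition of spatiotemporal extraction provides, e.g., the $\mathsf{Ext}_1$-style certification through an auxiliary sound verification protocol.
\item For honest parties, it sends commitments to a dummy value such as $0$.
\item In the second phase, it plays ideal MPQC. It receives the corrupted parties' quantum inputs $\rho_i^*$ and classical inputs $(\ell_i',r_i')$.
\item If a corrupted party's pair fails the commitment check, the simulator causes the appropriate abort. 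Otherwise binding gives $\ell_i'=\ell_i^*$, and the simulator forwards $(\rho_i^*,\ell_i^*)$ to the ideal spatiotemporal MPQC functionality and returns the output.
\end{itemize}

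\textbf{Hybrid argument.}
\begin{enumerate}
\item Replace real commit-and-prove and real MPQC with their ideal versions, using their UC security and the composition theorem.
\item Switch honest parties' commitments to dummy commitments. This uses the hiding property against QPT distinguishers; the honest openings are never revealed because they are consumed only inside the ideal MPQC.
\item Argue that any mismatch $\ell_i'\neq\ell_i^*$ with a passing check contradicts post-quantum binding.
\end{enumerate}
Note that $r_i$ is needed so that hiding holds even for low-entropy locations.

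\textbf{Main obstacle.} The step I expect to be subtle is the consistency between the commitment value the commit-and-prove extractor outputs and what binding guarantees. With a plain Naor commitment the simulator can only extract through the ideal commit-and-prove functionality, not from $c_i$ itself. I would resolve this by letting the commit-and-prove functionality itself generate and output the (equivocable/extractable) commitment. Equivalently, $c_i$ is the functionality's handle and $r_i$ the opening it returns to $P_i$, so extraction of $\ell_i^*$ and binding are both inherited from \Cref{thm:informal:commit-and-prove}.

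A second subtlety is that the certified location must refer to the time of the computation. I would handle this by having $F$ interpret $\ell_i$ as the spatiotemporal point certified in the first phase, matching the formal functionality definition.
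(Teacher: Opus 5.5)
Your proposal is correct at the paper's level of rigor, and its core coincides with the paper's generic construction (\Cref{sec:overview-mpqc}). In both, an ordinary PQ-OWF commitment $c_i$ is the only classical link, a modified functionality re-checks the opening of $c_i$ inside the black-box MPQC, and binding identifies the trajectory extracted at the commit-and-prove interface with the one supplied at the MPQC interface. Your explicit hiding hybrid for honest commitments is actually spelled out more carefully than in the paper's proof.

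The real structural difference is scheduling. You run one commit-and-prove instance over the whole trajectory window and only then start MPQC, i.e., authenticate first, then compute. The paper instead interleaves a fresh commit-and-prove instance with every MPQC round: it authenticates $\mathsf{st}_i|_{I_r}$ during round $r$ and withholds every dependent step and output until that round's consistency proofs are accepted.

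Your ordering buys simplicity:
\begin{itemize}
\item the ``too late'' problem disappears trivially, since every proof is accepted before the first MPQC message;
\item a single commit-and-prove instance suffices, with no timing coordination with the MPQC rounds;
\item it still meets \Cref{def:uc-smpc}, because the trajectory window lies inside the protocol execution and the composed simulator's $\mathtt{prv}$ mode is just that of the commit-and-prove simulator.
\end{itemize}
The paper's interleaving buys the stronger semantics its protocols are designed for. The computation itself spans $[t_{\mathsf{min}}(\mathsf{T}),t_{\mathsf{max}}(\mathsf{T})]$, so the certified trajectory describes where a party is while it sends its messages. In your resolution of the second subtlety, it describes only where the party was before computing began, and computation must wait for the whole window to close.

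Your ``main obstacle'' is not actually an obstacle, and the proposed fix should be dropped. Your own simulation already resolves it. The simulator never extracts from $c_i$: it obtains $\ell_i^*$ from the commit-and-prove interface and $(\ell_i',r_i')$ from the MPQC interface, and binding of the public record $c_i$ forces $\ell_i'=\ell_i^*$. This is exactly the paper's argument for why an ordinary, even malleable, commitment suffices.

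Having the commit-and-prove functionality generate the commitment would also cause two problems:
\begin{itemize}
\item it assumes a stronger primitive than the hypothesis provides, since \Cref{F-ST-CnP} outputs no commitment string;
\item if $c_i$ were only an ideal handle, $G$ could not check $c_i=\mathsf{Com}(\ell_i;r_i)$ inside a black-box MPQC.
\end{itemize}
The one extension you and the paper both genuinely need is that the prove phase accept the opening $r_i$ as a witness. The condition ``$c_i$ opens to $\ell_i$'' is an \NP predicate, not a public polynomial-time predicate of the trajectory alone.
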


It follows from~\cite{C:BCKM21b} that UC-secure MPQC can, in turn, be
constructed from UC-secure post-quantum MPC.\footnote{Technically, the
construction also requires quantum garbled circuits.
However,~\cite{STOC:BraYue22} shows how to construct quantum garbled
circuits assuming only PQ-OWFs, which are themselves implied by the
existence of post-quantum MPC.}
We therefore obtain the following corollary.

\begin{corollary}
    \label{cor:informal:spatiotemporal-mpqc}
    Assuming the existence of UC-secure post-quantum MPC and
    UC-secure commit-and-prove of spatiotemporal
    knowledge, UC-secure spatiotemporal MPQC exists.
\end{corollary}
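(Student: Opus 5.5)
This corollary is obtained by composing \Cref{thm:informal:spatiotemporal-mpqc} with known results, so the plan is to check that each hypothesis of that theorem follows from the two assumptions stated here. \Cref{thm:informal:spatiotemporal-mpqc} requires three ingredients: (i) PQ-OWFs, (ii) UC-secure MPQC for ordinary quantum functionalities, and (iii) UC-secure zero-knowledge commit-and-prove of spatiotemporal knowledge. Ingredient (iii) is assumed directly. If the commit-and-prove functionality in the corollary's hypothesis is read without the adjective ``zero-knowledge,'' I would note that the commit-and-prove functionality described before \Cref{thm:informal:commit-and-prove} already hides the committed values. Its ideal functionality reveals only the truth of the proven statements, so the zero-knowledge variant is the same object.

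For (i), I would argue that UC-secure post-quantum MPC implies post-quantum oblivious transfer, since OT is a special case of a two-party functionality. Post-quantum OT in turn implies PQ-OWFs by the standard argument that key agreement, and hence OT, implies one-way functions, which relativizes to quantum adversaries. Concretely: if the transcript-to-randomness map were invertible by a QPT inverter, a corrupted receiver could invert the sender's view and break sender privacy. This step is routine, though some care is needed because UC-secure post-quantum MPC is typically realized in a setup model such as CRS. The one-wayness argument therefore has to be run relative to honestly sampled setup, and it still yields a OWF family in the plain sense.

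For (ii), I would invoke \cite{C:BCKM21b}, which builds UC-secure MPQC from UC-secure post-quantum MPC (or OT) together with quantum garbled circuits. Quantum garbled circuits are obtained from PQ-OWFs by \cite{STOC:BraYue22}, and PQ-OWFs are available by step (i). Plugging (i), (ii), and (iii) into \Cref{thm:informal:spatiotemporal-mpqc} gives UC-secure spatiotemporal MPQC.

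The main obstacle I expect is model compatibility rather than any new argument. The MPQC of \cite{C:BCKM21b} and the commit-and-prove protocol must live in the same setup model (e.g., CRS or QROM) and the same adversarial model (QPT, with or without pre-shared entanglement). Their composition must also be justified by the UC composition theorem of the spatiotemporal framework. My first attempt would be to observe that the ordinary UC-secure MPQC has no spatiotemporal component, so it remains secure against the spatially distributed adversaries of our framework. The reason is that such an adversary can be collapsed into a single QPT adversary for the purpose of the purely digital subprotocol, after which the composition theorem applies.
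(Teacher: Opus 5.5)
Your proposal is correct and follows the paper's route: the paper obtains the corollary from \Cref{thm:informal:spatiotemporal-mpqc} by getting UC-secure MPQC from UC-secure post-quantum MPC via \cite{C:BCKM21b}, with quantum garbled circuits from PQ-OWFs via \cite{STOC:BraYue22}, and with PQ-OWFs themselves implied by post-quantum MPC. Your added detail (OT $\Rightarrow$ OWF relative to an honestly sampled setup, identifying the zero-knowledge and plain commit-and-prove variants, and the setup/adversary-model compatibility check for composition) is more than the paper spells out, since it only asserts these facts, but it does not change the argument.
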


\para{Stand-Alone Security without CRS.} Note that a CRS can be generated via a quantum-secure coin-flipping protocol. Several such protocols are known: constant-round protocols from QLWE plus QFHE~\cite{STOC:BitShm20,EC:ABGKM21,FOCS:LiaPanYam23}; polynomial-round protocols from semi-honest OT~\cite{C:CLPY25} or PQ-OWFs plus quantum communication~\cite{C:BCKM21b,EC:GLSV21}; and constant-round protocols from semi-honest OT with relaxed simulation quality\footnote{There is strong evidence that, without such relaxation, constant-round constructions are unlikely to exist~\cite{FOCS:CCLY21,EC:CCLL26}, unless one resorts to non-black-box simulation techniques, which, in the current landscape, require stronger assumptions than minimal ones such as semi-honest OT or PQ-OWF.}~\cite{C:CCLY22,FOCS:LomMaSpo22,C:CLPY25}. We can instantiate the CRS in the first construction in \Cref{thm:informal:commit-and-prove} using any of these protocols to obtain a ``plain-model'' (i.e., without any setup such as a CRS) ZK commit-and-prove protocol for spatiotemporal knowledge, at the cost of downgrading UC security to stand-alone security.

Similarly, we can obtain plain-model versions (with stand-alone security) of \Cref{thm:informal:spatiotemporal-mpc}, \Cref{thm:informal:spatiotemporal-mpqc}, and \Cref{cor:informal:spatiotemporal-mpqc}  by first implementing the UC-secure protocols in the CRS model and then replacing the CRS with one of these coin-flipping protocols. For the first step, we recall that UC-secure PQ-MPC follows from QLWE in constant rounds~\cite{C:PeiVaiWat08,EC:GarSri18a}; the same holds for PQ-MPQC~\cite{C:BCKM21b}.

\subsection{Extensions and Future Directions}
\label{sec:extensions-and-future-directions}

Our work opens several avenues for future research, some of which we
highlight below.

\para{Optimized Constructions.}
The constructions presented herein should be viewed as initial
feasibility results, leaving substantial room for optimization, as was
also the case in the development of traditional MPC. One natural goal
is to obtain \emph{black-box} constructions---that is, constructions
that access their underlying building blocks only through their input/output
interfaces---thereby offering greater modularity and more transparent
efficiency guarantees. We believe that such constructions may be
obtained by combining our ideas with the \emph{MPC-in-the-head}
techniques~\cite{STOC:IKOS07,C:CCLY22}.

Another important direction concerns round complexity. Our current
constructions are not optimized in this respect, partly because they
invoke multiple position-verification instances spanning the MPC execution to ensure that
each party remains at its authenticated (hidden) position throughout the
computation. For two- or three-round protocols, however, the relevant
spatiotemporal interval could potentially be narrowed to the moment at
which a party sends its response in the 2PC/MPC protocol. This direction
is closely related to the traditional model of \emph{non-interactive secure
computation} (NISC)~\cite{EC:IKOPS11}. A formal treatment would require
a careful analysis of the concurrent/parallel security of our
constructions when composed with quantum NISC protocols, such
as~\cite{TCC:Bartusek21,AnanthBhardwajGupte26}.

\para{Deployment Potential on NISQ Devices.} 
Recent theoretical and experimental progress suggests that quantum position verification (QPV) is
approaching implementation on near-term quantum hardware
\cite{Bluhm2022,kanneworff2025towards,kavuri2026quantum}. Also, MPC has
undergone extensive optimization, resulting in increasingly practical
 implementations 
\cite{USENIX:MNPS04,ESORICS:BogLauWil08,AC:PSSW09,C:DPSZ12,CCS:Keller20,USENIX:PKYDPH21,USENIX:DalEscKel21,USENIX:WatWagPop22,SP:CZOWBZ23}.
Our constructions require  little quantum capability beyond
that needed to execute the underlying QPV protocol. Consequently, after
suitable cryptographic and systems-level optimization, they may become
candidates for implementation using NISQ-era components. Such an implementation would initiate a promising line of work at the
intersection of QPV and pragmatic MPC. In
particular, it could lead to the first experimental demonstration of
spatiotemporal MPC and illustrate a concrete application of near-term
quantum technology.

\para{Integration with Different PV Protocols.} The constructions
underlying \Cref{thm:informal:commit-and-prove} use a PV protocol based
on the no-cloning property of quantum information. However, our framework is more general and can be instantiated with any PV protocol
satisfying the following ``nice'' property, adapted from
\cite[Definition~2.8]{girish2026private}. Roughly, we call a PV
protocol \emph{nice} if the verifiers' joint verdict
can be computed by a classical predicate applied to the classical
portion of the interaction transcript, together with the relevant
timing information and verifier randomness. In particular, evaluating
this predicate does not require access to any residual quantum state
held by the verifiers.

This observation may allow our framework to be instantiated with PV
protocols based on other assumptions, including hardware
models~\cite{gabber1998prove}, bounded-retrieval
models~\cite{C:CGMO09}, and bounded-quantum-storage
models~\cite{Bluhm2022}. This would yield analogues of
\Cref{thm:informal:commit-and-prove,thm:informal:spatiotemporal-mpc}
in a variety of settings, with security inherited from the adversarial
model of the underlying PV protocol. For example, instantiating our
framework with a PV protocol secure against
adversaries with bounded quantum memory would yield a spatiotemporal
MPC protocol secure against the corresponding class of adversaries.
We leave a formal treatment of these instantiations to future work.

Worth particular mention is a new notion of PV proposed by 
Bartusek et al.~\cite{bartusek2026track}, referred to therein as
\emph{quantum localization}. Their protocols belong to the class of
quantum-based PV protocols, but achieve a stronger notion of soundness.
Whereas conventional soundness guarantees only that at least one member
of any successful adversarial coalition is present at the claimed
location, {quantum localization} certifies that a designated
unclonable quantum state exists at the claimed spacetime point and
cannot be found elsewhere. This rules out distributed strategies in
which different adversaries collectively emulate a \emph{single} prover across
multiple locations or times.

It would be appealing to integrate quantum localization into our
framework. Unfortunately, the verification procedure
of~\cite{bartusek2026track} relies on access to a structured
classical oracle in the ideal-obfuscation model. Such oracle
access does not directly provide an explicit polynomial-size
verification circuit: the resulting oracle-dependent acceptance
predicate is not, as given, a standard $\NP$ relation.
This presents an obstacle because our construction uses
zero-knowledge proofs for $\NP$ relations to establish that
the underlying verification procedure accepts, without
revealing the private information used in that verification.
These proofs therefore cannot be applied directly to the
oracle-dependent acceptance predicate.
Integrating the construction of~\cite{bartusek2026track}
requires addressing this oracle dependence while preserving
its stronger localization soundness and the privacy guarantees
required by our framework.
This is closely related to the open direction
in~\cite[Section~1.6]{bartusek2026track} of constructing
trajectory commitments and zero-knowledge localization.
If the oracle-dependent verification can be made compatible
with our zero-knowledge machinery, while retaining the
remaining properties required by our construction, this
would provide a route to spatiotemporal MPC with the \cite{bartusek2026track}-flavor localization
guarantee.

\para{Extension to General Physical Functionalities.}
Although this work focuses on spatiotemporal functionalities, it is
natural to ask whether our framework can accommodate more general
physical properties. To some extent, the framework extends to
properties admitting suitable verification protocols. Consider, for
example, proof of quantumness. Following the approach of
\Cref{def:informal-zk-aostk}, one could formulate an ``argument of
quantumness knowledge'' and, subsequently, MPC functionalities whose
inputs include a verified claim of quantumness.\footnote{This extension
may be less straightforward than it first appears. In particular, one
must carefully restrict the extractor's computational power and its
 access interface to $P^*$. We believe that potential issues can be addressed
by starting from a classical-verifier proof-of-quantumness protocol,
but a detailed treatment would take us beyond the scope of the present
work.}

Nevertheless, some physical properties  appear to lie beyond the reach
of our current framework. The difficulty stems from a subtle aspect of
the extractor in \Cref{def:informal-zk-aostk}: intuitively, the
extractor must obtain some form of access to the adversarial prover
$P^*$; however, combining the extractor with $P^*$, may sometimes blur the
boundary of the physical resource being certified and thereby yield
vacuous attacks on the underlying verification protocol.

To illustrate, suppose that a proof-of-quantum-depth protocol
guarantees that any successful prover must possess a quantum circuit
of depth at least $d$. A naive definition of an ``argument of
quantum-depth knowledge'' might give the extractor $\mathsf{Ext}_1$
black-box access to $P^*$. This immediately creates a vacuous
strategy: if $P^*$ implements a depth-$d$ quantum computation, then
$\mathsf{Ext}_1$ may invoke $P^*$ repeatedly and sequentially,
effectively assembling a quantum computation of arbitrarily large
polynomial depth. The resulting extracted resource would therefore
reflect the combined power of the extractor and the prover, rather
than the physical capability possessed by the prover itself. This
problem becomes even more pronounced for MPC functionalities involving
several physical properties simultaneously, because the mechanisms
used to extract one property may inadvertently supply resources
relevant to another.

It remains unclear whether this difficulty is inherent in any attempt
to define MPC for general physical functionalities or is merely a
limitation of the framework developed in this work. Resolving this
question is a very appealing, albeit ambitious, direction for future
research.

\subsection{Related Works}
\label{sec:related-works}

Chandran et al.~\cite{C:CGMO09} introduced the notion of
\emph{position-based} MPC, which supports functionalities that depend on the
parties' \emph{public} locations. Their notion differs fundamentally from our
spatiotemporal MPC: it does not
hide the parties' locations. In contrast, our notion guarantees the privacy of
the parties' spatiotemporal information while simultaneously ensuring
its physical authenticity. Achieving authenticity without revealing
this spatiotemporal information is, in fact, the central challenge of our
work.

The concurrent and independent work of Bartusek et al.~\cite{bartusek2026track} also
employs extraction in defining their new notion of \emph{quantum
localization}. This similarity, however, is only superficial: the two
works pursue fundamentally different goals and operate in distinct
settings. In their framework, extraction is used to \emph{localize a
quantum object}. Specifically, the verifier retains one share of an
entangled state, while the prover holds the other; possession of the
prover's share serves as an unclonable certificate of identity.
Security requires the existence of an extractor that acts only on the
prover's state at the claimed spacetime point and recovers a register
entangled with the verifier's share. Extraction thereby establishes
that a particular quantum object---and, consequently, the entity
identified by it---was present at the claimed spacetime point.
Repeatedly applying this guarantee makes it possible to track the
object's movement and thereby recover its trajectory.

In contrast, our extractor does not recover or track an identity-bearing
object located at a designated spacetime point. Rather, it plays the
standard simulation-based role of identifying the effective inputs
with which a malicious party participates in a computation, including
its claimed spatiotemporal information. Our security definition then
requires this extracted spatiotemporal information to be consistent
with the party's actual physical state. Thus, extraction in their work
is itself the mechanism by which the location of a persistent quantum
object is defined, whereas extraction in our work connects an
adversarial protocol execution to a well-defined ideal computation
whose spatiotemporal inputs must be physically authentic. Moreover,
their focus is on verifying the location or trajectory of an
unclonable quantum object, while ours is on securely computing general
functionalities over private ordinary and spatiotemporal inputs.

\section{Technical Overview}
\label{sec:tech-overview}

Our construction separates three requirements that are intertwined in
spatiotemporal computation: extracting a party's effective input,
certifying the physical meaning of that input, and using it consistently
throughout a private computation. Ordinary MPC addresses the first and
third requirements, but treats every input as an abstract string.
Position verification addresses the second, but ordinarily makes the
position public and does not provide the extraction interface needed
for simulation. Our approach connects these two forms of security
through a simulation-compatible notion of ``argument of spatiotemporal knowledge.''

We first explain this notion and its integration into the UC framework (\Cref{sec:overview-framework}).
We then construct a commit-and-prove of spatiotemporal knowledge protocol (\Cref{sec:overview-cnp}), use it
to authenticate private inputs during classical computation (\Cref{sec:overview-mpc}), and
finally give a generic extension to quantum functionalities (\Cref{sec:overview-mpqc}).
For exposition, we primarily consider two parties; the same
ideas apply to the multiparty setting.

\subsection{Simulation with Physically Meaningful Inputs}
\label{sec:overview-framework}

Suppose that party $P_b$ has private input
$(x_b,\mathsf{st}_b)$, where initially
$\mathsf{st}_b=(L_b,T_b)$ denotes a single spacetime point.
Once positions and times are represented by finite strings, evaluating    $F\bigl((x_0,\mathsf{st}_0),(x_1,\mathsf{st}_1)\bigr)$
is an ordinary computational task. The difficulty is that an ordinary
ideal functionality cannot distinguish an authentic position from a
fabricated one. Consequently, securely computing this function on
self-reported strings does not realize spatiotemporal computation.

The same distinction appears in simulation. An ordinary simulator may
extract a corrupted party's effective input
$(x_b^*,\mathsf{st}_b^*)$, but recovering the string
$\mathsf{st}_b^*$ does not establish that the adversary occupied the
corresponding point. We need extraction to certify not only
\emph{which value the adversary used}, but also
\emph{why that value has a valid physical interpretation}.

\para{Operational Certification of Extracted Information.}
Our starting point is the ``I prove, therefore I am'' principle from
\Cref{def:informal-zk-aostk}. Rather than require the extractor to
produce a mathematical witness of physical presence, we require it to
produce a strategy that succeeds in a sound position-verification
protocol for the extracted claim.

At a schematic level, an extractor $\msf{Ext}_0$ produces
$    (\mathsf{st}^*,\rho)
    \leftarrow
    \mathsf{Ext}_0^{P^*}(1^\lambda)$, 
and a second algorithm $\msf{Ext}_1$ uses $\rho$ and the prescribed access to $P^*$
to participate in a sound position verification protocol $\Pi'$ on public input
$\mathsf{st}^*$. Successful verification supplies the operational
meaning of the extracted claim. The auxiliary protocol $\Pi'$ is
existentially quantified: one sound test suffices to certify the claim;
success in every conceivable position-verification protocol is
unnecessary.

An essential restriction concerns the extractor's physical resources.
The proving strategy must respect the causal and spatial restrictions
of the adversarial model. In particular, extraction must not introduce
a new agent at the claimed position or provide communication or
entanglement unavailable to the adversary. Otherwise, successful
verification would certify the extractor's resources rather than the
adversary's physical presence.

\para{Simulatable Extraction.}
The operational certification in \Cref{def:informal-zk-aostk}
explains why an extracted claim has physical meaning, but does not
specify how the adversary can continue participating in the
surrounding computation after extraction. This is an additional
requirement: extracting information may disturb the adversary's
quantum state and thereby change its subsequent
behavior~\cite{EC:GLSV21,C:BCKM21b,C:CCLY22}. A strategy that succeeds
in a separate physical verification experiment does not, by itself,
resolve this problem.

We therefore strengthen operational certification with
\emph{simulation while extracting}. The first stage produces an
extracted claim and a residual state, and the second stage has two
alternative modes:
\[
\begin{aligned}
    \mathsf{Ext}_0(1^\lambda)
        &\longrightarrow (\mathsf{st}^*,\rho),\\
    \mathsf{Ext}_1(\mathtt{sim}, \mathsf{st}^*, \rho)
        &\longrightarrow
        \text{information for a simulated continuation,} \\
    \mathsf{Ext}_1(\mathtt{prv}, \mathsf{st}^*, \rho)
        &\longrightarrow
        \text{a prover for }\Pi'(\mathsf{st}^*).
\end{aligned}
\]
The $\mathtt{sim}$-mode output of $\msf{Ext}_1$ enables a continuation
that is computationally indistinguishable from the real execution,
while maintaining the appropriate correlations between the
extracted information, the adversary's residual state, and the surrounding
environment; the $\mathtt{prv}$ mode supplies the
physical certification. These roles are captured by two distinct
requirements. 

More concretely, consider a malicious party $P^*$ that first
commits to spatiotemporal information and then jointly evaluates a
predicate on it with an honest party $V$, who ultimately outputs
a decision bit. Let $v_{\mathsf{real}}$ denote the committed value
in the real commit-stage execution, or $\bot$ if there is no
unique valid committed value, and let $\mathsf{ST}_{P^*}$ denote
the adversary's residual state at the end of that stage. Our definition requires the following two properties hold simultaneously:
\begin{enumerate}
\item \textbf{Simulation.}
Running $\mathsf{Ext}_0$ followed by $\mathsf{Ext}_1$ in
$\mathtt{sim}$ mode, which outputs a simulated ``post-extraction'' state $\tilde{\msf{ST}}_{P^*}$ for the malicious $P^*$, as well as a simulated commit-stage
acceptance bit $\tilde{b}$ of the receiver. These outputs must satisfy:
\[
    \bigl(
        \Gamma_{\tilde{b}}(\mathsf{st}^*),
        \widetilde{\mathsf{ST}}_{P^*}
    \bigr)
    \cind
    \bigl(v_{\mathsf{real}},\mathsf{ST}_{P^*}\bigr),
    \qquad
    \text{where}~~
    \Gamma_{\tilde{b}}(\mathsf{st}^*) \coloneqq
    \begin{cases}
        \mathsf{st}^* & \text{if } \tilde{b}=1,\\
        \bot & \text{if } \tilde{b}=0.
    \end{cases}
\]
Crucially, the comparison concerns the
\emph{joint distribution of the value and residual state}, not
merely their separate distributions. Thus, the extractor must
both recover the committed information and produce a simulated
state with the appropriate correlations with that information.
This is what allows extraction to be used within a larger
computation without detectably changing the adversary's
continuation.

\item \textbf{Physical certification.}
If $P^*$ convinces $V$ to accept with probability
$\delta>\kappa(\lambda)$, where $\kappa$ is the knowledge error,
then the same first-stage extractor must also support a
successful physical proving strategy:
\begin{equation}\label{eq:overview-extraction-certification}
    \Pr\!\left[
        \left\langle
            \mathsf{Ext}_1(
                \mathtt{prv}, \mathsf{st}^*, \rho
            ),
            V'
        \right\rangle
        (1^\lambda,\mathsf{st}^*)=\msf{accept}
        \;:\;
        (\mathsf{st}^*,\rho)
        \leftarrow \mathsf{Ext}_0(1^\lambda)
    \right]
    \geq \poly(\delta).
\end{equation}
The proving strategy must obey the physical resource restrictions
described above.
\end{enumerate}
The two requirements are evaluated in separate experiments,
each starting with a fresh execution of $\mathsf{Ext}_0$.
They do not require copying $\rho$ or using it for both purposes
in one execution. Only $\mathtt{sim}$ mode is used to continue
the simulated computation; $\mathtt{prv}$ mode is an alternative
experiment certifying the physical meaning of the extracted
claim. Thus, simulation mode supplies the additional guarantee
needed beyond the operational intuition of
\Cref{def:informal-zk-aostk}, rather than another physical test.

\para{Integration into UC.}
The UC framework compares a real execution with an ideal execution
in the presence of an interactive environment
$\mathcal{Z}$~\cite{FOCS:Canetti01,STOC:CLOS02}.
Crucially, $\mathcal{Z}$ can exchange information with the adversary
during the execution, rather than only inspect a final transcript.
The simulator must therefore support an ongoing interaction with
an external machine whose state it cannot reset. This essentially rules out
rewinding-based simulation 
and motivates an online, straight-line interface to the environment.

Our spatiotemporal extension retains this interactive simulation
requirement, with quantum machines and quantum auxiliary states.
The simulator is organized as
$\mathcal{S}=(\mathcal{S}_0,\mathcal{S}_1)$.
During the execution, $\mathcal{S}_0$ extracts the corrupted
parties' effective ordinary and spatiotemporal inputs (encoded as a classical string) and
submits them through the ideal functionality's prescribed
interface. As with $\mathsf{Ext}_1$ above, $\mathcal{S}_1$
has two alternative modes. Its $\mathtt{sim}$ mode continues
the ideal execution so that the ordinary UC simulation requirement is satisfied:
$
    \mathsf{REAL}_{\Pi,\mathcal{A},\mathcal{Z}}
    \approx_c
    \mathsf{IDEAL}_{\mathcal{F},\mathcal{S},\mathcal{Z}}
$;
 its $\mathtt{prv}$ mode certifies the extracted
spatiotemporal claims in the spirit of
\Cref{eq:overview-extraction-certification}.

Unlike the simple predicate-evaluation setting above,
a general-purpose secure computation need not have a designated verifier
or an acceptance event defining $\delta$. We therefore
use the simulator's submission of an extracted input
as the relevant event: for each spatiotemporal string
$\mathsf{st}^*$, let $\delta_{\mathsf{st}^*}$ denote the
probability that $\mathcal{S}_0$ extracts $\mathsf{st}^*$
and the $\mathtt{sim}$-mode $\mcal{S}_1$ submits it to the ideal functionality. We will also utilize a knowledge error $\kappa(\secpar)$; whenever $\delta_{\mathsf{st}^*}>\kappa(\secpar)$, we require
$\mathcal{S}_1$ in proving mode to convince the verifier
of $\Pi'(\mathsf{st}^*)$ with probability at least
$\poly(\delta_{\mathsf{st}^*})$ in the corresponding
certification experiment. As above, the two modes are
evaluated in separate experiments, each starting with
a fresh execution of $\mathcal{S}_0$.
Only the simulation mode participates in the real/ideal
comparison; the proving mode supplies the additional
physical guarantee. Thus, physical certification supplements
ordinary simulation security rather than replacing it.

\para{On Composition.} In the UC framework, 
the environment's ability to interact with the adversary is 
what makes modular composition possible: a subprotocol's simulator
must remain valid when the surrounding protocol acts as its
environment. Our extension preserves this interface. Consequently,
the indistinguishability part of a composition proof follows the
usual UC replacement argument.

The physical part requires an additional step. The composed simulator
must retain the extracted claims and their associated proving
strategies, and the resulting strategies must remain admissible in
the physical adversarial model. In particular, replacing a
cryptographic subroutine must not introduce a forbidden
communication path or additional pre-shared entanglement into the
position-verification reduction. Under these compatibility
conditions, the same simulator composition carries the physical
certificate through the replacement. This is why merely observing that our definition
is stronger than UC would not, by itself, prove composition.
For functionalities with no spatiotemporal inputs, the additional
requirement is vacuous, recovering ordinary quantum-UC security.

\para{From points to trajectories.}
The ideal functionality may equally receive a finite ordered sequence
\[
    \mathsf{st}_b
    =
    \bigl((L_{b,1},T_{b,1}),\ldots,(L_{b,m},T_{b,m})\bigr).
\]
The extraction stage now recovers this sequence, while the proving
mode participates in a \emph{spatiotemporal trajectory verification}
protocol, as formalized in \Cref{def:STV}. 
Using a trajectory-level experiment avoids silently identifying
separate point-wise success probabilities with successful verification
of the entire trajectory. Our trajectories specify authenticated observations, not continuous
monitoring between observations. Moreover, for a distributed
adversary, authenticity means that the claimed points were occupied
by components of the coalition. In contrast to \cite{bartusek2026track}, it does not assert that a single
persistent device traversed them all (see
\Cref{rem:malicious-committer-trajectory}).

\subsection{Commit-and-Prove of Spatiotemporal Knowledge}
\label{sec:overview-cnp}

We next show how to construct a commit-and-prove protocol for
spatiotemporal predicates. Unlike a commitment to a single spacetime
point, our primitive records a private \emph{trajectory} and supports
subsequent proofs about its properties. The central challenge is to
make this physical record compatible with ideal/real simulation:
the simulator must extract a corrupted prover's trajectory, certify
its physical meaning, and preserve the prover's subsequent interaction
with the environment.

\para{On the Definition.}
This primitive is a special case of the UC spatiotemporal 2PC
definition introduced in \Cref{sec:overview-framework}. Its ideal
functionality $\mathcal{F}_{\mathsf{CnP}}^{\mathsf{ST}}$ has two
phases. In the \emph{Commit Phase}, it records a private trajectory
$\mathsf{st}$ under a commitment identifier and notifies the verifier
that commitment is complete. In the \emph{Prove Phase}, it receives
a public predicate $f$ and informs the verifier whether the recorded
trajectory satisfies $f(\mathsf{st})=1$, without revealing the
trajectory itself. Multiple commitments may accumulate over time,
and a later predicate may refer to several previously recorded
trajectories.

We require security of the commit phase itself, not merely of the
combined commit-and-prove execution. Essentially, this phase is an \emph{extractable-and-equivocal commitment}. The prove phase
 is a zero-knowledge argument of knowledge, linking the
requested predicate to the same committed trajectory.
Additionally, the extractability provided by the commit phase ensures that the
extracted
trajectory must satisfy our physical-certification requirement:
the extractor's proving mode must yield a successful strategy for
a sound trajectory-verification protocol.

A basic causality constraint determines the organization of the
construction. One may commit in advance to a \emph{description}
of an intended trajectory, but this does not certify that the
trajectory will actually be traversed. Physical authentication
must concern observations that have already occurred by the time
the protocol completes.

To see the distinction, consider a purported position-verification
protocol that terminates at time $t_{\mathsf{end}}$ and certifies
that the prover will occupy position $Y$ at a later time
$t_{\mathsf{future}}>t_{\mathsf{end}}$.
A malicious prover can always follow the honest strategy until
$t_{\mathsf{end}}$ and subsequently arrange to occupy
$Y'\neq Y$ at $t_{\mathsf{future}}$.
The verifier's completed view is unchanged, contradicting soundness.
Thus, a protocol cannot authenticate a freely chosen future
position merely by authenticating a present intention.
 Consequently, committing to an authenticated trajectory requires
a physical interaction spanning its observation window.

\para{Meshing the Observation Window.}
Let $\mathsf{T}$ be a publicly specified family of admissible
trajectories, represented over a polynomial-size spacetime mesh.
Partition the observation window into small intervals with
designated observation times
\[
    T_1<\cdots<T_m,
\]
and let $\mathsf{L}_i$ be the finite set of candidate locations at
time $T_i$. A trajectory has the form
\[
    \mathsf{st}=\bigl((L_i,T_i)\bigr)_{i=1}^{m},
    \qquad L_i\in\mathsf{L}_i.
\]
We schedule a position-verification instance for each candidate
point in
\[
    S=\bigcup_{i=1}^{m}
        \bigl(\mathsf{L}_i\times\{T_i\}\bigr).
\]
The commitment interval includes the transmission and processing
margins needed for these instances. The mesh specifies the
observations being authenticated.

\para{Position Verification under Commitments.}
For simplicity, consider a two-message position-verification protocol
$\Pi_{\mathsf{PV}}$ with a classical challenge $q$ and a classical
response $a$ (e.g., \cite{ITCS:LiuLiuQia22}). Let $\mathsf{Com}$ be a post-quantum commitment that is both extractable and equivocal (such a commitment is known, e.g., in the CRS model assuming QLWE \cite{C:DFLSS09}).
At each mesh point $\alpha\in S$,
the verifier sends the prescribed challenge $q_\alpha$.
Instead of returning $a_\alpha$ in the clear, the prover returns
a commitment $c_\alpha=       \mathsf{Com}(a_\alpha;r_\alpha)$. Notice that hiding the response alone is not enough: the pattern of responses
must not disclose the trajectory. Similar to \cite{girish2026private}, we use a public
traffic schedule covering every candidate point. At points on
its trajectory, the prover commits to genuine responses; elsewhere,
it commits to a dummy symbol $\bot$, with the same
message lengths and prescribed reception pattern.
Dummy commitments are independent of the challenges and can be
prepared in advance. Their transmissions are scheduled to meet
the candidate instance's reception times, whereas genuine
commitments depend on challenges received during the physical
interaction.

To fix which trajectory subsequent proofs refer to, the prover
also sends an extractable commitment
    $C=\mathsf{Com}(\mathsf{st};r)$ 
at initialization. This commitment fixes the claimed trajectory;
the commitments $\{c_\alpha\}_{\alpha\in S}$ subsequently supply
its physical authentication. Separating these roles prevents
the prover from selecting different subtrajectories in different
prove phases.\footnote{Technically, we also need to run a zero-knowledge argument to prove the consistency of the trajectory commitment $C$ with the response commitments $\{c_\alpha\}_{\alpha\in S}$; this is crucial to make the commit phase itself a well-formed extractable commitment. But we omit this discussion from the overview for simplicity.}

\para{Proof of Consistency.}
Let $\tau$ denote the (classical) transcript of the commit-phase
interaction. For a public polynomial-time predicate $f$, the prover gives a
zero-knowledge argument of knowledge for a relation $R_f$ with
public input $(C,\tau,\mathsf{T},f)$. The witness consists of a
trajectory $\mathsf{st}$, randomness opening $C$ to $\mathsf{st}$,
and, for every point $\alpha$ on that trajectory, a response
$a_\alpha$ and randomness opening $c_\alpha$ to $a_\alpha$.
The relation checks that:
\begin{enumerate}
    \item $\mathsf{st}$ belongs to the publicly specified family
    $\mathsf{T}$ of admissible trajectories, and the supplied
    opening of $C$ to $\mathsf{st}$ is valid;
    \item for every point $\alpha$ on $\mathsf{st}$, the supplied
    opening of $c_\alpha$ to $a_\alpha$ is valid, and $a_\alpha$
    satisfies the underlying position-verification predicate
    for that instance, using the verifier data and timing
    information recorded in $\tau$; and
    \item the committed trajectory satisfies $f(\mathsf{st})=1$.
\end{enumerate}
Thus, the same witness links the trajectory fixed by $C$ to
the responses fixed during the physical interaction and to the
requested predicate. Provided that the underlying verification
checks are classically computable in polynomial time, $R_f$ is
a classical $\NP$ relation, even if generating valid
position-verification responses requires quantum computation.
Zero knowledge hides the trajectory, the responses, and their
openings.

\para{Extraction with Physical Meaning.}
The commitment $C$ allows the simulator to extract
the a trajectory $\mathsf{st}^*$.
Extraction from the response commitments fixes the corresponding
responses. Binding and the prove-phase consistency argument ensure that any
accepting proof refers to these same values.
The remaining task is to connect these extracted
strings to physical reality. 

Toward that, we use the underlying $\Pi_{\mathsf{PV}}$ as the auxiliary
verification protocol. In the security reduction, the simulator
interacts with an external position verifier and forwards its
fresh challenge to the malicious prover as the challenge of the
corresponding commitment instance. When the prover returns
$c_\alpha$, the simulator extracts $a_\alpha$ and forwards it
to the external verifier.  That is, the reduction  places the external challenges
inside the adversary's actual interaction and converts committed
responses into ordinary responses online. The consistency relation transfers acceptance to the underlying
verification predicate, while position-verification soundness
supplies the physical interpretation.

\para{Obtaining the UC Realization.}
We instantiate the commitment and ZKAoK components of this construction with their
UC counterparts. This yields a realization of
$\mathcal{F}_{\mathsf{CnP}}^{\mathsf{ST}}$ satisfying the UC
spatiotemporal 2PC definition from the preceding subsection.
In particular, the construction supports concurrent commitment
sessions and later proofs about previously authenticated
trajectory segments.

\para{Extractable Spatiotemporal Trajectory Commitments.} 
In \Cref{sec:spatiotemporal-trajectory-commitment}, we also abstract the commitment to trajectories, together with its
physical-meaningfulness check, as an \emph{extractable spatiotemporal
trajectory commitment}. Its interface exposes a classical
commitment record and a classical opening predicate, together
with extraction that both preserves a simulated continuation and
supports physical certification of the extracted trajectory.
We believe that this abstraction provides a simple, modular
building block for future applications requiring private,
physically authenticated trajectory data.

\subsection{From Commit-and-Prove to Spatiotemporal MPC}
\label{sec:overview-mpc}

Our construction starts from a semi-honest MPC protocol and
follows the GMW paradigm: each party commits to its inputs and
randomness at the outset, and proves that each outgoing message
is consistent with these commitments and the prescribed
computation~\cite{STOC:GolMicWig87,STOC:CLOS02}. To evaluate a
spatiotemporal functionality, each party additionally commits
to a classical string describing its trajectory, which the MPC
uses together with the parties' other inputs. The key additional
requirement is that this string must describe the party's
\emph{physical trajectory}, rather than merely a trajectory that
it claims to follow.

Ordinary consistency proofs do not establish this physical
requirement: a party could execute the prescribed computation
correctly using a fabricated trajectory. Our spatiotemporal
commit-and-prove protocol supplies the missing authentication.
Its commit phase records trajectory information during the
execution, and its prove phase establishes consistency between
that physically authenticated information and the committed
trajectory string. The challenge is to integrate this physical
authentication with the computational consistency
checks.

\para{Round-by-Round Authentication and Consistency.} A naive attempt  is to commit to trajectories in parallel
with the GMW-style MPC execution; at the end, all parties perform the proof stage of the spatiotemporal commit-and-prove protocol to establish the proofs of physical
consistency through the computation. This checks the
right condition, but \emph{too late}! Consider a functionality
that releases a secret only to a party satisfying a
location-dependent condition. Even if every MPC message is
consistent with the prescribed computation, a party could
supply a fabricated trajectory and obtain the secret before
its physical consistency is checked. Rejecting its final proof
would not undo that disclosure. 

We therefore divide the execution into rounds, with a designated
physical observation interval $I_r$ for each round $r$.
During this interval, each party $P_b$ runs the commit phase
of our spatiotemporal commit-and-prove protocol for the
trajectory segment
$\mathsf{st}_{b,r}=\mathsf{st}_b|_{I_r}$, concurrently with the
corresponding MPC round. The prove phase then enforces two
conditions jointly:
\begin{itemize}
    \item \emph{Computational consistency:} the party's
    round-$r$ messages follow the semi-honest MPC protocol,
    using its committed inputs, randomness, and trajectory
    string, together with the preceding accepted transcript.
    \item \emph{Physical consistency:} the trajectory segment
    authenticated during $I_r$ agrees with the corresponding
    segment of the trajectory string committed to at the outset.
\end{itemize}
In this way, the proof stage both enforces honest behavior in
the underlying semi-honest MPC execution and certifies the
physical information on which that behavior depends.

The protocol advances to the next round only after all required
proofs for the current round have been accepted. Until then,
honest parties treat the round's incoming messages as pending
and take no dependent computational step. Crucially, merely
delaying the processing of incoming messages is not sufficient:
any outgoing message or output whose disclosure relies on the
current trajectory segment must itself be withheld until the
required physical consistency proofs have been accepted.
The final round is subject to the same rule. This ordering
prevents an unauthenticated trajectory segment from authorizing
a disclosure that a later rejection cannot undo. This  yields
\Cref{thm:informal:spatiotemporal-mpc}.

\subsection{Lifting the Construction to Quantum Functionalities}
\label{sec:overview-mpqc} 

We finally allow the parties' computational inputs and outputs to be
quantum, while retaining classical descriptions of their trajectories. First, notice that the ideas developed in \Cref{sec:overview-mpc} does not directly extend, because they require running a consistency (spatiotemporal) proof to check polynomial-time relations
between the committed trajectory segments and the party's round-$r$ computation for each round $r$; this works well for classical computations. However, one round of a \emph{quantum} MPC protocol may involve exchanging registers whose
correctness may require asserting that a particular transmitted
state was obtained by applying a prescribed quantum operation to
another state. Such an assertion is not an ordinary $\mathsf{NP}$ statement about
a classical transcript.
Nor does simply replacing $\mathsf{NP}$ proofs with proofs for
$\mathsf{QMA}$ solve the interface problem: standard $\mathsf{QMA}$
statements are classical, even though their witnesses may be
quantum. 
We therefore avoid using spatiotemporal proofs to certify
quantum message generation.

\para{A First Route: Authenticate the Classical Encoding of a
Quantum Computation.} The garbling-based approach of~\cite{C:BCKM21a} suggests a way to
separate the classical and quantum tasks. It uses the quantum
garbled circuit scheme of~\cite{STOC:BraYue22}, in which garbling
a quantum circuit consists of a \emph{classical} procedure that produces an
entirely classical garbled program: the quantum information
resides in the encoded inputs, rather than in the garbled
program itself. Classical secure computation can therefore
generate the garbled program and its classical control
information, while quantum input encoding and evaluation
take place outside that computation. Authentication and consistency mechanisms are developed
(thanks to \cite{C:BCKM21a}) to link these components.

For our purposes, the parties could supply their trajectories as
additional private inputs to this classical computation, which
programs the target quantum functionality using those values.
To avoid revealing a trajectory through the circuit description,
one can express the target using a fixed public universal circuit
with the trajectories supplied as private classical data.

We then apply a round-by-round compilation similar to \Cref{sec:overview-mpc}: recall that the protocol of~\cite{C:BCKM21a}
first runs a post-quantum classical MPC to generate the
classical garbled circuit and then performs additional
quantum rounds for input encoding and evaluation. As
explained above, the trajectory values are committed in
the classical stage and fixed as private data programming
the garbled quantum computation. Using the same spacetime
mesh and round-by-round authentication strategy, we run
our commit-and-prove protocol in parallel with the entire
execution, covering both the classical and quantum stages.
At each round boundary, the consistency proofs establish
that the trajectory segment authenticated during that
round agrees with the corresponding portion of the
trajectory already fixed in the garbled computation;
no dependent step proceeds until these proofs are accepted.
Thus, the checks during the later quantum rounds remain
tied to the same trajectory values ``hard-wired'' into the (garbled) target quantum circuit, yielding a protocol-specific route
to spatiotemporal MPQC.

Although this approach works, it lacks modularity: the consistency argument is tied to the ``classical-garbling'' computation of the \cite{C:BCKM21a} construction. Ideally, we would like to have a generic construction that can be applied to any underlying quantum protocol, allowing us to use the underlying quantum protocol as a black box.

\para{A Generic Approach.}
Our eventual construction requires only a 
UC-secure MPQC protocol and the spatiotemporal commit-and-prove
we build in \Cref{sec:overview-cnp}.
The  central observation is that the interface between these
components is entirely classical: both must use the same trajectory.

Each party first commits to its trajectory using an ordinary
post-quantum  binding
commitment, such as Naor's commitment instantiated with a
post-quantum pseudorandom generator.
Write $C_b = \mathsf{Com}(\mathsf{st}_b;r_b)$,
where $C_b$ denotes the complete public commitment record,
including any receiver message required by the scheme. Let $\mathcal{Q}$ be the desired spatiotemporal quantum
functionality, and let $\mathbf{C}=\Set{C_b}_b$ be the (public)
collection of the parties' trajectory commitments.
Instead of evaluating $\mathcal{Q}$ directly, the parties
invoke the MPQC protocol for a modified functionality
$\mathcal{Q}^{\mathbf{C}}_{\mathsf{check}}$.
It receives each party's quantum input register together
with $(\mathsf{st}_b,r_b)$, verifies
$
    \mathsf{Verify}_{\mathsf{Com}}
        (C_b,\mathsf{st}_b,r_b)=1
$ for each $b$, 
and aborts if any check fails. Otherwise, it applies
$\mathcal{Q}$ to the joint quantum input and the supplied
trajectories. The joint input may be entangled across
parties and with an external reference system; no
product-state assumption is needed.

In parallel with the MPQC execution, the parties run
the commit-and-prove protocol using a similar round-by-round
construction as in  \Cref{sec:overview-mpc}.
For each round, the commit phase authenticates the
corresponding trajectory segment, and the prove phase
establishes that this segment agrees with the corresponding
portion of the trajectory committed in $C_b$.
These are the same commitments whose openings are checked
by $\mathcal{Q}^{\mathbf{C}}_{\mathsf{check}}$, linking the
authenticated segments to the trajectories supplied to
the quantum computation. As before, no dependent step
proceeds until the required consistency proofs are accepted;
in particular, all trajectory segments on which an output
depends must be authenticated before that output is released.

The division of responsibility is now explicit.
The MPQC protocol guarantees correct quantum computation but does
not interpret physical presence.
The commit-and-prove protocol certifies physical presence but does
not verify quantum computation.
The ordinary commitment forces both components to refer to the
same classical trajectory.

\para{Returning to Classical Functionalities.}
The generic construction above treats the underlying MPQC
protocol as a black box. In particular, replacing the MPQC
protocol with a maliciously secure MPC protocol for classical
functionalities yields secure spatiotemporal MPC for classical
functionalities. This provides an alternative route to the
feasibility results described in \Cref{sec:overview-mpc}.
In the main body, we present only this generic construction,
which handles both classical and quantum functionalities in a unified manner.
We nevertheless retain the description in \Cref{sec:overview-mpc}
to provide a complementary approach, which  may admit/ease further
optimizations in future work, such as black-box constructions
based on black-box commit-and-prove techniques.

\para{Why an Ordinary Commitment Can Suffice.}
The linking commitment $C_b$ is not used as a separately UC-realized
commitment functionality.
In particular, the simulator need not extract a trajectory from
$C_b$ alone or open an honest commitment to a newly selected value. Instead, extraction occurs at the stronger interfaces already
provided by the construction.
In the MPQC hybrid, the simulator obtains the corrupted party's
effective classical trajectory and opening.
In the ``physical-relevant'' hybrid, it obtains the trajectory authenticated by
the commit-and-prove instance.
Both must be consistent with the same public record $C_b$.
Statistical binding therefore identifies the two values, while
computational hiding protects honest trajectories.
This is the reason that the linking commitment need not itself
be UC secure.

It is also worth emphasizing that an ordinary commitment may be \emph{malleable} \cite{STOC:DolDwoNao91}, and UC security of the surrounding protocols does not by itself provide non-malleability. This poses no problem for our construction: even if an adversary copies or transforms an honest party's commitment $C_b$, using the resulting record requires it to provide a valid opening within the quantum computation and to physically certify the same trajectory through its own commit-and-prove interface. The UC simulations for these components capture such behavior even across interleaved sessions with related public inputs, while binding ensures that the computational and physical components cannot accept different trajectories for the same record. A related trajectory that the adversary can both open and physically authenticate is simply an admissible effective input, whereas a copied record without a usable opening is insufficient. Thus, our security argument does not rule out commitment malleation per se; rather, it shows that malleation cannot bypass the interfaces through which the committed value must be supplied and physically certified. These interfaces incorporate session and party identifiers to prevent acceptance in one context from being reused as authorization in another.

The resulting transformation treats the underlying MPQC protocol
as a computational component, rather than inspecting its quantum
messages or reproducing its internal authentication mechanisms.
Under the stated composition and corruption assumptions, it yields
\Cref{thm:informal:spatiotemporal-mpqc}.


\section{Preliminaries}

\para{Basic Notations.} Let $\secpar \in \Naturals$ denote security parameter. 
For a positive integer $n$, let $[n]$ denote the set $\{1,2,...,n\}$.
For a finite set $\calX$, $x\sample \calX$ means that $x$ is uniformly chosen from $\calX$.

A function $f:\mathbb{N}\ra [0,1]$ is said to be \emph{negligible} if for all polynomial $p$ and sufficiently large $\secpar \in \mathbb{N}$, we have $f(\secpar)< 1/p(\secpar)$; it is said to be \emph{overwhelming} if $1-f$ is negligible, and said to be \emph{noticeable} if there is a polynomial $p$ such that $f(\secpar)\geq  1/p(\secpar)$ for sufficiently large $\secpar\in \mathbb{N}$.
We denote by $\poly$ an unspecified polynomial and by $\negl$ an unspecified negligible function.

Unless stated otherwise, all adversaries considered herein are modeled as a non-uniform QPT  algorithm, specified by a sequence of polynomial-size quantum circuits with quantum advice $\{\A_\secpar, \rho_\secpar\}_{\secpar\in\mathbb{N}}$. 
In an execution with the security parameter $\secpar$, $\A$ runs $\A_{\secpar}$ taking $\rho_\secpar$ as the advice.For simplicity, we often omit the index  $\secpar$ and just write $\A(\rho)$ to mean a non-uniform QPT algorithm specified by  $\{\A_\secpar, \rho_\secpar\}_{\secpar\in \mathbb{N}}$. When it is clear from the context, we also omit the advice $\rho$ and just write $\A$ to mean a non-uniform QPT algorithm.

\subsection{Post-Quantum Extractable and Commitment scheme}

We define post-quantum extractable commitment schemes with simulation. Our definition is taken almost verbatim from~\cite{FOCS:CCLY21}, with only cosmetic changes to fit our notation. 

\begin{definition}[Post-Quantum Commitment Scheme~\cite{FOCS:CCLY21}] A post-quantum commitment scheme $\Pi$ is a protocol between two probabilistic polynomial-time (PPT) machines: a committer $\mathcal{C}$ and a receiver $\mathcal{R}$. Let $m \in \bits^{\ell(\lambda)}$ (where $\lambda$ is the security parameter and $\ell(\cdot)$ is some polynomial) be a message that $\mathcal{C}$ wants to commit to. The protocol consists of the following stages: 

\begin{itemize}
   \item \textbf{Commit Stage:} $\mathcal{C}(m)$ and $\mathcal{R}$ interact with each other to generate a commitment $\mathsf{com}$, $\mathcal{C}$'s state $\mathsf{ST}_C$ and $\mathcal{R}$'s output state $\mathsf{OUT}_{R}$. We denote this execution by $(\mathsf{com},\textup{ST}_c,\mathsf{OUT}_{R}) \leftarrow \langle \mathcal{C}(m),\mathcal{R}\rangle(1^\lambda)$. When we consider a malicious quantum commiter $\mcal{C}^*$, we allow it to generate any quantum state $\mathsf{ST}_C^*$. Also, a malicious quantum receiver $\mathcal{R}^*$ can output any quantum state, which we denote by $\mathsf{OUT}_{R^*}$.

   \item \textbf{Decommit Stage:} $\mathcal{C}$ generates a decommitment $\mathsf{decom} \leftarrow \mcal{S}(\mathsf{ST}_\mcal{C})$. Then it sends a message $m$ and decommitment $\mathsf{decom}$ to $\mathcal{R}$, and $\mathcal{R}$ outputs a bit $b\in \{0,1\}$ indicating acceptance $(i.e.,b = 1)$ or rejection $(i.e.,b = 0)$. We assume that $\mathcal{R}$'s verification procedure is deterministic and denote it by $\mathsf{Verify}(\mathsf{com},m,\mathsf{decom})$. 
\end{itemize}
The scheme satisfies the following requirement:

\begin{itemize} 
  \item \textbf{Correctness.} For any $m \in \bits^{\ell(\lambda)}$, it holds that:
  \[\Pr \left[
  \begin{array}{cc}
       b = 1
  \end{array}
  :
  \begin{array}{l}
 
        (\mathsf{com}, \textup{ST}_C, \mathsf{OUT}_{R}) \leftarrow \langle \mathcal{C}(m),\mathcal{R}\rangle(1^\lambda) \\
        \mathsf{decom} \leftarrow \mathsf{ST}_C \\
        b\leftarrow \mathsf{Verify}(\mathsf{com},m,\mathsf{decom})
  \end{array}
  \right]
  =1\]

  \item \textbf{Computational Hiding.} For any $m_0,m_1 \in \{0, 1\}^{\ell(\lambda)}$ and any non-uniform QPT receiver $\mathcal{R}^*$, the following holds:
  \begin{align*}
    &\{\mathsf{OUT}_{R^*}: (\mathsf{com}, \textup{ST}_C, \mathsf{OUT}_{R^*}) \leftarrow \langle \mathcal{C}(m_0),\mathcal{R}^*\rangle(1^\lambda)\}_{\lambda}\\ 
    \overset{c}{\approx} ~
    &\{\mathsf{OUT}_{R^*}:(\mathsf{com}, \textup{ST}_C, \mathsf{OUT}_{R^*}) \leftarrow \langle \mathcal{C}(m_1),\mathcal{R}^*\rangle(1^\lambda)\}_{\lambda}.
  \end{align*}

  \item \textbf{Statistical binding.} For all (potentially unbounded) committer $\mathcal{C}^*$, the following holds:
  \[
  \Pr \left[ 
  \begin{array}{c} 
  \exists m_0, m_1, \mathsf{decom_0},\mathsf{decom_1},
  \\s.t.\ m_0 \neq m_1 \wedge  \\ 
  \mathsf{Verify}(\mathsf{com},m_0,\mathsf{decom_0})=1 \wedge \\
  \mathsf{Verify}(\mathsf{com},m_1,\mathsf{decom_1})=1 
  \end{array} 
  : 
  \begin{array}{l} 
  (\mathsf{com}, \textup{ST}_{C^*}, \mathsf{OUT}_{R}) \leftarrow \langle \mathcal{C}^*,\mathcal{R}\rangle(1^\lambda)
  \end{array}
  \right]
  =\negl(\lambda)
  \]

\end{itemize}
\end{definition}

We define the value function as follows:
\begin{definition}[Value Function]\label{def:value-function}
  For a post-quantum commitment scheme $\Pi$ satisfying \Cref{def:extcom}, we define the value function as follows:
  \[
  \mathsf{val}_\mathrm{\Pi}(\mathsf{com}) := 
  \begin{cases} 
    m & \textit{if } \exists \textit{ unique } m \textit{ s.t. } \exists \ \mathsf{decom}, \mathsf{Verify}(\mathsf{com}, m, \mathsf{decom}) = 1 \\ 
    \perp & \textit{otherwise} 
  \end{cases} \]
  We say that $\mathsf{com}$ is valid if $\mathsf{val}_\mathrm{\Pi}(\mathsf{com})\neq \perp$ and invalid if $\mathsf{val}_\mathrm{\Pi}(\mathsf{com})= \perp$.

\end{definition}

Then we define the post-quantum extractable commitment  with simulation as follows:
\begin{definition}[Post-Quantum Extractable Commitment with Simulation]\label{def:extcom}
  A post-quantum commitment scheme $\Pi$ is extractable with simulation if for any non-uniform QPT $\mathcal{C}^*$, there exists a QPT algorithm $\mcal{SE}$ such that 
  $$ \{ \mcal{SE}(1^\lambda) \}_{\lambda} \overset{c}{\approx} 
  \{(\mathsf{val}_\Pi(\mathsf{com}), \msf{ST}_{C^*})~:~(\mathsf{com}, \textup{ST}_{C^*}) \leftarrow \langle C^*, R \rangle(1^\lambda) \}_{\lambda}$$ 
\end{definition}

\para{Instantiation.} Constructions satisfying \Cref{def:extcom} are known in the following settings: in the CRS model based on QLWE \cite{C:DFLSS09}, and in the QROM without any additional assumptions \cite{EC:ABKK23}.

\subsection{Post-Quantum Zero-Knowledge Arguments} 

We define post-quantum zero-knowledge arguments.

\begin{definition}[Post-Quantum Zero-Knowledge Arguments]\label{def:zero-knowledge}$\Pi$ is an interactive protocol between two probabilistic polynomial-time (PPT) machines: a prover $P$ and a verifier $V$. After the interaction, $V$ outputs a bit $b \in \{0,1\}$, where $b=1$ denotes acceptance and $b=0$ denotes rejection. We use $b \leftarrow \langle P,V\rangle(1^\lambda,x)$ to denote the verifier's output after the interaction between $P$ and $V$ on common input $x$ and security parameter $\lambda$.

We say that $\Pi$ is a \emph{post-quantum  zero-knowledge argument} for a  language $L \in \NP$ if the following properties hold:
\begin{enumerate}
    \item \textbf{Completeness.} For all $x\in L$ and all $w\in \mcal{R}_L(x)$, 
    \[ \Pr[b = 1 ~:~ b \gets \langle P(w),V\rangle(1^\lambda,x)]\geq 1-\negl(\lambda).\]

    \item \textbf{Soundness.} For all $x\notin L$ and all non-uniform QPT prover $P^*$,
    \[ \Pr[b = 1 ~:~ b \gets \langle P^*,V\rangle(1^\lambda,x)]\leq \negl(\lambda).\]

    \item \textbf{Zero Knowledge.} For a malicious non-uniform QPT verifier $V^*$, we denote its interaction with the honest prover as $\msf{OUT}_{V^*} \gets \langle P(w), V^*\rangle(1^\lambda,x)$ where $\msf{OUT}_{V^*}$ is the output of the verifier $V^*$. There exists a QPT simulator $\mathsf{Sim}$ such that for all $x\in L$ and all $w\in \mcal{R}_L(x)$
    \[\{\mathsf{Sim}(1^\lambda,x)\}_{\lambda\in\mathbb{N}} \overset{c}{\approx}\{\msf{OUT}_{V^*}:\msf{OUT}_{V^*}\gets\langle P(w), V^*\rangle(1^\lambda,x)\}_{\lambda\in\mathbb{N}}.\]
\end{enumerate}
    
\end{definition}

\para{Instantiation.}
In the plain model, constructions satisfying \Cref{def:zero-knowledge} are known in polynomial rounds from QLWE~\cite{STOC:Watrous06}, and in constant rounds from QLWE together with the existence of quantum fully homomorphic encryption~\cite{STOC:BitShm20}.

Constant-round constructions are also known in the CRS model under QLWE~\cite{C:DFLSS09}, as well as in the QROM without additional assumptions~\cite{EC:ABKK23}. These constructions follow from the observation that post-quantum zero-knowledge arguments can be obtained from post-quantum extractable commitment schemes with only $O(1)$ additional rounds and no additional assumptions, relying on (a straightforward post-quantum extension of) \cite{TCC:Rosen04}. Moreover, both constructions can be made UC secure if the underlying post-quantum extractable commitment scheme is a UC secure commitment.

\subsection{UC Framework}

We assume familiarity with the UC framework \cite{FOCS:Canetti01}. We provide a brief overview here, following the wording of~\cite{EC:GarKiyPan17}. A formal treatment can be found in the literature (e.g., \cite{FOCS:Canetti01,STOC:CLOS02}). Recall that in the UC framework, the model for protocol execution consists of the environment \(\mcal{Z}\), the adversary \(\mcal{A}\), and the parties running protocol \(\pi\). In this paper, we consider static adversaries and assume the existence of authenticated communication channels. Let \(\mathsf{EXEC}_{\pi,\mcal{A},\mcal{Z}}(\kappa,z)\) denote a random variable representing the output of \(\mcal{Z}\) on security parameter \(\kappa\in\mathbb{N}\) and input \(z\in\{0,1\}^*\), with a uniformly chosen random tape. Let \(\mathsf{EXEC}_{\pi,\mcal{A},\mcal{Z}}\) denote the ensemble $\left\{
\mathsf{EXEC}_{\pi,\mcal{A},\mcal{Z}}(\kappa,z)
\right\}_{\kappa\in\mathbb{N},\,z\in\{0,1\}^*}.$

The security of a protocol \(\pi\) is defined using the \emph{ideal protocol}. In an execution of the ideal protocol, all parties simply provide their inputs to the \emph{ideal functionality} \(\mcal{F}\). The ideal functionality \(\mcal{F}\) securely performs the desired task and provides outputs to the parties, which forward these outputs to \(\mcal{Z}\). The adversary \(\mcal{S}\) in an execution of the ideal protocol is often called the \emph{simulator}. Let \(\pi(\mcal{F})\) denote the ideal protocol for functionality \(\mcal{F}\).

We say that a protocol \(\pi\) \emph{emulates} a protocol \(\phi\) if, for every adversary \(\mcal{A}\), there exists an adversary \(\mcal{S}\) such that no environment \(\mcal{Z}\), on any input, can distinguish with non-negligible probability whether it is interacting with \(\mcal{A}\) and parties running \(\pi\), or with \(\mcal{S}\) and parties running \(\phi\). We say that \(\pi\) \emph{securely realizes} an ideal functionality \(\mcal{F}\) if it emulates the ideal protocol \(\pi(\mcal{F})\).

Based on the UC framework, we define the UC commitment and UC zero-knowledge proof as follows:

\para{UC Commitment.} We define the ideal functionality of $\mathcal{F}_{\msf{Com}}$ as follows:
\begin{FigureBox}[label={F-Com}]{Ideal Functionality $\mathcal{F}_{\msf{Com}}$}
  $\mathcal{F}_{\msf{Com}}$ proceeds as follows, running with parties $P_1,\cdots,P_n$ and adversary $\mcal{S}$.
    \begin{itemize}
      \item \textbf{Commit phase.} Upon receiving a value $(\msf{Commit}, sid, P_i, P_j,x)$ from $P_i$, where $x \in \bits^{m}$, record the value $x$ and send the message $(\msf{Receipt},sid, P_i, P_j,x)$ to $P_j$.
      \item \textbf{Open phase.} Upon receiving a message $(\msf{open}, sid, P_i, P_j,x)$ from $P_i$, proceed as follows: If some value $x$ was previously recorded, then send the message $(\msf{Open},sid, P_i, P_j,x)$ to $P_j$ and $\mcal{S}$ and halt. Otherwise, halt.
    \end{itemize}
\end{FigureBox}

  \begin{definition}[UC Commitment]\label{def:uc-commit}
  A protocol is a \emph{universal composable (UC) commitment} if it securely realizes functionality $\mathcal{F}_{\msf{Com}}$.
  \end{definition}

\para{UC Zero-Knowledge.} We define the ideal functionality of $\mathcal{F}_{\msf{ZK}}$ as follows:
\begin{FigureBox}{Ideal Functionality $\mathcal{F}_{\msf{ZK}}$}

  $\mathcal{F}_{\msf{ZK}}$ proceeds as follows, running with parties $P_1,\cdots,P_n$ and adversary $\mcal{S}$. The functionality is
  parameterized by a binary relation $R$.
  
  \begin{enumerate} 
      \item Wait to receive a value $(\msf{verifier}, id,P_i,P_j,x)$ from some party $P_i$. Once such a value is received, send $(\msf{verifier}, id,P_i,P_j,x)$ to $S$ and ignore all subsequent $(\msf{verifier}, \cdot)$ values.
      \item Upon receipt a value $(\msf{prover}, id, P_i, P_j,x',w)$ from $P_j$, let $v=1$ if $x=x'$ and $R(x,w)=1$ holds, and $v=0$ otherwise. Send $(id,v)$ to $P_i$ and $\msf{S}$ and halt.
  \end{enumerate}
  \end{FigureBox}
  
  \begin{definition}[UC Zero-Knowledge]\label{def:uc-zk}
      A protocol is a \emph{universal composable (UC) zero-knowledge proof} if it securely realizes functionality $\mathcal{F}_{\msf{zk}}$.
      \end{definition}


\section{Spatiotemporal Verification and Commitment}
In this section, we formally introduce the model of spacetime and spacetime trajectory. Starting with a ``nice'' position verification, we give the definition and protocol of spatiotemporal trajectory commitment.
\subsection{The Model}\label{sec:model}
Our model is based on that specified in \cite{ITCS:LiuLiuQia22}. The model consists of three types of parties: provers, verifiers, and adversaries.
\begin{itemize}
    \item Space and time are continuous, and the clocks of all parties are synchronized. A spacetime point is represented by \(\alpha=(L,t)\in\mathbb{R}^d\times\mathbb{R}\), where \(L\in\mathbb{R}^d\) denotes the spatial location, \(d\) is the spatial dimension, and \(t\in\mathbb{R}\) denotes time. In this work, whenever we consider position-verification protocols with privacy guarantees against \emph{malicious} parties, we restrict our attention to the one-dimensional setting, i.e., \(d=1\). See \Cref{sec:spatial-dimension-barrier} for further discussion.

    \item Each party's location at a particular time is described by a spacetime point as \(\alpha=(L,t)\in\mathbb{R}^d\times\mathbb{R}\).

    \item Adversaries may also communicate over private quantum channels, thereby preventing the verifiers from detecting malicious activity through these communications.

    \item 
    The adversary considered herein may consist of multiple parties. For a coalition of quantum adversaries, we distinguish between the following two settings:
    \begin{itemize}
        \item \textbf{Unentangled adversaries:} The adversarial parties share no entanglement before the protocol begins.
        \item \textbf{Entangled adversaries:} The adversarial parties may share an arbitrary polynomial amount of entanglement before the protocol begins.
    \end{itemize}
    However, explicitly accounting for both types of adversaries throughout all our definitions would be cumbersome. We therefore use the phrase ``a coalition of non-uniform QPT adversaries'' in our definitions (e.g., in \Cref{def:stc}) as shorthand for two separate security notions: one against unentangled adversaries and the other against entangled adversaries. When presenting our constructions, security theorems, and proofs, we will then treat the two settings separately and explicitly state the hardness and model assumptions required for each.

    \item All computations are performed instantaneously, while messages transmitted over any communication channel propagate at unit speed (the speed of light).
\end{itemize}

We next introduce the notion of a spatiotemporal trajectory. In what follows, most definitions and constructions are formulated in terms of a party's trajectory rather than a single spacetime point.

\begin{definition}[Spatiotemporal Trajectory]
    \label{def:spatiotemporal-trajectory}
    A \emph{spatiotemporal trajectory} over
    \(\mathbb{R}^d \times \mathbb{R}\) is an ordered set
    \[
        \bigl(\mathbf{st}, \prec_{\msf{time}}\bigr),
    \]
    where, for some positive integer \(m\),
\[
    \mathbf{st}
    =
    \bigl\{(L_i,T_i) : i \in [m]\bigr\}
    \subseteq \mathbb{R}^d \times \mathbb{R}
\]
is a set of \(m\) observations with pairwise distinct timestamps,     i.e.,
    \[
        i \neq j \quad\Longrightarrow\quad T_i \neq T_j,
    \]
    and \(\prec_{\msf{time}}\) is the strict total order induced by the
    timestamps:
    \[
        (L_i,T_i) \prec_{\msf{time}} (L_j,T_j)
        \quad\Longleftrightarrow\quad
        T_i < T_j.
    \]

    Throughout this paper, a spatiotemporal trajectory is always ordered
    by its timestamps. We therefore simply write \(\mbf{st}\) for the ordered
    set \(\bigl(\mbf{st},\prec_{\msf{time}}\bigr)\). By a slight abuse
    of notation, when \(S\) is a possibly unordered set of spatiotemporal
    points, we write
    \[
        \mbf{st} \subseteq S
    \]
    to mean that \emph{the underlying set} $\mbf{st}$ satisfies
    \(\mathbf{st}\subseteq S\), disregarding the order. The intended
    interpretation will be clear from context.
\end{definition}

\subsection{Verification of Position and Spatiotemporal Trajectory}\label{sec:position-verification}

We first present the definition of position verification, following the formalism in \cite{girish2026private}.

\begin{definition}[Position Verification]\label{def:position-verification}
  Let $\Pi=\Set{\mcal{P},\mcal{V}}$ be an interactive protocol consisting of two QPT machines $\mcal{P}$ and $\mcal{V}$. The scheme should satisfy the following properties:
  
  \begin{itemize}
    \item \textbf{Completeness}: $\mcal{P}$ and $\mcal{V}$ hold the security parameter $1^{\lambda}$ and a spacetime point $\alpha$ as their public input. After the interaction, $\mcal{V}$ outputs a bit $b \in \{0,1\}$, where $b=1$ denotes acceptance and $b=0$ denotes rejection. We denote the interaction by $ b \gets \langle \mcal{P}, \mcal{V} \rangle(1^\lambda, \alpha).$
    It holds that
    \[ \Pr [b=1 ~:~ b\gets\langle \mcal{P}, \mcal{V} \rangle(1^\lambda, \alpha)] = 1-\negl(\lambda).\]
  
    \item \textbf{Soundness}: For (a collection of) malicious QPT provers $\mcal{P}^*$, let $\msf{R}^* = (\alpha_1^*, \cdots, \alpha_k^*)$ denote the set of spacetime points that $\mcal{P}^*$ occupies. Let $b \gets \langle \mcal{P}^*, \mcal{V} \rangle(1^\lambda, \alpha)$ denote the interaction between $\mcal{P}^*$ and $\mcal{V}$, where $b$ is the verifier's output. For all $\alpha\notin \msf{R}^*$, it holds that
    $$\Pr [b=1:~b\gets \langle \mcal{P}^*, \mcal{V} \rangle(1^\lambda, \alpha)] = \negl(\lambda).$$
  \end{itemize}
  \end{definition}
 One may further require that the position-verification protocol satisfy the following ``nice'' property introduced in~\cite{girish2026private}.
  \begin{definition}[Nice Position Verification Protocol]\label{def:nice}
  We say a position verification protocol $\Pi$ (as per \Cref{def:position-verification}) is \emph{nice} if it satisfies the following properties.
  \begin{itemize}
    \item $\Pi$ is a two-message, one-round challenge--response protocol in which the verifiers send challenges and the prover returns a response. For each position \(\alpha\), the verifiers apply the predicate \(W_{\alpha}(\cdot)\) to the prover's response to verify the claimed position.
    \item The predicate $W_{\alpha}(\cdot)$ used to verify the position $\alpha$ by $\mcal{V}$ is deterministic and classical.
    \item For all $\alpha = (L,t) \in \mathbb{R}^d\times \mathbb{R}$, there exist static verifier locations $X_1^S,\ldots,X_k^S\in\mathbb{R}^d$ whose convex hull contains the spatial point $L$.
\end{itemize}
\end{definition} 
\begin{remark}
    A position-verification protocol directly verifies only a spatial location, rather than a spatiotemporal point. However, the location can be associated with the time at which the protocol is executed. Thus, without loss of generality, we treat $\alpha$ as a predicate that verifies a spatiotemporal point.
\end{remark}

Then, based on the notion of spatiotemporal trajectory \Cref{def:spatiotemporal-trajectory}, we provide the definition of \emph{spatiotemporal trajectory verification}.  
\begin{definition}[Spatiotemporal Trajectory Verification (STV)]
    \label{def:STV} 
    A spatiotemporal trajectory verification scheme $\Pi=(\mcal{P}, \mcal{V}, \msf{T})$ consists of two QPT machines $\mcal{P}$ (dubbed the prover) and $\mcal{V}$ (dubbed the verifier), and a so-called \emph{admissible trajectory set} $\msf{T}$, which is a finite set where each element $\msf{st}\in\msf{T}$ is a spatiotemporal trajectory over $\mathbb{R}^d\times \mathbb{R}$ (as per \Cref{def:spatiotemporal-trajectory}).    The scheme should satisfy the following properties:
\begin{enumerate}
\item \textbf{Completeness}: $\mcal{P}$ and $\mcal{V}$, holding the security parameter $1^\secpar$ and a trajectory $\msf{st}\in\msf{T}$ as their pubic input, interact within a time window
during which $\mcal{P}$ moves along the trajectory as specified by $\msf{st}$. After the interaction, $\mcal{V}$ outputs a bit $b\in\{0,1\}$ indicating acceptance(i.e., $b=1$) or rejection(i.e., $b=0$). We denote this execution by $b \gets \langle \mcal{P}, \mcal{V} \rangle(1^\secpar,\msf{st})$. 

Using this notation, completeness requires that for all $\msf{st}\in\msf{T}$, it holds that  
$$\Pr [ b = 1 ~:~ b \gets \langle \mcal{P}, \mcal{V} \rangle(1^\secpar,\msf{st})] = 1 - \negl(\secpar).$$

\item \textbf{Soundness}: For a (coalition of) malicious $\mcal{P}^*$ interacting with the honest $\mcal{V}$, let $\msf{S}^* = \Set{(L_i, T_i)}_i$ denote the set of all spatiotemporal points that (some part of) $\mcal{P}^*$ \emph{has occupied} (see the discussion in \Cref{rem:malicious-committer-trajectory})
during the interaction. We denote this execution as $b \gets \langle \mcal{P}^*[\msf{S}^*], \mcal{V} \rangle(1^\secpar,\msf{st})$, where $b$ is the output of $\mcal{V}$ at the end of the interaction. 

Using this notation, soundness requires that for all  $\msf{st}\in\msf{T}$ and all a (coalition of) non-uniform QPT $\mcal{P}^*$ and its associated $\msf{S}^*$, it holds that
$$\Pr [ (b = 1) ~\land~ (\msf{st} \nsubseteq \msf{S}^*) ~:~ b \gets \langle \mcal{P}^*[\msf{S}^*], \mcal{V} \rangle(1^\secpar, \msf{st}) = 1 ] = \negl(\secpar).$$

\end{enumerate}
\end{definition}

This construction builds on the position-verification protocol. For a trajectory \(\msf{st}\), the verifiers execute the position-verification protocol to verify each spacetime point \(\alpha\in\msf{st}\).

\subsection{Spatiotemporal Trajectory Commitment}
\label{sec:spatiotemporal-trajectory-commitment}

In this part, we formally define the spatiotemporal trajectory commitment and present a construction.

\subsubsection{Definition}

In this section, we formally define spatiotemporal trajectory commitment schemes. We consider two variants: a standard scheme \Cref{def:stc} and an extractable scheme \Cref{def:ext-stc}. The latter strengthens the former by additionally guaranteeing that the committed trajectory can be efficiently extracted without affecting the adversary's view.

\begin{definition}[Spatiotemporal Trajectory Commitment]\label{def:stc} 
A \emph{spatiotemporal trajectory commitment} scheme $\Pi = (\mcal{C},\mcal{R},\msf{T})$ consists of two QPT machines  $\mcal{C}$ (dubbed the committer) and $\mcal{R}$ (dubbed the receiver), and a so-called \emph{committable-trajectory} set $\msf{T}$, which is a finite set where each element $\msf{st}\in\msf{T}$ is a spatiotemporal trajectory over $\mathbb{R}^d\times\mathbb{R}$ (as per \Cref{def:spatiotemporal-trajectory}).
The scheme satisfies the following requirements: 

The interaction between $\mcal{C}$ and $\mcal{R}$ is split into two stages, \textbf{Commit} and \textbf{Decommit}, which have the following syntax:
\begin{itemize}
    \item \textbf{Commit}: $\mcal{C}$ and $\mcal{R}$, holding the security parameter $1^\secpar$ as their common input, interact within a time window  
    during which $\mcal{C}$ moves along a trajectory $\msf{st}\in\msf{T}$. After the interaction, the $\mcal{C}$ output a state $\rho_{\reg{C}}$ and $\mcal{R}$ a state $\rho_{\reg{R}}$, which will be used in the later \textbf{Decommit} stage; $\mcal{R}$ also outputs a bit $b_{\mrm{com}}$ indicating acceptance (i.e., $b_{\mrm{com}} = 1$) or rejection (i.e., $b_{\mrm{com}} = 0$) of the commit-stage interaction; and let $\tau$ denote the collection of the \emph{classical parts} across all the messages exchanged between $\mcal{C}$ and $\mcal{R}$ during the interaction. We denote this execution as\footnote{Technically, the classical part $\tau$ of the message exchanged during \textbf{Commit}, as well as the bit $b_{\mrm{com}}$, could be included in $\rho_{\reg{R}}$. However, we treat them as a separate
output, as this convention will be useful for subsequent definitions, e.g., \Cref{def:ext-stc}.}  
    $$(\rho_{\reg{C}}, \tau, \rho_{\reg{R}}, b_{\mrm{com}}) \leftarrow \langle \mcal{C}_{\msf{st}}, \mcal{R}\rangle(1^\secpar).$$ 

    \item \textbf{Decommit:} $\mcal{C}$ on input $\rho_{\reg{C}}$ generates the \emph{decommitment information} $\msf{decom}$ and sends $(\msf{st}, \msf{decom})$ to $\mcal{R}$. Then, $\mcal{R}$ computes a (possibly quantum) predicate $\msf{Verify}$, which on input $\msf{st}$, $\msf{decom}$, $\rho_{\reg{R}}$, and $\tau$ outputs a bit $b_{\mrm{dec}}$ indicating acceptance (i.e., $b_{\mrm{dec}} = 1$) or rejection (i.e., $b_{\mrm{dec}} = 0$) of the decommitment.\footnote{W.l.o.g., we take the convention that $\msf{Verify}(\cdot, \cdot, \cdot, \cdot)$ always outputs 0 whenever $b_{\mrm{com}} = 0$.} We denote this execution as
    $$(\msf{st}, \msf{decom}) \gets \mcal{C}(\rho_{\reg{C}});~ b_{\mrm{dec}} \gets \msf{Verify}(\msf{st}, \msf{decom}, \rho_{\reg{R}}, \tau).$$
\end{itemize}  
The scheme should satisfy completeness, computational hiding, and \textbf{either} statistical binding \textbf{or} computational binding, as defined below:
\begin{enumerate}
\item
    \textbf{Correctness:} For all $\msf{st}\in\msf{T}$, it holds that 
    $$\Pr[b_{\mrm{com}} = b_{\mrm{dec}} = 1 ~:~ 
    \begin{array}{l}
        (\rho_{\reg{C}}, \tau, \rho_{\reg{R}}, b_{\mrm{com}}) \leftarrow \langle \mcal{C}_{\msf{st}}, \mcal{R}\rangle(1^\secpar); \\
        (\msf{st}, \msf{decom}) \gets \mcal{C}(\rho_{\reg{C}}); \\
        b_{\mrm{dec}} \gets \msf{Verify}(\msf{st}, \msf{decom}, \rho_{\reg{R}}, \tau)
    \end{array} ] = 1 -\negl(\secpar).$$ 

\item \textbf{Computational Hiding:} For a malicious receiver $\mcal{R}^*$ and an honest $\mcal{C}$ moving along a trajectory $\msf{st}$, we denote the commit-stage interaction as $\rho_{\reg{R}^*} \gets \langle \mcal{C}_{\msf{st}}, \mcal{R}^* \rangle (1^\secpar)$, where $\rho_{\reg{R}^*}$ denotes the commit-stage output of $\mcal{R}^*$. 

Using this notation, \emph{computational hiding} requires that for all QPT $\mcal{R}^*$, there exists a QPT machine $\mathsf{Sim}$ (dubbed the simulator) such that for all committable trajectories $\msf{st} \in \msf{T}$, it holds that
$$\Set{\msf{Sim}(1^\secpar)}_{\secpar \in \mathbb{N}} \cind \Set{\rho_{\reg{R}^*}~:~ \rho_{\reg{R}^*} \gets \langle \mcal{C}_{\msf{st}}, \mcal{R}^* \rangle (1^\secpar)}_{\secpar \in \mathbb{N}}.$$

\item \textbf{Computational Binding:}  For a (coalition of) quantum  $\mcal{C}^*$ interacting with the honest $\mcal{R}$, let $\msf{S}^* = \Set{(L_i, T_i)}_i$ denote the set of spatiotemporal points that (some part of) $\mcal{C}^*$ \emph{has occupied} (see the discussion in \Cref{rem:malicious-committer-trajectory})
 during the commit-stage interaction. We denote the commit-stage execution as
$(\tau,  \rho_{\reg{R}}, b_{\mrm{com}}) \gets \langle \mcal{C}^*[\msf{S}^*], \mcal{R} \rangle (1^\secpar),$
where $\tau$ denotes the collection of the \emph{classical parts} across all the messages exchanged between $\mcal{C}^*[\msf{S}^*]$ and $\mcal{R}$ during the commit-stage,  $\rho_{\reg{R}}$ denotes $\mcal{R}$'s commit-stage output, and $b_{\mrm{com}}$ is $\mcal{R}$'s bit indicating acceptance (i.e., $b_{\mrm{com}} = 1$) or rejection (i.e., $b_{\mrm{com}} = 0$) of the commit-stage interaction.

Using this notation, \emph{computational binding} requires that for any (coalition of) non-uniform QPT  $\mcal{C}^*$ and her associated $\msf{S}^*$, with probability at least $1 - \negl(\secpar)$ over the commit-stage execution $(\tau,  \rho_{\reg{R}}, b_{\mrm{com}}) \gets \langle \mcal{C}^*[\msf{S}^*], \mcal{R} \rangle (1^\secpar)$, there exists a unique $\msf{st}^* \in \msf{T}$ such that the following two properties hold:
\begin{enumerate}
    \item  $\msf{st}^* \subseteq \msf{S}^*$, and
    \item\label[Condition]{cond:binding}  for all other trajectories $\msf{st} \ne \msf{st}^*$, it holds that
    $$\Pr[ 
        \exists \msf{decom} ~\text{s.t.}~ \msf{Verify}(\msf{st}, \msf{decom}, \rho_{\reg{R}}, \tau) = 1]  = \negl(\secpar).$$
\end{enumerate}

\end{enumerate}

\end{definition}

\begin{remark}[On Malicious Committer's Trajectory]
    \label{rem:malicious-committer-trajectory} Unlike for an honest committer, the notion of a trajectory for a malicious
committer $\mcal{C}^*$ is subtle to define. This is because $\mcal{C}^*$ may
consist of a collection of QPT machines occupying multiple spatial locations
simultaneously. For example, $\mcal{C}^*$ may comprise three parties
$\mcal{C}^*_1,\mcal{C}^*_2,\mcal{C}^*_3$ that remain at three distinct locations
$L_1,L_2,L_3$, respectively, throughout the commit stage. These parties could
then easily emulate a single moving party following the trajectory
\[
    \msf{st}=\Set{(L_1,T_1),(L_2,T_2),(L_3,T_3)}
\]
for any time points $T_1<T_2<T_3$.

We emphasize that \Cref{def:stc} is \textbf{not} intended to distinguish between these
two scenarios. In particular, a distributed $\mcal{C}^*$ that emulates a
single party moving along a trajectory $\msf{st}$ is not considered to violate
the binding guarantee. Accordingly, rather than specifying a single
trajectory, we specify the set of spacetime points $\msf{S}^*$ occupied by
$\mcal{C}^*$ during the commit stage. Any subset
$\msf{st}\subseteq\msf{S}^*$ is then regarded as a valid trajectory that
$\mcal{C}^*$ may claim to have traversed during that stage.

As mentioned earlier, the recent work by Bartusek et
al.~\cite{bartusek2026track} introduced a notion called \emph{quantum
localization} to capture precisely the distinction between a single moving
party and a distributed party. However, \Cref{def:stc}, as well as the other
definitions in this paper, does not make this distinction. We leave exploring
the connection between our notion of spatiotemporal trajectory commitment and
quantum localization as an open question, as discussed previously in
\Cref{sec:extensions-and-future-directions}.

\end{remark}

Before presenting the extractable variant of spatiotemporal trajectory commitment, we first introduce the notion of a ``committed value.'' It will be used in \Cref{def:ext-stc} to avoid the possibility that the extracted value is not unique. 
\begin{definition}[Committed Value]\label{def:committed-value}
    For a spatiotemporal trajectory commitment scheme $\Pi = (\mcal{C},\mcal{R},\msf{T})$  (satisfying \Cref{def:stc}), consider the commit-stage execution $(\tau,  \rho_{\reg{R}}, b_{\mrm{com}}) \gets \langle \mcal{C}^*[\msf{S}^*], \mcal{R} \rangle (1^\secpar)$ with a (potentially malicious) committer  $\mcal{C}^*[\msf{S}^*]$, where we use the same notation as in the computational binding party of \Cref{def:stc}. We defined the \emph{committed value} of such an execution as:
    $$ 
        \msf{val}_{\Pi}(\tau, \rho_{\reg{R}}) \coloneqq 
        \begin{cases}
            \msf{st} & \text{if~$\exists$~a unique $\msf{st} \in \msf{T}$ s.t.~ $\exists~\msf{decom}$ s.t.~$\msf{Verify}(\msf{st}, \msf{decom}, \rho_{\reg{R}}, \tau) = 1$ } \\
            \bot & \text{otherwise}
        \end{cases}.
    $$
\end{definition}

\begin{definition}[Extractable Spatiotemporal Trajectory Commitment]\label{def:ext-stc} 
A spatiotemporal trajectory commitment scheme $\Pi = (\mcal{C},\mcal{R},\msf{T})$  (satisfying \Cref{def:stc}) is said to be \emph{extractable} if it additionally satisfies the following \textbf{Extractability} requirement:

For a (coalition of) quantum  $\mcal{C}^*$ interacting with the honest $\mcal{R}$, let $\msf{S}^* = \Set{(L_i, T_i)}_i$ denote the set of spatiotemporal points that (some part of) $\mcal{C}^*$ \emph{has occupied} (see the discussion in \Cref{rem:malicious-committer-trajectory})
 during the commit-stage interaction. We denote the commit-stage execution as
$$(\msf{ST}_{\mcal{C}^*}, \tau, \rho_{\reg{R}}, b_{\mrm{com}}) \gets \langle \mcal{C}^*[\msf{S}^*], \mcal{R} \rangle (1^\secpar),$$
where $\tau$ denotes the collection of the \emph{classical parts} across all the messages exchanged between $\mcal{C}^*[\msf{S}^*]$ and $\mcal{R}$ during the commit-stage, $\msf{ST}_{\mcal{C}^*}$ denotes  $\mcal{C}^*[\msf{S}^*]$'s commit-stage output, $\rho_{\reg{R}}$ denotes $\mcal{R}$'s commit-stage output, and $b_{\mrm{com}}$ is $\mcal{R}$'s bit indicating acceptance (i.e., $b_{\mrm{com}} = 1$) or rejection (i.e., $b_{\mrm{com}} = 0$) of the commit-stage interaction.

    Using this notation, we require that for any (coalition of) non-uniform QPT $\mcal{C}^*$ and her associated $\msf{S}^*$, there exists a two-stage extractor $\msf{Ext} = (\msf{Ext}_0, \msf{Ext}_1)$, a spatiotemporal verification protocol $\Pi' = (\mcal{P}', \mcal{V}', \msf{T}')$ (as per \Cref{def:STV}), and a polynomial $\poly(\cdot)$ such that the following holds:
    \begin{itemize}
        \item No parts of $\msf{Ext}=(\msf{Ext}_0, \msf{Ext}_1)$ can occupy any spatiotemporal point outside the set $\msf{S}^*$;
        \item $\msf{Ext}_0$ takes as input the security parameter $1^\secpar$ and outputs a pair $(\msf{st}^*, \rho)$;
        \item $\msf{Ext}_1$ has two modes of operation, indicated by a command string $\msf{cmd} \in \Set{\mathtt{sim}, \mathtt{prv}}$. When invoked on input $(\mathtt{sim}, 1^\secpar, \msf{st}^*, \rho)$, it outputs a tuple $(\tilde{\msf{ST}}_{\mcal{C}^*}, \tilde{b}_{\mrm{com}})$; when invoked on input $(\mathtt{prv}, 1^\secpar, \msf{st}^*, \rho)$, it could act as a prover strategy for the spatiotemporal verification protocol $\Pi'$.
    \end{itemize}
    These algorithms further satisfy the following properties:
    \begin{enumerate}
        \item \textbf{Simulation:} it holds that
         \begin{align*}
        & 
        \bigg\{
            \big(\Gamma_{\tilde{b}_{\mrm{com}}}(\msf{st}^*), \tilde{\msf{ST}}_{\mcal{C}^*}\big) 
            ~:~
            \begin{array}{l}
                (\msf{st}^*, \rho) \gets \msf{Ext}_0(1^\secpar) \\ 
                (\tilde{\msf{ST}}_{\mcal{C}^*}, \tilde{b}_{\mrm{com}}) \gets \msf{Ext}_1(\mathtt{sim}, 1^\secpar, \msf{st}^*, \rho)
            \end{array}
        \bigg\}_{\secpar \in\mathbb{N}} \\
        \cind
        ~~&
        \bigg\{
            (\msf{val}_{\Pi}(\tau, \rho_{\reg{R}}), \msf{ST}_{\mcal{C}^*})~:~(\msf{ST}_{\mcal{C}^*}, \tau, \rho_{\reg{R}}, b_{\mrm{com}}) \gets \langle \mcal{C}^*[\msf{S}^*], \mcal{R} \rangle (1^\secpar)
        \bigg\}_{\secpar \in\mathbb{N}},
        \end{align*}
        where $\Gamma_{\tilde{b}_{\mrm{com}}}(\msf{st}^*)$ is defined as follows: 
        $
        \Gamma_{\tilde{b}_{\mrm{com}}}(\msf{st}^*) \coloneqq 
        \begin{cases}
            \msf{st}^* & \text{$\tilde{b}_{\mrm{com}} = 1$} \\
            \bot & \text{$\tilde{b}_{\mrm{com}} = 0$}
        \end{cases}
        $.

        \item \textbf{Spatiotemporal Extraction:} If  
$$
    \delta(\secpar) 
    \coloneqq 
    \Pr[ b_{\mrm{com}} = 1 ~:~ (\msf{ST}_{\mcal{C}^*}, \tau, \rho_{\reg{R}}, b_{\mrm{com}}) \gets \langle \mcal{C}^*[\msf{S}^*], \mcal{R} \rangle (1^\secpar)]  \geq \kappa(\secpar)\footnote{All constructions in this work achieve security with respect to a negligible knowledge-error function \(\kappa(\secpar)\), which is left implicit throughout.},
$$
then 
$$  \Pr[
            (\msf{st}^* \in \msf{T}) ~\land~ 
              (b' =1)
        ~:~
        \begin{array}{l} 
            (\msf{st}^*, \rho)\gets\msf{Ext}_0(1^\secpar); \\
            b' \gets \langle \mathsf{Ext}_1(\mathtt{prv}, \rho), \mcal{V'} \rangle(1^\secpar, \msf{st}^*) 
        \end{array} 
        ] = \poly\big(\delta(\secpar)\big),$$
        where $b' \gets \langle \mathsf{Ext}_1(\mathtt{prv}, \rho), \mcal{V'} \rangle(1^\secpar, \msf{st}^*)$ denotes the execution of $\Pi'$ between $\mathsf{Ext}_1(\mathtt{prv}, 1^\secpar,  \msf{st}^*, \rho)$ acting as the prover and the honest verifier $\mcal{V'}(1^\secpar,  \msf{st}^*)$, who outputs $b'$ after the interaction.
    \end{enumerate}

\end{definition}

\begin{remark}[Comparison with Prior Work]
    Compared with the notion introduced in~\cite{girish2026private}, our notion provides two main extensions. First, it considers an entire spatiotemporal trajectory rather than a single spacetime point. This enables the verification of statements about movement over time---for example, that Alice remained within a specified region throughout a given time interval---rather than merely statements asserting her presence at a particular location and time. Second, our notion provides an extractability property by simulation-while-extracting. This property makes the notion more general and facilitates its application to secure multiparty computation.
    \end{remark}

\subsubsection{Spatial Dimension as a Barrier to Privacy}
\label{sec:spatial-dimension-barrier}

Our treatment of privacy must account for a physical
obstacle identified by Girish et al.~\cite[Section~5]{girish2026private}.
In a model allowing unrestricted directional communication,
a malicious verifier can selectively deliver protocol messages
along a chosen ray, or withhold them from selected directions,
and observe whether the interaction succeeds. Whenever
successful participation depends on receiving these messages,
the resulting behavior can reveal whether the prover lies
in the targeted region. The same obstacle applies to
our work: a malicious receiver may use
selective delivery to learn geometric information about
the committer's trajectory, even if the cryptographic
contents of the commitment are hiding. For example,
successful participation following a ray-restricted
challenge can reveal that the committer's position at
the relevant time lies on that ray. This obstacle is
inherent to the unrestricted directional-communication
model, rather than specific to our definition or
construction.

\para{Scope of this work.} 
Accordingly, throughout this paper we consider a
one-dimensional spatial interval, with the receiver's
two signal towers fixed at its endpoints and the
committer's admissible trajectory contained within
the interval. This geometry eliminates the directional-message
attack: from either tower, every admissible interior
position lies on the same inward-pointing ray.
Consequently, a malicious receiver cannot use the
choice of transmission direction to selectively target
some interior positions while excluding others.
The directional attack therefore reveals no additional
spatial information beyond the publicly specified
restriction that the committer lies within the interval.
Fixing the towers' locations does not require the
receiver to behave honestly; our security guarantees
allow malicious behavior within this physical model.
All our constructions, including trajectory commitments,
commit-and-prove protocols, and the spatiotemporal MPC
protocols developed in subsequent sections, are
formulated in this one-dimensional setting.

\para{Relation to higher-dimensional extensions.} It is worth noting that 
our  constructions and techniques are ``orthogonal'' to the
directional-delivery obstacle in a modular sense:
they rely on the interfaces and security guarantees
of the underlying spatiotemporal primitives, rather
than on a particular mechanism for preventing
directional leakage. Girish et al.~\cite[Section~5]{girish2026private}
suggest some potential approaches to overcoming
this obstacle, including restricting verifiers to
broadcast transmissions, enabling provers to distinguish
broadcast from directional signals, and using redundant
verifier networks under an honest-majority assumption.
If a higher-dimensional model or technique (e.g., \cite[Section~5]{girish2026private}) resolves
this obstacle and provides primitives satisfying the
same interfaces and security requirements used here,
our constructions and composition arguments carry
over to that setting.

\subsubsection{Construction}
\label{sec:the-protocol of spatiotemporal trajectory commitment}

In this section, we present a construction of an extractable spatiotemporal trajectory commitment scheme. Our construction extends the idea underlying \cite{girish2026private} from a single spacetime point to an entire spatiotemporal trajectory.

We use the following building blocks:
\begin{itemize}
    \item An extractable post-quantum commitment scheme $\msf{ExtCom} = \langle C,R\rangle$ satisfying \Cref{def:extcom}; let $r$ denote the number of rounds for this protocol.
   
    \item A ``nice'' position-verification protocol $\Pi_{\mathsf{PV}}$ satisfying \Cref{def:nice}. Note that this protocol has two rounds, For each spacetime point $\alpha$, let $W_{\alpha}$ denote the verification predicate associated with the corresponding position-verification protocol $\Pi_{\alpha}$.
    \item A post-quantum zero-knowledge proof system $\Pi_{\mathsf{ZK}}$ satisfying \Cref{def:zero-knowledge}.
\end{itemize}

We construct an extractable spatiotemporal trajectory commitment scheme $\Pi=(C,R,\msf{T})$ as follows. In the one-dimensional setting, it suffices to employ two coordinators positioned at distinct locations that delimit the relevant spatial interval.

\begin{ProtocolBox}[label={protocol:stc}]{Extractable Spatiotemporal Trajectory Commitment}

\para{Setup:}  The protocol uses the following setup:
\begin{itemize}
    \item 
    The protocol is parameterized by a security parameter \(1^\secpar\) and a nonempty set \(\msf{T}\) of admissible spatiotemporal trajectories. 
    \item 
A collection of designated $R = (R_1, R_2)$ is chosen as the coordinator. 

\item Let \(\Delta\) denote the maximum time required for a message to travel from any admissible space point\footnote{The set of admissible points consists of all space points appearing in any trajectory $\msf{st} \in \msf{T}$.} to any receiver \(R_i\). Let \(\Delta'\) denote the minimal time required for a message to travel from any admissible point to any receiver \(R_i\).
\item
 The protocol will execute within a time interval \([t_{\msf{init}},t_{\msf{final}}]\), which is chosen such that \(t_{\msf{init}}<t_{\msf{min}}(\msf{T})\leq t_{\msf{max}}(\msf{T})<t_{\msf{final}}\). More accurately, $t_{\msf{init}} \coloneqq t_{\msf{min}}(T) - (r+1)\Delta$ and $t_{\msf{final}} \coloneqq t_{\msf{max}}(T) + \Delta$.

\item
Let $\msf{L}$ denote the set of all location $L_i$'s in all trajectories in $\msf{T}$. Without loss of generality (e.g., by proper rounding to nearest space point), we assume that $|\msf{L}|$ is a polynomial of $\secpar$. That is, the supporting space of $\mcal{T}$ is meshed into polynomially many points. 
\item
The time interval \([t_{\msf{min}}(\msf{T}) - \Delta, t_{\msf{max}}(\msf{T})]\) is also meshed into a polynomial-size sequence of time steps $(t_1, t_2, \ldots, t_n)$ of equal length, where $t_1 = t_{\msf{min}}(\msf{T}) - \Delta$ and $t_n = t_{\msf{max}}(T)$, where $n$ is a polynomial of $\secpar$.
\end{itemize}

\para{Private Input:} The committer \(C\) follows a trajectory \(\msf{st}=\Set{(L_1,T_1),(L_2,T_2),\ldots,(L_m,T_m)} \in\msf{T}\), which it holds as its private input and to which it commits.

\para{Commit Phase.}
\begin{enumerate}

    \item \label[Step]{step:commit} At time \(t_{\msf{init}}\), \(C\) samples one-time-pad keys
    $\Set{\msf{sk}_{v, u}}_{v \in [n], u \in [|\msf{L}|]}$, where each $\msf{sk}_{v, u}\sample\{0,1\}^{\ell(\lambda)}$ is a uniformly random string of length $\ell(\secpar)$, and $\ell(\secpar)$ denote the length of the prover's message in the position-verification protocol \(\Pi_{\mathsf{PV}}\). $C$ then commits to $\msf{st}$ and $\Set{\msf{sk}_{v, u}}_{v \in [n], u \in [|\msf{L}|]}$ using $\msf{ExtCom}$, we denote the commitment transcript as \(\msf{com}\).

    \item For each time step $t_v$ in the sequence $(t_1, t_2, \ldots, t_n)$ and $t_v \le t_{\msf{max}}(\msf{T}) - \Delta'$, $R$ sends the first-round message (i.e., the challenge) of a fresh instance of $\Pi_{PV}$. We denote the instance of $\Pi_{PV}$ initiated at time $t_v$ as $\Pi^{(t_v)}_{PV}$. The challenge is broadcast to all locations in $\msf{L}$.

    \item $C$ then behaves as follows. For each time point $t_v$ and each space point $L_u\in\msf{L}$, let $y_{v,u}$ denote the response to the $\Pi^{(t_v)}_{PV}$ instance  \textbf{if $C$ were to be at location $L_u$}. $C$ sends the following $\Set{c_{v,u}}_{v \in [n], u \in [|\msf{L}|]}$ to $R$ with the correct timing:
    $$
        c_{v,u} \coloneqq
        \begin{cases}
            \msf{sk}_{v,u}\oplus y_{v,u} & \text{if there exists a $T$ such that $(L_u, T)\in \msf{st}$ } \\
            \msf{sk}_{v,u}\oplus \bot & \text{otherwise}
        \end{cases}, 
    $$
    where $\bot$ can be understood as a publicly designated dummy string (e.g., all zeros) representing an invalid or unsuccessful response. 

    By ``correct timing,'' we mean that the ``true'' response $\msf{sk}_{v,u}\oplus y_{v,u}$ is sent right at the time that the $\Pi^{(t_v)}_{PV}$ challenge arrives at $C$'s real location $L_u$, while the ``dummy'' response $\msf{sk}_{v,u}\oplus \bot$ is sent at a time that consists of the  $\Pi^{(t_v)}_{PV}$ timing constraints \textbf{if $C$ were to be at location $L_u$}. Note that our construction  allows this because $t_1$ is set to $t_{\msf{min}}(\msf{T}) - \Delta$, and thus $C$ can plan ahead of time to send the dummy responses at the right time.

        \item \label[Step]{step:zk}
        \textbf{(Consistency Proof.)} At the time $t_{\msf{final}}$, let $\msf{M}$ denote combined transcript exchanged between $C$ and $R$ during the time interval $[t_{\msf{init}}, t_{\msf{final}}]$.
    
        Starting from  time $t_{\msf{final}}$, $C$ and $R$  involve in the execution of the zero-knowledge argument where $C$ proves honest behavior as described above. More precisely, the common input is $x\coloneqq(\msf{M},\msf{com})$, and the witness is $w\coloneqq(\Set{\msf{sk}_{v,u}}, \msf{st}, \msf{decom})$; $C$ proves that $(x,w)$ satisfies the following conditions:
        \begin{itemize}
            \item $\msf{st} \in\msf{T}$;
            \item Using $\msf{decom}$ as the decommitment information, $\msf{com}$ is indeed a valid commitment to strings $\msf{st}$ and $\Set{\msf{sk}_{v,u}}_{v \in [n], u \in [|\msf{L}|]}$; 
            \item For each $c_{v, u}$ (where $u$ determines a unique $L_u \in \msf{L}$) in $\Set{c_{v,u}}_{v \in [n], u \in [|\msf{L}|]}$ 
            \begin{itemize}
            \item if there exists some (unique, by \Cref{def:spatiotemporal-trajectory}) $T$ such that $(L_u, T) \in \msf{st}$, then $c_{v, u}$ is equal to $\msf{sk}_{v,u} \xor y_{v,u}$ where $y_{v,u}$ is a correct response to the $\Pi^{(t_v)}_{PV}$ instance, consistent with  the constraint that $C$ is located at $L_u$ at time $T$. 
            \item if there does not exist any $T$ such that $(L_u, T) \in \msf{st}$, then $c_{v, u}$ is equal to $\msf{sk}_{v,u} \xor \bot$.
            \end{itemize}  
        \end{itemize}

        Importantly, this execution of ZK protocol needs to satisfies the timing constraints that $C$ pretends to be at the farthest location in $\msf{L}$. That is, for the $k$-th prover message of this ZK protocol, $C$ needs to time it such that it arrives at $R$ at time $t_{\msf{final}} + k\Delta$.

        \item $R$ outputs $b_{\mrm{com}}$ that is set to be the output of $R$ acting as the verifier of this ZK protocol.

\end{enumerate}

\para{Decommit Phase.}
\begin{enumerate}
    \item $C$ generates the decommitment information $\msf{decom}$ and sends $(\Set{\msf{sk}_{v,u}}_{v \in [n], u \in [|\msf{L}|]}, \msf{st}, \msf{decom})$ to $R$.
    \item $R$ accepts  if the decommitment information is valid; otherwise, it rejects.
\end{enumerate} 
\end{ProtocolBox}
\begin{lemma}[Security of the Spatiotemporal Trajectory Commitment Construction]\label{thm:security-stc}
    Assume that $\Pi_{\mathsf{PV}}$ is a nice position-verification protocol satisfying \Cref{def:nice} against unentangled (resp., entangled) adversaries. Assume further that $\Pi_{\mathsf{ZK}}$ is 
    a post-quantum zero-knowledge proof system satisfying \Cref{def:zero-knowledge}, and $\msf{ExtCom}$ is a  post-quantum extractable commitment scheme satisfying \Cref{def:extcom}. Then, \Cref{protocol:stc} is a spatiotemporal trajectory commitment scheme satisfying \Cref{def:ext-stc}  against unentangled (resp., entangled) adversaries.
    \end{lemma}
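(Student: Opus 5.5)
We must verify the properties of \Cref{def:stc} and \Cref{def:ext-stc} for \Cref{protocol:stc}: correctness, computational hiding, binding, and extractability (simulation plus spatiotemporal extraction). The plan treats the unentangled and entangled cases uniformly. The assumption on $\Pi_{\mathsf{PV}}$ enters only in the spatiotemporal-extraction step, and every reduction there uses only the adversary's own components and entanglement.

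\textbf{Correctness and hiding.} Correctness follows from the correctness of $\msf{ExtCom}$, the completeness of $\Pi_{\mathsf{PV}}$ at each point of $\msf{st}$, and the completeness of $\Pi_{\mathsf{ZK}}$. Note that the timing schedule is feasible because $t_1 = t_{\msf{min}}(\msf{T})-\Delta$, so dummy ciphertexts can be precomputed and sent on schedule.

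For hiding, we use a hybrid argument.
\begin{enumerate}
\item Replace the ZK execution by $\Pi_{\mathsf{ZK}}$'s simulator.
\item Replace $\msf{com}$ by a commitment to all-zero strings, using computational hiding of $\msf{ExtCom}$.
\item Observe that the values $c_{v,u}$ are now one-time-padded with keys $\msf{sk}_{v,u}$ that appear nowhere else, so they are uniform and independent of $\msf{st}$.
\end{enumerate}
Timing leakage is handled by the public traffic schedule, which covers every candidate point, together with the one-dimensional geometry of \Cref{sec:spatial-dimension-barrier}. In that geometry, the message arrival pattern at $R_1,R_2$ is independent of $\msf{st}$, and the ZK messages are timed as if sent from the farthest point. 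The resulting simulator runs $\mcal{R}^*$ on this $\msf{st}$-independent view.

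\textbf{Extractor and simulation.} We take $\msf{Ext}_0$ to run $\mcal{C}^*$ against an honest receiver while replacing $\msf{ExtCom}$'s receiver by its simulation-extractor $\mcal{SE}$ from \Cref{def:extcom}. This yields the committed pair $(\msf{st}^*,\{\msf{sk}^*_{v,u}\})$ together with the residual state. $\msf{Ext}_0$ outputs $\msf{st}^*$ and keeps in $\rho$ the keys, the residual committer state, and the receiver's state. This uses no new spatial agents, since $\mcal{SE}$ only replaces the receiver's computation at $R_1,R_2$.

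In $\mathtt{sim}$ mode, $\msf{Ext}_1$ continues the real interaction (PV challenges and ZK verification) and outputs $(\tilde{\msf{ST}}_{\mcal{C}^*},\tilde b_{\mrm{com}})$.

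The simulation property then reduces to two facts.
\begin{itemize}
\item By \Cref{def:extcom}, the joint distribution of the extracted value and the state is indistinguishable from the real one. Because the rest of the protocol is an efficient continuation, the indistinguishability carries through.
\item We must show $\Gamma_{\tilde b}(\msf{st}^*)=\msf{val}_\Pi(\tau,\rho_{\reg R})$ except with negligible probability. By statistical binding of $\msf{ExtCom}$ and soundness of $\Pi_{\mathsf{ZK}}$, whenever $b_{\mrm{com}}=1$ the unique openable trajectory is the one inside $\msf{com}$, which equals $\msf{st}^*$ (up to the indistinguishability loss). When $b_{\mrm{com}}=0$, $\msf{Verify}$ rejects by convention, so the value is $\bot$.
\end{itemize}
Computational binding follows from the same statistical binding, with the clause $\msf{st}^*\subseteq\msf{S}^*$ deduced from the extraction argument below.

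\textbf{Spatiotemporal extraction (main obstacle).} We take $\Pi'$ to be the STV protocol that runs an independent instance of $\Pi_{\mathsf{PV}}$ at each point of $\msf{st}^*$ along the same time mesh; its soundness follows from the soundness of $\Pi_{\mathsf{PV}}$ by a union bound.

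In $\mathtt{prv}$ mode, $\msf{Ext}_1$ re-runs the commit phase with the following changes.
\begin{itemize}
\item For the instance $\Pi^{(t_v)}_{\mathsf{PV}}$ at each mesh point, it forwards $\mcal{V}'$'s fresh challenge in place of $R$'s challenge. This is possible because $\Pi_{\mathsf{PV}}$ is nice and broadcast-based, so the challenges are identically distributed.
\item When $\mcal{C}^*$ emits $c_{v,u}$ for $L_u$ on $\msf{st}^*$, it forwards $c_{v,u}\oplus\msf{sk}^*_{v,u}$ as the response. This unmasking is local and instantaneous at the adversary's own location, so the response meets the timing constraint exactly when the original ciphertext did.
\end{itemize}

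The delicate point is that the ZK check occurs only after all responses are sent. We argue as follows.
\begin{enumerate}
\item Conditioned on $b_{\mrm{com}}=1$, soundness of $\Pi_{\mathsf{ZK}}$ and binding of $\msf{ExtCom}$ imply that every forwarded response satisfies $W_\alpha$ for $\alpha\in\msf{st}^*$, and that $\msf{st}^*\in\msf{T}$.
\item The view in this hybrid is indistinguishable from that of $\msf{Ext}_0$, so $b_{\mrm{com}}=1$ still occurs with probability about $\delta$ once the ZK phase is continued after forwarding.
\item Hence $\mcal{V}'$ accepts with probability at least $\delta-\negl$.
\end{enumerate}
This yields the required $\poly(\delta)$ bound. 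It also gives $\msf{st}^*\subseteq\msf{S}^*$ by PV soundness, which completes the binding proof.

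I expect the hardest step to be making this forwarding argument rigorous. Two issues require care: ensuring the extraction from $\mcal{SE}$ (completed before the challenges arrive) remains consistent with the later ZK soundness, and checking that the reduction introduces no communication or entanglement beyond $\mcal{C}^*$'s own, which is what allows the entangled case to transfer.
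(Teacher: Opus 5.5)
Your proposal is correct and follows essentially the same route as the paper. $\mathsf{Ext}_0$ runs the simulation-extractor of $\mathsf{ExtCom}$, the $\mathtt{sim}$ mode continues as the honest receiver, and the $\mathtt{prv}$ mode forwards the external verifier's challenges to $\mathcal{C}^*$ and unmasks the returned ciphertexts with the extracted one-time-pad keys, with ZK soundness and binding carrying acceptance over to the position-verification predicates. The differences are only organizational: you derive the $\mathsf{st}^*\subseteq\mathsf{S}^*$ clause of binding from this $\mathtt{prv}$-mode reduction, whereas the paper writes the same forwarding reduction out separately; and your hiding hybrids simulate the ZK proof before changing $\mathsf{com}$, which is the order that makes that step go through.
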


\begin{proof}

    \para{Completeness.} Consider an honest prover $P_{\msf{st}}$ following an admissible trajectory $\msf{st} \in \mathsf{T}$. At every point $\alpha \in \msf{st}$, $P_{\msf{st}}$ and the verifier execute the corresponding position-verification protocol $\Pi_\alpha$, with the prover's responses encrypted under the honestly generated secret key. By the completeness of $\Pi_\alpha$ and the correctness of the XOR operation, the encrypted transcript decrypts to responses satisfying the corresponding verification predicate $W_\alpha$, except with negligible probability. Moreover, the correctness of the commitment scheme implies that the honestly generated commitment can be successfully opened. Hence, the statement proved in zero knowledge is true, and the verifier accepts the proof by the completeness of the zero-knowledge proof system, except with negligible probability. It follows that both the commit and decommit phases are accepted. Taking a union bound over all $\alpha \in \msf{st}$, the overall failure probability remains negligible. 

    \para{Computational hiding.} Let $R^*$ be any malicious QPT receiver. We construct a QPT simulator $\Sim$ that emulates the joint view of $R^*$ throughout the commit phase without knowledge of the committed trajectory. The simulator replaces the commitment to $(\Set{\msf{sk}_{v, u}}_{v \in [n], u \in [|\msf{L}|]},\msf{st})$ with a commitment to an appropriately chosen dummy value, replaces all the encrypted position verification responses with encryptions of dummy messages of the same length, and simulates the zero-knowledge proof. By the computational hiding of the commitment scheme $\msf{ExtCom}$, the pseudorandom of the XOR operation, and the zero-knowledge property of the proof system, a standard hybrid argument shows that the output of $\Sim$ is computationally indistinguishable from the commit-stage output of $R^*$ in an honest execution. Thus, for every admissible trajectory $\Set{\msf{Sim}(1^\secpar)}_{\secpar \in \mathbb{N}} \cind \Set{\rho_{\reg{R}^*}~:~ \rho_{\reg{R}^*} \gets \langle {C}_{\msf{st}}, {R}^* \rangle (1^\secpar)}_{\secpar \in \mathbb{N}},$ and the resulting protocol~\Cref{protocol:stc} is computationally hiding.

    \para{Computational binding.} We prove the unentangled and entangled cases using the same reduction, always taking the adversarial attacking \Cref{protocol:stc} to be from the same  class against $\Pi_{\mathrm{PV}}$ satisfies \Cref{def:nice}.

    Let \(C^*\) be a coalition of malicious QPT committers, and suppose that it violates computational binding with probability \(p\) over \((\tau,\rho_{\reg{R}},b_{\mrm{com}})\gets\langle{C}^*[\msf{S}^*],{R}\rangle(1^\secpar)\). By the extractability of the underlying commitment scheme, a QPT extractor obtains the committed value \((\msf{st}^*,\Set{\msf{sk}_{v, u}}_{v \in [n], u \in [|\msf{L}|]})\), except with negligible probability. Moreover, the binding property of $\msf{ExtCom}$ ensures that \(C^*\) cannot successfully open the commitment to any trajectory \(\msf{st}\neq\msf{st}^*\), except with negligible probability. Hence, \Cref{cond:binding}  in \Cref{def:stc}  holds, and a violation of computational binding implies that \(b_{\mrm{com}}=1\) and \(\msf{st}^*\not\subseteq\msf{S}^*\) with probability at least \(p-\negl(\secpar)\).

    We construct a QPT adversary \(\mcal{A}\) against the soundness of the underlying spatiotemporal position verification protocol. The adversary \(\mcal{A}\) extracts \((\msf{st}^*, \Set{\msf{sk}_{v, u}}_{v \in [n], u \in [|\msf{L}|]})\) by the extractor of the underlying extractable post-quantum commitment scheme. Since \(\msf{st}^*\not\subseteq\msf{S}^*\), there exists a spacetime point \((L_u, T)\in\msf{st}^*\setminus\msf{S}^*\) can pass the verification. The adversary forwards the corresponding challenges generated by the honest verifiers for \(\Pi_{\msf{PV}}^{t_v}\) to \(C^*\), receives the corresponding ciphertext \(c_{u,v}\), decrypts it and forwards the recovered response to the verifiers while preserving the required timing.

    If \(b_{\mrm{com}}=1\), then, except with the negligible soundness error of the zero-knowledge proof system, the response is valid. Thus, with probability at least \(p-\negl(\secpar)\), \(\mcal{A}\) convinces the honest verifiers that it occupies the spatiotemporal point \((L_u,T)\), even though \((L_u,T)\notin\msf{S}^*\), meaning that \(C^*\) does not occupy this point. By the security of the underlying position-verification protocol, this probability must be negligible. It follows that \(p=\negl(\secpar)\), and hence the spatiotemporal trajectory commitment scheme is computationally binding.

    \para{Extractability.} Similar to the above, we prove the unentangled and entangled cases using the same reduction. For the extractability property, we prove both  \emph{simulation} and \emph{spatiotemporal extraction} properties.
   
    \subpara{Simulation.} Let $C^*[\msf{S}^*]$ be any malicious committer whose coalition occupies the set of spatiotemporal points $\msf{S}^*$ during the commit-stage execution. We construct a two-stage extractor $\msf{Ext}=(\msf{Ext}_0,\msf{Ext}_1)$ satisfying the simulation requirement. The first stage $\msf{Ext}_0(1^\lambda)$ invokes the extractor of the underlying extractable post-quantum commitment scheme against $C^*[\msf{S}^*]$. Let $(\Set{\msf{sk}_{v, u}}_{v \in [n], u \in [|\msf{L}|]},\msf{st}^*)$ denote the value extracted from the commitment. If the commitment has no unique valid opening of this form, $\msf{Ext}_0$ sets $\msf{st}^*:=\bot$. It then outputs $(\msf{st}^*,\rho)$, where the residual state $\rho$ is all information required to continue the interaction with $C^*$. This extraction is performed without placing any component of $\msf{Ext}_0$ at a spatiotemporal point outside $\msf{S}^*$.

        On input $(\mathtt{sim},1^\lambda,\msf{st}^*,\rho)$, the second stage $\msf{Ext}_1$ resumes the interaction with $C^*$ from the residual state $\rho$ and emulates the honest receiver $R$. In particular, $\msf{Ext}_1$ generates and transmits the challenges of the position-verification instances and receives the corresponding encrypted responses. Whenever $\msf{st}^*\neq\bot$, it decrypts these responses using the extracted key $(\Set{\msf{sk}_{v, u}}_{v \in [n], u \in [|\msf{L}|]},\msf{st}^*)$ and checks them against the position-verification predicates associated with the extracted trajectory $\msf{st}^*$. The extractor performs all remaining verification steps, including the zero-knowledge verification procedure, according to the honest-receiver algorithm, and sets $\widetilde{b}_{\mathrm{com}}:=0$ if any of these checks fails. It then outputs
            $\bigl(\widetilde{\msf{ST}}_{C^*},\widetilde{b}_{\mathrm{com}}\bigr)
            \gets
            \msf{Ext}_1(\mathtt{sim},1^\lambda,\msf{st}^*,\rho).$
        
        We now relate the effective extracted value $\Gamma_{\widetilde{b}_{\mathrm{com}}}(\msf{st}^*)$ to the committed value $\operatorname{val}_{\Pi}(\tau,\rho_R)$ from \Cref{def:committed-value}. If $\widetilde{b}_{\mathrm{com}}=1$, then all verification steps accept, and the soundness of the verification procedures, together with the binding property of the underlying commitment scheme, guarantees that $\msf{st}^*$ is the unique valid trajectory to which the commitment can be opened, except with negligible probability. Thus, $\operatorname{val}_{\Pi}(\tau,\rho_R)=\msf{st}^*=\Gamma_{\widetilde{b}_{\mathrm{com}}}(\msf{st}^*)$. If $\widetilde{b}_{\mathrm{com}}=0$, then either no unique valid value was extracted or one of the conditions required for a valid opening fails; hence, the real execution has no unique valid committed trajectory and $\operatorname{val}_{\Pi}(\tau,\rho_R)=\bot=\Gamma_{\widetilde{b}_{\mathrm{com}}}(\msf{st}^*)$. Consequently, except with negligible probability, $\Gamma_{\widetilde{b}_{\mathrm{com}}}(\msf{st}^*)
            \approx
            \operatorname{val}_{\Pi}(\tau,\rho_R).$

        By the simulation guarantee of the underlying extractable post-quantum commitment scheme, the joint distribution of the extracted value and the simulated residual state is computationally indistinguishable from that of the committed value and the corresponding residual state in a real commit-stage execution. Furthermore, $\msf{Ext}_1$ continues the interaction by applying the same efficient honest receiver algorithm to the simulated residual state. Hence, by closure of computational indistinguishability under efficient post-processing and the preceding correspondence between the effective extracted value and the committed value, we have the indistinguishability between the simulated output and the real output. Therefore, $\msf{Ext}$ satisfies the simulation requirement.

        \subpara{Spatiotemporal Extraction.} We follow the same reduction as in the simulation argument. The first stage $\msf{Ext}_0(1^\lambda)$ invokes the extractor of the underlying extractable post-quantum commitment scheme to obtain the committed value $(\Set{\msf{sk}_{v, u}}_{v \in [n], u \in [|\msf{L}|]},\msf{st}^*)$, together with the auxiliary state $\rho$ required to continue the execution. In the proving mode, $\msf{Ext}_1(\msf{prv},1^\lambda,\msf{st}^*,\rho)$ acts as the prover in the trajectory-verification protocol for $\msf{st}^*$. Whenever the verifier sends a challenge corresponding to a point $\alpha\in\msf{st}^*$, $\msf{Ext}_1$ forwards the challenge to $C^*$, receives the resulting encrypted response, decrypts it using $\Set{\msf{sk}_{v, u}}_{v \in [n], u \in [|\msf{L}|]}$, and forwards the recovered response to the external verifier. The (simulatable) extractability  of $\msf{ExtCom}$ guarantees that $(\Set{\msf{sk}_{v, u}}_{v \in [n], u \in [|\msf{L}|]},\msf{st}^*)$ is the uniquely committed value and that the post-extraction state of $C^*$ is preserved in a computationally-indistinguishable sense, while the correctness of the XOR operation guarantees that the response recovered by $\msf{Ext}_1$ is precisely the response generated by $C^*$. Moreover, by the soundness of the zero-knowledge proof system, if the proof at the end of the commit phase is accepted, then, except with negligible probability, the corresponding $\NP$ statement is true. In particular, all conditions encoded in the $\NP$ relation hold: $\msf{st}^*\in\mathsf{T}$, the commitment correctly opens to $(\Set{\msf{sk}_{v, u}}_{v \in [n], u \in [|\msf{L}|]},\msf{st}^*)$, and the decrypted responses is valid for all $\alpha\in\msf{st}^*$. Therefore, acceptance of the final proof certifies the validity of the intermediate protocol executions encoded in the relation. Consequently, if $C^*$ is accepted in the commit phase with probability $\delta(\lambda)\geq\kappa(\lambda)$, then $\msf{Ext}_1$ successfully convinces the external trajectory verifier, i.e., $b'=1$, and extracts a trajectory $\msf{st}^*\in\mathsf{T}$ with probability $\poly(\delta(\lambda))$, as required.

        This finishes the proof of \Cref{thm:security-stc}.

\end{proof}

Based on the adversarial model under consideration, we provide two instantiations.

\para{Instantiation against Unentangled Adversaries.}\label{sec:instantiation} The position-verification protocol proposed in~\cite{ITCS:LiuLiuQia22} satisfies \Cref{def:nice} and is secure against unentangled adversaries. Moreover, assuming QLWE, both the post-quantum extractable commitment and the zero-knowledge argument utilized in \Cref{protocol:stc} can be instantiated in the CRS model. Therefore, we have the following corollary.

\begin{corollary} In the CRS model, assuming QLWE, there exists an extractable spatiotemporal trajectory commitment scheme that is secure against unentangled QPT adversaries.
\end{corollary}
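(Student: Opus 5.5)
The corollary will follow by instantiating each building block of \Cref{thm:security-stc} in the CRS model under QLWE and then invoking that lemma in its unentangled variant. The lemma needs three ingredients, and I would verify each one directly against the relevant definition.

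\begin{enumerate}
\item \textbf{Position verification.} Take $\Pi_{\mathsf{PV}}$ to be the protocol of~\cite{ITCS:LiuLiuQia22}.
\begin{itemize}
\item It is a two-message challenge--response protocol with a classical challenge and a classical response.
\item Verification is a deterministic classical predicate $W_\alpha$ on the response, computed from the verifier's randomness and the timing information.
\item In our one-dimensional model, the two towers sit at the endpoints of the interval, so every admissible location lies in their convex hull.
\end{itemize}
Together these give \Cref{def:nice}. Soundness against unentangled QPT coalitions is the main theorem of~\cite{ITCS:LiuLiuQia22}. I would check that its model matches \Cref{sec:model}: synchronized clocks, light-speed channels, and instantaneous computation.

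\item \textbf{Extractable commitment.} Take $\msf{ExtCom}$ to be the QLWE-based extractable commitment of~\cite{C:DFLSS09} in the CRS model. It satisfies \Cref{def:extcom}, as already noted after that definition.

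\item \textbf{Zero knowledge.} Take $\Pi_{\mathsf{ZK}}$ to be the constant-round CRS-model argument obtained from the same commitment via the post-quantum variant of~\cite{TCC:Rosen04}, as noted after \Cref{def:zero-knowledge}. Its language is the consistency relation of \Cref{step:zk}. This relation is in $\NP$ because each $W_\alpha$ is classical and polynomial time, and membership $\msf{st}\in\msf{T}$ can be checked efficiently over the polynomial-size mesh.
\end{enumerate}

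With all three hypotheses in place, \Cref{thm:security-stc} yields \Cref{def:ext-stc} against unentangled adversaries.

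The one step needing care is a compatibility check rather than a new argument. The CRS must not give the adversary a resource that the unentangled PV reduction forbids. It is classical and public, so it is distributed to all coalition members before execution, and the PV reduction still faces an unentangled coalition.

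In the same vein, the reductions in the proof of \Cref{thm:security-stc} run the commitment extractor using the CRS trapdoor. I would note that this extraction is classical post-processing that can be placed at points already occupied by the coalition, so it respects the requirement that no part of $\msf{Ext}$ lies outside $\msf{S}^*$. I expect this placement point to be the main obstacle, but it is resolved by generating the CRS with its trapdoor before the PV instances begin.
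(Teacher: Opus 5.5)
Your proposal is correct and follows the paper's route. The paper also obtains the corollary by instantiating $\Pi_{\mathsf{PV}}$ with the protocol of \cite{ITCS:LiuLiuQia22}, taking the extractable commitment and zero-knowledge argument in the CRS model under QLWE, and invoking \Cref{thm:security-stc} in its unentangled variant; your extra checks (that the classical CRS adds no entanglement, and that trapdoor extraction stays within $\msf{S}^*$) are not spelled out there but are taken for granted from the lemma's proof.
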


\para{Instantiation against Entangled Adversaries (as Defined in~\Cref{sec:model}).} The position-verification protocol presented in~\cite{C:Unruh14} achieves security against entangled adversaries (with polynomial queries) in the quantum random-oracle model (QROM), without any extra assumptions. This protocol satisfies \Cref{def:nice}. Then, using known QROM based post-quantum extractable commitment and zero-knowledge proof systems, we have the following \Cref{cor:entangled}.

However, we emphasize that this corollary does not
follow by simply replacing the commitment and ZK protocols
in \Cref{protocol:stc} with their QROM instantiations.
Such a replacement is insufficient because
\Cref{step:zk} of \Cref{protocol:stc} requires a ZK proof
of the validity of the extractable commitment execution
in \Cref{step:commit}. In the QROM instantiation, the
commitment algorithm accesses a random oracle, so the
statement to be proved involves oracle evaluations
that cannot be represented by an explicit circuit
for the oracle. This issue can be resolved by a straightforward
adaptation of the black-box commit-and-prove technique
of~\cite{C:CCLY22}. Specifically, one can instantiate
\Cref{step:commit,step:zk} jointly using an extractable
commit-and-prove protocol adapted from~\cite{C:CCLY22},
which uses the QROM-based extractable commitment
only as a black box. This yields the result claimed
in \Cref{cor:entangled}.

\begin{corollary}[of \Cref{thm:security-stc} and \cite{C:CCLY22}]\label{cor:entangled} 
    In the QROM, there exists an extractable spatiotemporal trajectory commitment scheme secure against entangled adversaries (with polynomial queries).
\end{corollary}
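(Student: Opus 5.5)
The statement is \Cref{cor:entangled}. The plan is to rerun the proof of \Cref{thm:security-stc} with \Cref{step:commit,step:zk} of \Cref{protocol:stc} replaced by one black-box extractable commit-and-prove subprotocol, and to check that every place the proof used $\msf{ExtCom}$ and $\Pi_{\mathsf{ZK}}$ is covered by an interface of that subprotocol. The position-verification component is Unruh's QROM protocol~\cite{C:Unruh14}. First I will verify that it satisfies \Cref{def:nice}:
\begin{itemize}
\item it is a one-round challenge--response protocol;
\item its acceptance predicate $W_\alpha$ is classical and deterministic once the verifier's randomness is fixed;
\item in dimension one, two towers at the interval endpoints contain every admissible location in their convex hull.
\end{itemize}
The oracle issue appears here already: $W_\alpha$ may query $H$. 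However, the verifier's check depends only on the classical response and the verifier's own secrets, so I will treat $W_\alpha$ as an oracle predicate and carry the oracle dependence into the next step.

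Next I will construct the replacement for \Cref{step:commit,step:zk}, adapting the black-box commit-and-prove of~\cite{C:CCLY22} to QROM primitives. The committer encodes $(\msf{st},\Set{\msf{sk}_{v,u}})$ via MPC-in-the-head: it secret-shares the witness, emulates the views of an MPC computing the relation $R$ of \Cref{step:zk}, and commits to each view with the QROM extractable commitment of~\cite{EC:ABKK23}. This commitment must be sent at $t_{\msf{init}}$. Consistency is checked only after the public transcript $\msf{M}$ is known, so the MPC-in-the-head functionality must take $\msf{M}$ as a public input and perform its checks at time $t_{\msf{final}}$. The verifier then opens a random subset of the committed views and checks pairwise consistency. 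The key point is that the MPC parties evaluate only explicit circuits: membership $\msf{st}\in\msf{T}$, the XOR relations $c_{v,u}=\msf{sk}_{v,u}\oplus y_{v,u}$, and $W_\alpha$. For $W_\alpha$, if it needs $H$, I will either have the verifier evaluate the oracle-dependent part outside the MPC on the opened $y$'s, or, preferably, choose Unruh's protocol so that the response check is a fixed comparison against a value the verifier computes itself. Commitments to views are opened only in the black-box sense, so no circuit for $H$ is ever needed.

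With this subprotocol, I will reprove the parts of \Cref{thm:security-stc} in order. Hiding follows from hiding of the QROM commitment, the privacy of the $t$ unopened views from the threshold MPC, one-time-pad secrecy, and the zero-knowledge of the cut-and-choose phase; the last needs a simulator in the QROM, obtained from extractable and equivocal commitments and straight-line extraction of the verifier's challenge. Binding and extractability reuse the original reduction. $\msf{Ext}_0$ straight-line extracts the committed views with the~\cite{EC:ABKK23} extractor via oracle recording, so the extractor occupies no points outside $\msf{S}^*$. It then reconstructs $(\msf{st}^*,\Set{\msf{sk}_{v,u}})$ from a majority of consistent views. In $\mathtt{prv}$ mode, $\msf{Ext}_1$ forwards Unruh challenges, decrypts the responses, and forwards them. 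Entanglement-resilience is inherited, since the reduction adds no new agents and no shared entanglement. The remaining obligation is that the reduction's simulated oracle must stay consistent with the external Unruh verifier's oracle. I will handle this by letting the reduction use the real $H$ for the PV instances and compressed-oracle simulation only for the commitment's independent oracle, via domain separation.

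The main obstacle is soundness amplification. Cut-and-choose yields only inverse-polynomial soundness, but the knowledge-extraction requirement $\poly(\delta)$ needs the decoding of the extracted views to fail with negligible probability. I would first try the standard fix of~\cite{C:CCLY22}: take $\omega(\log\secpar)$ opened views together with an MPC that has robustness against $t$ corrupted views, so that inconsistent views surviving the check are error-corrected except with negligible probability. I would then verify that the extracted state is preserved jointly with the value, as \Cref{def:extcom} requires.
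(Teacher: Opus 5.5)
This is essentially the paper's argument. You use Unruh's QROM protocol~\cite{C:Unruh14} as the nice $\Pi_{\mathsf{PV}}$, jointly replace \Cref{step:commit,step:zk} of \Cref{protocol:stc} by a \cite{C:CCLY22}-style black-box extractable commit-and-prove that touches the QROM extractable commitment only through its interface (so no circuit for the oracle is ever needed), and then rerun the proof of \Cref{thm:security-stc}. Your additional points are details the paper leaves inside its ``straightforward adaptation'': domain-separating the PV oracle from the commitment oracle, robustness-based amplification of the cut-and-choose, and performing Unruh's equality test against the verifier value recorded in the public transcript inside the in-the-head computation. Drop your first option of evaluating $W_\alpha$ on opened responses, since that would reveal which responses are dummies and hence the trajectory.
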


\section{UC-secure Spatiotemporal Multi-party Computation}
\label{sec:st-mpc}
\subsection{The SUC Framework}

We present the real-world execution, the ideal process, and the definition of \emph{Spatiotemporal UC} (SUC) realization for spacetime-aware multi-party functionalities. We then state the hybrid-model composition theorem for this notion. The exposition closely follows its ``ordinary'' UC analogue~\cite{FOCS:Canetti01,STOC:CLOS02,EC:Unruh10}.

\para{Protocol syntax.} Following the standard formulation of protocols as systems of quantum polynomial-time machines, we model a protocol as a collection of such machines, each specifying the program executed by a single party. These machines may exchange quantum information. Adversarial entities are likewise modeled as quantum polynomial-time (QPT) machines, and may receive arbitrary non-uniform auxiliary information.

\para{The basic framework.} The security of protocols that carry out a given task (or solve a given protocol problem) is defined in three steps. First, we formalize the execution of a protocol in the presence of an adversary and within a given computational environment; this is referred to as the \emph{real-world model}. Second, we formalize an ideal process for carrying out the task. In this ideal process, the parties do not communicate directly with one another. Instead, they interact with an \emph{ideal functionality}, which can be viewed as an incorruptible trusted party programmed to capture the desired behavior of the task. Finally, a protocol is said to SUC-realize an ideal functionality if its real-world execution \emph{emulates} the corresponding ideal process. We next describe the real-world execution model, the ideal process, and the notion of protocol emulation.

We focus on a computational model intended to capture realistic communication networks. Communication takes place over an asynchronous public network that does not guarantee message delivery. We assume authenticated communication; thus, the adversary cannot modify messages sent by honest parties or inject messages on their behalf. However, the adversary controls message delivery: it may deliver any previously sent message at most once or prevent its delivery altogether. Because the network is asynchronous, messages need not be delivered in the order in which they were sent. Parties may be corrupted adaptively during the execution, after which their behavior may be controlled arbitrarily by the adversary. Upon corrupting a party, the adversary obtains its entire internal state, including all retained information from prior stages of the execution. Finally, all entities involved in the execution are restricted to quantum polynomial-time (QPT), or \emph{feasible}, computation.

\para{Protocol execution in the real-world model.} We sketch the execution of a protocol $\pi$, represented by quantum polynomial-time machines $P_1,\ldots,P_n$, in the presence of an adversary $\Adv$ and an environment $\mathcal{Z}$, where all machines are parameterized by a security parameter $\secpar \in\mathbb{N}$. The execution proceeds as a sequence of activations, beginning with the activation of $\mathcal{Z}$. During each activation, the active machine may transfer designated classical or quantum registers to another machine. It then enters a waiting state and activates the recipient machine.

Whenever the adversary is activated, it may either deliver a message to a party or corrupt a party. Only messages previously sent by a party may be delivered, and each such message may be delivered at most once. Upon corrupting a party, the adversary obtains access to its entire internal state, including all retained information from prior stages of the execution, and controls all of its subsequent actions. The corrupted party may thereafter behave arbitrarily. In addition, the environment is notified whenever a party is corrupted. If, during its activation, the adversary delivers a message to an uncorrupted party, that party is activated after the adversary's activation ends. Otherwise, the environment is activated next.

Whenever a party is activated, either upon receiving an input from the environment or upon receiving a message delivered by the adversary, it executes its prescribed code and may send messages to other parties. The protocol execution terminates when the environment completes an activation without sending a message to any other entity. The output of the execution is defined as the output of the environment, which we assume to be a single bit.

In summary, the order of activations is as follows. The environment $\mcal{Z}$ is always activated first. During its activation, $\mcal{Z}$ may activate either the adversary $\Adv$ or a party $P_i$. If $P_i$ is activated directly by $\mcal{Z}$, it receives an input of the form $(x_i,\msf{st}_i)$, where $x_i$ is the classical input provided by $\mcal{Z}$ and $\msf{st}_i$ represents the trajectory of $P_i$; during the execution of the protocol, $P_i$ moves along this trajectory $\msf{st}_i$. If $\Adv$ is activated, it may either return control to $\mcal{Z}$ or activate a party $P_i$ by delivering a message to it. After the activation of $P_i$ is complete, control is always returned to $\mcal{Z}$. We emphasize that exactly one entity is active at any point in the execution. Furthermore, during each activation, $\mcal{Z}$ and $\Adv$ may each activate at most one other entity.

Let $\mathrm{REAL}_{\pi,\Adv,\mcal{Z}}(k,z,\bar{r})$ denote the output of the environment $\mcal{Z}$ in an execution with the adversary $\Adv$ and parties $P_1,\ldots,P_n$ running protocol $\pi$, where $k$ is the security parameter, $z$ is the input to $\mcal{Z}$, and $\bar{r}=(r_{\mcal{Z}},r_{\Adv},r_1,\ldots,r_n)$ specifies the randomness of all entities:\footnote{We remark that a quantum machine does not require a random tape, since it can generate randomness through quantum measurements as needed. Nevertheless, we retain random tapes for generality, allowing the model to accommodate settings in which some parties are classical machines.} $r_{\mcal{Z}}$ for the environment, $r_{\Adv}$ for the adversary, and $r_i$ for party $P_i$. Let $\mathrm{REAL}_{\pi,\Adv,\mcal{Z}}(k,z)$ denote the random variable induced by $\mathrm{REAL}_{\pi,\Adv,\mcal{Z}}(k,z,\bar{r})$ when $\bar{r}$ is chosen uniformly at random.

\para{The ideal process.} The security of a multiparty protocol is defined by comparing its execution in the real-world model with an ideal process for carrying out a single instance of the task under consideration. A central component of the ideal process is an ideal functionality that captures the desired behavior, or specification, of the task. The ideal functionality is modeled as a quantum polynomial-time (QPT) machine that interacts with the environment and the adversary according to the process described below. More specifically, the ideal process consists of an ideal functionality $\mcal{F}$, an ideal-process adversary (or simulator) $\mcal{S}$, an environment $\mcal{Z}$ with input $z$, and dummy parties $\widetilde{P}_1,\ldots,\widetilde{P}_n$.

In our setting, the input associated with a party $P_i$ is of the form $(x_i,\msf{st}_i)$, where $x_i$ denotes the logical input of $P_i$ and $\msf{st}_i \in \msf{T}$ specifies the trajectory along which $P_i$ evolves where $\msf{T}$ is the set of all admissible spacetime trajectories. Thus, upon receiving $(x_i,\msf{st}_i)$, the party processes the logical input $x_i$ while advancing along the trajectory $\msf{st}_i$.

As in a protocol execution in the real-world model, the environment is activated first. During each activation, it receives information from the dummy parties and the simulator and may send information to either a single dummy party or the simulator. In particular, the environment may provide a dummy party $\widetilde{P}_i$ with an input of the form $(x_i,\msf{st}_i)$, instructing it to process the logical input $x_i$ while advancing along the trajectory $\msf{st}_i$. Once the environment completes its activation, the entity to which it provided input is activated next.

The dummy parties are fixed, simple quantum polynomial-time machines. Whenever a dummy party is activated with an input, it forwards the corresponding information to the ideal functionality $\mcal{F}$. Upon completion of the dummy party's activation, the environment is activated. Dummy parties communicate only with $\mcal{F}$. In principle, messages exchanged between the dummy parties and $\mcal{F}$ are private and cannot be read by the simulator $\mcal{S}$. Each such message, however, consists of two components: a public header and private contents. The header is visible to $\mcal{S}$, whereas the contents are hidden from $\mcal{S}$. The specification of $\mcal{F}$ determines which information is included in the header and which is included in the contents. Certain information must necessarily appear in the public header. For example, the identity of a party to which $\mcal{F}$ sends an output must be public so that the output can be delivered. Beyond such necessary information, the specification of $\mcal{F}$ determines the precise division between the header and the contents.

For all functionalities considered in this work, headers follow a fixed format and specify the type of action, the session identifier, and the identities of the participating parties. For example, a commitment message may have the form $(\msf{commit},\mathit{sid},P_i,P_j,b)$, where $\msf{commit}$ specifies that a commitment is being made, $\mathit{sid}$ is the session identifier, $P_i$ is the committing party, $P_j$ is the receiving party, and $b$ is the committed value. In this example, the public header is $(\msf{commit},\mathit{sid},P_i,P_j)$, while the private contents consist solely of $b$.

Whenever the ideal functionality $\mcal{F}$ is activated, it processes the information it has received and may send messages to the dummy parties and the simulator. Once the activation of $\mcal{F}$ is complete, the environment $\mcal{Z}$ is activated next.

When the adversary $\mcal{S}$ is activated, it gets information from $\mcal{F}$. In contrast, $\mcal{S}$ cannot get the private contents of these messages (unless the recipient of the message of $\mcal{F}$ or a corrupted party). Likewise, $\mcal{S}$ can get the public headers of the message intended for $\mcal{F}$. Then, $\mcal{S}$ can execute one of the following actions. It may either send information to $\mcal{F}$'s, deliver a message between $\mcal{F}$ and $P_i$'s, or corrupt a party. Upon corrupting a party, both $\mcal{Z}$ and $\mcal{F}$ learn the identity of the corrupted party. In addition, the adversary learns all the information of the corrupted party. Finally, the adversary controls the party's action from the time that corruption takes place. 

If the adversary delivered a message to an uncorrupted dummy party $\tilde{P_i}$ or to the functionality $\mcal{F}$ in an activation, then this entity is activated next. Otherwise, the environment $\mcal{Z}$ is activated next.

As in the real-life model, the protocol execution ends when the environment do not send information to any entity. The output of the protocol execution is one-bit output of $\mcal{Z}$.

In summary, the order of activations in the ideal model is as follows. As in the real model, the environment $\mcal{Z}$ is always activated first, and then activates either the adversary $\mcal{A}$ or some dummy party $\tilde{P}_i$. If the adversary $\mcal{A}$ is activated, then it either activates a dummy party $\tilde{P}_i$ or the ideal functionality $\mcal{F}$ by delivering the entity a message, or it returns control to the environment. After the activation of a dummy party or the functionality, the environment is always activated next.

Let $\mathrm{IDEAL}_{\mcal{F},\mcal{S},\mcal{Z}}(k,z)$ denote the output of the environment $\mcal{Z}$ after interacting in the ideal process with adversary $\mcal{S}$ and ideal functionality $\mcal{F}$, on security parameter $k$, input $z$ and random input $\bar{r} = r_Z, r_S, r_F$, as described above, where $z$ and $r_Z$ are for $\mcal{Z}$, $r_S$ is for $\mcal{S}$, and $r_F$ is for $\mcal{F}$.  Let $\mathrm{IDEAL}_{\mcal{F},\mcal{S},\mcal{Z}}(k,z,\bar r)$ denote the random variable describing $\mathrm{IDEAL}_{\mcal{F},\mcal{S},\mcal{Z}}(k,z,\bar{r})$ when $\bar{r}$ is uniformly chosen.

\para{SUC security in the ideal process.} Having defined the real-life model and the ideal process augmented with spacetime trajectories, we now define what it means for a protocol $\pi$ to SUC-realize an ideal functionality $\mcal{F}$. Our definition combines computational indistinguishability between the real and ideal executions with a spatiotemporal verification requirement.

Let $\mcal{I}$ denote the set of identities of the corrupted parties. 
The ideal-process simulator consists of two components, $\mcal{S}=(\mcal{S}_0,\mcal{S}_1)$. During the execution, $\mcal{S}_0$ is first invoked and outputs strings $\Set{(x_i^*, \msf{st}_i^*)}_{i \in \mcal{I}}$ (they are supposed to be the extracted effective input for each corrupted party) and a residual state $\rho$ (meant to be used by $\mcal{S}_1$). We denote this procedure by 
$$\big(\Set{(x_i^*, \msf{st}_i^*)}_{i \in \mcal{I}}, \rho\big) \gets \mcal{S}_0(1^\secpar).$$

The component $\mcal{S}_1$ operates in two modes: $\mathtt{sim}$ mode and $\mathtt{prv}$ mode. The $\mathtt{sim}$ mode is activated immediately after $\mcal{S}_0$ produces its output. In this mode, $\mcal{S}_1(1^\secpar, \Set{(x_i^*, \msf{st}_i^*)}_{i \in \mcal{I}}, \rho)$ continues the ideal-world execution. It behaves analogously to an ideal-world simulator in the standard UC framework: it submits a value on behalf of each corrupted party, receives the corresponding output from $\mcal{F}$, and forwards that output to the environment. In addition, it must obey the following rule:
\begin{itemize}
    \item For each $i \in \mcal{I}$, it decides whether to submit $(x_i^*, \msf{st}_i^*)$ or $\bot$ on behalf of party $i$ through the prescribed ideal-process interface. We emphasize that $\mcal{S}_1$ cannot modify the extracted values $(x_i^*, \msf{st}_i^*)$; it may only decide whether to forward each pair unchanged to $\mcal{F}$ or to submit $\bot$ instead.
\end{itemize}
For every $i \in \mcal{I}$ and $\msf{st} \in \msf{T}$, we define
\begin{equation}\label{eq:define-delta-st}
    \delta(\secpar, i, \msf{st})
    \coloneqq
    \Pr\bigl[\text{$\mcal{S}_1$  in $\mathtt{sim}$ mode  sends  $(*,\msf{st})$ to $\mcal{F}$, in name of party $i$} \bigr],
\end{equation}
where $*$ is a wildcard and the probability is taken over all (classical and quantum) randomness in the entire ideal-world execution.

In the $\mathtt{prv}$ mode, $\mcal{S}_1$ is supposed to act as the prover in an associated spatiotemporal trajectory verification protocol $\Pi'$ for the extracted trajectory claims (see \Cref{def:uc-smpc}).

With this setup, we present the formal definition of SUC-realization in \Cref{def:uc-smpc}. In essence, SUC security requires that the real and ideal executions be computationally indistinguishable to every non-uniform QPT environment. In addition, it requires that non-negligible success in the simulation experiment imply non-negligible acceptance in the associated spatiotemporal trajectory verification.

\begin{definition}[SUC-realization]
    \label{def:uc-smpc}
    Let $\pi$ be an $n$-party protocol executed by $n$ parties with the common input $1^\secpar$. 
     Each party $P_i$ has an input $(x_i,\msf{st}_i)$, where $\msf{st}_i\in\msf{T}$ describes its spacetime trajectory. Let $\mcal{F}$ be an ideal functionality with the corresponding input and output interfaces. We say that $\pi$ \emph{SUC-realizes} $\mcal{F}$ if, for every non-uniform quantum polynomial-time real-world adversary $\mcal{A}$ and every non-uniform quantum polynomial-time environment $\mcal{Z}$, there exists a non-uniform quantum polynomial-time ideal-process simulator $\mcal{S}=(\mcal{S}_0,\mcal{S}_1)$, a spatiotemporal verification protocol $\Pi' = (\mcal{P}', \mcal{V}', \msf{T}')$ (as per \Cref{def:STV}), and a polynomial $\poly(\cdot)$ satisfying the aforementioned syntax such that the following conditions hold:
    \begin{itemize}
        \item  Let $\msf{S}^* = \Set{(L_i, T_i)}_i$ denote the set of spatiotemporal points that (some part of) $\mcal{A}$ \emph{has occupied} (see the discussion in \Cref{rem:malicious-committer-trajectory}) during the execution of the protocol. 
        \item No parts of $\mcal{S}=(\mcal{S}_0,\mcal{S}_1)$ can occupy any spatiotemporal point outside the set $\msf{S}^*$;
    \end{itemize}

    \begin{enumerate}
            \item \textbf{Indistinguishability.}  It holds that
            \begin{align*}
                & \left\{
                    \mathrm{REAL}_{\pi,\mcal{A},\mcal{Z}}(1^\lambda)
                \right\}_{\lambda\in\mathbb{N}} \\ 
                \cind ~ 
                &
                \left\{
                    \mathrm{IDEAL}_{\mcal{F},\mcal{S}_1(\mathtt{sim}, \Set{(x^*_i,\msf{st}^*_i)}_{i \in \mcal{I}}, \rho),\mcal{Z}}(1^\lambda)~:~ \big(\Set{(x^*_i,\msf{st}^*_i)}_{i \in \mcal{I}},\rho\big) \gets \mcal{S}_0(1^\lambda) 
                \right\}_{\lambda\in\mathbb{N}}.
            \end{align*}
  
        \item \textbf{Spatiotemporal verification.} \label[Condition]{cond:spacetime-verification} Let $\kappa(\secpar)$ be a knowledge error. For all $i \in \mcal{I}$ and all  $\msf{st}^*_i \in \msf{T}$, if
        $\delta(\secpar, i, \msf{st}^*_i) \ge \kappa(\secpar)$, then it holds that
        $$\Pr[b'=1~:b' \gets \langle \mcal{S}_1(\mathtt{prv}, \rho), \mcal{V}'\rangle(1^\lambda, \msf{st}^*_i)] = \poly\big(\delta(\secpar, i, \msf{st}^*_i)\big),$$  where $b' \gets \langle \mcal{S}_1(\mathtt{prv}, 1^\secpar,  \msf{st}^*_i, \rho), \mcal{V'} \rangle(1^\secpar, \msf{st}^*_i)$ denotes the execution of $\Pi'$ between $\mcal{S}_1(\mathtt{prv}, 1^\secpar)$ acting as the prover and the honest verifier $\mcal{V'}(1^\secpar,  \msf{st}^*_i)$, who outputs $b'$ after the interaction.  
            \end{enumerate}
    \end{definition}

\para{Composition.}
We state the spatiotemporal analogue of the standard UC composition theorem. This SUC composition theorem will later be used to reduce the security of \Cref{UC-ST-MPC} in the real model to its security in the corresponding hybrid model.

\begin{theorem}[Spatiotemporal Universal Composition Theorem ]\label{thm:suc-comp} 
Let $\Pi$ be multiple-party protocol in the $(\mcal{F}_1,\ldots,\mcal{F}_m)$-hybrid model, where $\mcal{F}_1,\ldots,\mcal{F}_m$ are ideal functionalities used as subroutines by $\Pi$. Suppose that $\Pi$ SUC-realizes a spacetime-aware functionality $\mcal{F}$ in the $(\mcal{F}_1,\ldots,\mcal{F}_m)$-hybrid model.

For each $\ell\in[m]$, let $\pi_\ell$ be a protocol that SUC-realizes $\mcal{F}_\ell$. Assume that the composed protocol is subroutine respecting and subroutine exposing. Let
\[
    \Pi[\pi_1,\ldots,\pi_m]
\]
denote the protocol obtained from $\Pi$ by replacing each ideal call to $\mcal{F_\ell}$ with an execution of $\pi_\ell$.

Then $\Pi[\pi_1,\ldots,\pi_m]$ SUC-realizes $\mcal{F}$ in the real model.
\end{theorem}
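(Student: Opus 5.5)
The plan is to follow the structure of the standard UC composition proof \cite{FOCS:Canetti01,EC:Unruh10}, handling the two clauses of \Cref{def:uc-smpc} separately. First, by induction on $m$ (replacing one $\mcal{F}_\ell$ at a time), it suffices to treat a single subroutine $\pi=\pi_\ell$ realizing $\mcal{F}_\ell$. Fix a real-world adversary $\mcal{A}$ against $\Pi[\pi]$. Following the usual route, we first reduce to the dummy adversary, which forwards everything to $\mcal{Z}$. Since $\pi$ SUC-realizes $\mcal{F}_\ell$, there is a simulator $\mcal{S}^\pi=(\mcal{S}^\pi_0,\mcal{S}^\pi_1)$ for the dummy adversary, together with a verification protocol $\Pi'_\pi$. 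Composing $\mcal{A}$ with the copies of $\mcal{S}^\pi$ (one per subroutine session, with distinct session identifiers by subroutine respecting/exposing) yields a hybrid-model adversary $\mcal{A}_H$ against $\Pi$ in the $\mcal{F}_\ell$-hybrid model. The hypothesis on $\Pi$ then gives $\mcal{S}^\Pi=(\mcal{S}^\Pi_0,\mcal{S}^\Pi_1)$ and $\Pi'_\Pi$ for $\mcal{A}_H$. The final simulator is $\mcal{S}^\Pi$ itself.

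\textbf{Indistinguishability.} This follows from the usual hybrid argument. Suppose an environment $\mcal{Z}$ distinguishes the execution of $\Pi[\pi]$ with $\mcal{A}$ from the execution of $\Pi$ with $\mcal{A}_H$ in the $\mcal{F}_\ell$-hybrid model. Then we build an environment $\mcal{Z}_\pi$ that internally runs $\mcal{Z}$, the parties of $\Pi$, and $\mcal{A}$, and externally interacts with either $\pi$ or $\mcal{F}_\ell$ with $\mcal{S}^\pi$. This $\mcal{Z}_\pi$ contradicts the indistinguishability clause for $\pi$; with polynomially many sessions, a standard session-by-session hybrid is used. The hybrid-model clause for $\Pi$ then gives indistinguishability from $\mathrm{IDEAL}_{\mcal{F},\mcal{S}^\Pi,\mcal{Z}}$, and transitivity finishes. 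Because all machines are QPT and quantum auxiliary states are carried inside $\mcal{Z}_\pi$, the quantum version \cite{EC:Unruh10} applies verbatim.

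\textbf{Spatiotemporal verification and physical admissibility.} For a corrupted $i$ and a trajectory $\msf{st}$ with $\delta(\secpar,i,\msf{st})\ge\kappa$, the probability in \Cref{eq:define-delta-st} refers only to $\mcal{S}^\Pi$ acting in the ideal world for $\mcal{F}$. So \Cref{cond:spacetime-verification} for $\Pi$ (applied to $\mcal{A}_H$) directly supplies $\mcal{S}^\Pi_1(\mathtt{prv},\rho)$ convincing $\mcal{V}'_\Pi$ with probability $\poly(\delta)$. We take $\Pi'=\Pi'_\Pi$. The real content, and what I expect to be the main obstacle, is the side condition that no part of the composed simulator occupies a point outside $\msf{S}^*$, where $\msf{S}^*$ is defined relative to $\mcal{A}$ in the \emph{real} composed execution. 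Here $\mcal{S}^\Pi$ internally runs $\mcal{A}_H$, which contains $\mcal{A}$ and the copies of $\mcal{S}^\pi$. The copies of $\mcal{S}^\pi$ occupy only points occupied by the dummy adversary in $\pi$; those points are in turn points of $\mcal{A}$, since the dummy adversary's physical presence is exactly the presence of whoever drives it. Thus the occupied set of $\mcal{A}_H$ is contained in $\msf{S}^*$, and the hypothesis on $\Pi$ (applied with $\mcal{A}_H$'s occupied set) keeps $\mcal{S}^\Pi$ inside it. I would make this precise by assigning each component of $\mcal{A}_H$ the location of the component of $\mcal{A}$ that invokes it. I would also argue that invoking $\mcal{S}^\pi$ introduces no new communication channel or entanglement: it runs co-located with $\mcal{A}$'s components and uses only resources of the same entanglement class. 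This is precisely where the "compatibility conditions" of the overview must be invoked as part of the subroutine-respecting/exposing hypotheses.

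A residual subtlety arises because $\delta$ for $\mcal{A}_H$ and for $\mcal{A}$ might differ. However, the definition of $\delta$ refers to the final simulator's submissions, which are identical objects, so no loss occurs. The polynomial in the conclusion is the one provided for $\Pi$, composed at most $m$ times across the induction.
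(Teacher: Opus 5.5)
\textbf{The physical-certification clause has a genuine gap.} Your indistinguishability argument is the standard one the paper also invokes. The problem is the spatiotemporal-verification part: you take the $\mathtt{prv}$ mode wholesale from the hybrid-model hypothesis on $\Pi$. You introduce the subroutine data $\mcal{S}^{\pi}_1(\mathtt{prv})$ and $\Pi'_\pi$ but never use them.

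Read this way, the hypothesis fails for exactly the protocols the theorem is meant for, because in the $\mcal{F}_\ell$-hybrid world nothing checks physical presence. For example, $\mcal{F}^{\msf{ST}}_{\msf{CnP}}$ records whatever trajectory a corrupted party sends. A hybrid adversary sitting at a single point far from every admissible trajectory can commit to some $\msf{st}\in\msf{T}$ and then act honestly. For any $F$ whose output depends on $\msf{st}_i$, $\mcal{S}^\Pi$ must then forward $\msf{st}$ to $\mcal{F}$ with $\delta\approx 1$. Yet no prover confined to that point can pass a sound $\Pi'$ on $\msf{st}$. So \Cref{UC-ST-MPC} does not SUC-realize its functionality in the $\mcal{F}^{\msf{ST}}_{\msf{CnP}}$-hybrid model in your sense, and the statement you prove is vacuous for it.

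The missing idea is the one the paper relies on (the ``On Composition'' paragraph of \Cref{sec:overview-framework}, and the spacetime-verification step in the proof of \Cref{thm:sucmpc}). Physical certificates originate in the subroutines and must be carried through the replacement. The composed simulator retains the claims extracted by the copies of $\mcal{S}^{\pi_\ell}$ together with their $\mathtt{prv}$-mode strategies, and its own $\mathtt{prv}$ mode runs them. You then need two things:
\begin{enumerate}
\item[(a)] Whenever the composed simulator submits $\msf{st}$ to $\mcal{F}$ with probability $\delta$, the corresponding claims were extracted and accepted at the $\mcal{F}_\ell$ level with comparable probability. A meaningful hybrid-model notion for $\Pi$ must encode this.
\item[(b)] $\mcal{S}^{\pi_\ell}_1(\mathtt{prv})$, whose environment now consists of $\mcal{Z}$, the rest of $\Pi$, and $\mcal{A}$, is still an admissible prover for $\Pi'_{\pi_\ell}$.
\end{enumerate}
The paper's ``no forbidden communication path, no additional pre-shared entanglement'' condition is about (b). Your occupancy check $\msf{S}^*_H\subseteq\msf{S}^*$ for $\mcal{S}^\Pi$ is fine, but it is not where the content lies.

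\textbf{A secondary point concerns quantifier order.} \Cref{def:uc-smpc} is phrased as ``for every $\mcal{A}$ and every $\mcal{Z}$ there exists $\mcal{S}$.'' Your dummy-adversary step and session-by-session hybrid instead use a single $\mcal{S}^\pi$ for all sessions and all environments, i.e., the standard UC order. With environment-dependent simulators, the simulator for session $k+1$ depends on an environment that embeds the simulators of sessions $1,\ldots,k$. Over polynomially many sessions, this nesting need not stay polynomially bounded. Indeed, Hofheinz and Unruh showed, assuming time-lock puzzles, that computational security with such simulators is not in general closed under concurrent composition of polynomially many instances. You should state explicitly that you read the definition with $\mcal{S}$ chosen before $\mcal{Z}$; the paper's sketch implicitly does the same.
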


\begin{proof}[Proof (Sketch)] This proof is a rather straightforward adaptation of the proof techniques for the standard (quantum) UC composition theorem \cite{FOCS:Canetti01,EC:Unruh10}. We thus only provide a sketch of the argument below.

    For the indistinguishability condition, the SUC framework preserves the interface that allows the environment to interact with the adversary. Hence, the standard UC composition argument applies, and indistinguishability is preserved under subroutine replacement.

    For the spatiotemporal verification condition, replacing a cryptographic subroutine introduces neither a forbidden communication path nor additional pre-shared entanglement into the position-verification reduction. Therefore, the resulting adversarial strategy remains admissible in the underlying physical model, and the spatiotemporal-verification condition is preserved.

\end{proof}

\subsection{SUC Spatiotemporal Commit-and-Prove}
In this section, we first provide a formal definition of UC spatiotemporal  commit-and-prove and then present its construction.

\subsubsection{Definition}

We consider a functionality with the following properties:
\begin{itemize}
    \item It consists of two phases: a \textbf{Commit} phase and a \textbf{Prove} phase.
    \item During the \textbf{Commit} phase, the prover moves along a trajectory $\msf{st}$, which it holds as private input, and commits to $\msf{st}$.
    \item For a public predicate $f(\cdot)$, the \textbf{Prove} phase consists of a zero-knowledge argument for the statement $f(\msf{st})=1$.
\end{itemize}

\para{The ideal functionality.}
Let $\msf{T}$ be the set of admissible spacetime trajectories. The ideal functionality $\mathcal{F}^\msf{ST}_{\msf{CnP}}$ interacts with parties $P$ and $V$. $P$ who moves along a trajectory $\msf{st}$ commits to the spatiotemporal trajectory $\msf{st}$ with $\msf{st}\in \msf{T}$. $V$ zero-knowledge learns only whether the $P$ moves along the trajectory $\msf{st}$ and the corresponding committed values satisfy $f(\msf{st})=1$.

\begin{FigureBox}[label={F-ST-CnP}]{Ideal Functionality $\mathcal{F}^\msf{ST}_{\msf{CnP}}$}

The functionality maintains an initially empty table
$\mathcal{D}$.

\medskip
\noindent
\textbf{Commit phase.} Upon receiving $(\mathtt{commit},\msf{st},\msf{idx})$ from $P$, the functionality records $(\msf{st},\msf{idx})$ in $\mathcal{D}$ and sends $(\mathtt{commit\text{-}done},\msf{idx})$ to $V$.

\medskip
\noindent
\textbf{Prove phase.} Let $f$ be a predicate provided as common input. Upon receiving $(\mathtt{prove}, \msf{idx})$ from $P$, the functionality sends $(\mathtt{prove\text{-}ok},\msf{idx})$ to $V$ if there exists some $\msf{st}$ such that $(\msf{st},\msf{idx})\in\mathcal{D}$ and $f(\msf{st})=1$; otherwise, it sends $(\mathtt{prove\text{-}failed},\msf{idx})$ to $V$.
\end{FigureBox}

\begin{remark}[Commitments to multiple trajectories]\label{remark:st-cnp-multiple}
    The prover may commit to multiple trajectories by invoking the \textbf{Commit phase} repeatedly, using a distinct index for each trajectory. To jointly prove statements about any selected collection of committed trajectories, the prover could a single \textbf{Prove phase} to prove that it knows valid openings of all the selected commitments and that the corresponding trajectories jointly satisfy the prescribed predicates. 
    \end{remark}

\subsubsection{Construction}
We construct a protocol realizing the ideal functionality $\mathcal{F}^\msf{ST}_{\msf{CnP}}$ by modifying the construction in
\Cref{protocol:stc}. The construction uses the following building blocks:
\begin{itemize}
    \item A ``nice'' position-verification protocol $\Pi_{\mathsf{PV}}$ satisfying \Cref{def:nice}. For each spacetime point $\alpha$, let $W_{\alpha}$ denote the verification predicate.

    \item A UC-secure commitment scheme $\msf{UcCom} = \langle P,V\rangle$, where $P$ acts as the committer and $V$ as the receiver, satisfying \Cref{def:uc-commit}.

    \item A UC-secure zero-knowledge proof system $\Pi_{\msf{UC\text{-}ZK}}$ satisfying \Cref{def:uc-zk}.
\end{itemize}

\begin{ProtocolBox}[label={UC-ST-CnP}]{SUC commit-and-prove of spatiotemporal trajectory}
\para{Setup.} This phase is identical to the \textbf{Setup} phase of \Cref{protocol:stc}. The prover $P$ plays the role of the committer $C$, while the verifier $V$ plays the role of the receiver $R$.

\para{Private Input:} The prover $P$ follows a trajectory $\msf{st}=\bigl((L_1,T_1),(L_2,T_2),\ldots,(L_m,T_m)\bigr)\in\msf{T}$, which it holds as its private input.

\para{Commit Phase.} This phase proceeds identically to the \textbf{Commit Phase} of \Cref{protocol:stc}, except that $\msf{ExtCom}$ and $\Pi_{\msf{ZK}}$ are replaced by $\msf{UcCom}$ and $\Pi_{\msf{UC\text{-}ZK}}$, respectively. The prover $P$ plays the role of the committer $C$, while the verifier $V$ plays the role of the receiver $R$.

\para{Prove Phase.} The prover $P$ and verifier $V$ receive a polynomial-time computable predicate $f$. They then execute the $\Pi_{\msf{UC\text{-}ZK}}$ protocol in which $P$ proves that the committed trajectory $\msf{st}$ satisfies $f$; that is, $f(\msf{st})=1$.

\para{Output:} The verifier $V$ accepts if and only if neither phase aborts and the zero-knowledge proof is accepted. Otherwise, $V$ rejects.

\end{ProtocolBox}
\begin{theorem}[SUC Realization of \Cref{UC-ST-CnP}]
    \label{thm:st-cnp}
    Assume that $\Pi_{\mathsf{PV}}$ is a ``nice'' position-verification protocol, as specified in \Cref{def:nice}, that is secure against unentangled (respectively, entangled) adversaries. Further, the $\Pi_{\msf{UC\text{-}ZK}}$ satisfying \Cref{def:uc-zk}, and the $\msf{UcCom}$ satisfying \Cref{def:uc-commit}. Then the protocol in \Cref{UC-ST-CnP} SUC-realizes the ideal functionality \Cref{F-ST-CnP} according to \Cref{def:uc-smpc} against unentangled (respectively, entangled) QPT adversaries.
\end{theorem}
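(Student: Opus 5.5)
We will prove \Cref{thm:st-cnp} by working in the $(\mathcal{F}_{\msf{Com}},\mathcal{F}_{\msf{ZK}})$-hybrid model. That is, we first replace $\msf{UcCom}$ and $\Pi_{\msf{UC\text{-}ZK}}$ by their ideal functionalities, and at the end invoke \Cref{thm:suc-comp} to return to the real protocol. The UC-realizing protocols are purely cryptographic and add no communication paths or entanglement, so the admissibility side condition of that theorem holds. The analysis splits on which party is corrupted. For each case we build $\mcal{S}=(\mcal{S}_0,\mcal{S}_1)$, check \emph{Indistinguishability} with a hybrid argument, and check \emph{Spatiotemporal verification} by reusing the reduction from the extractability part of \Cref{thm:security-stc}.

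\textbf{Corrupted verifier.} No spatiotemporal extraction is needed, so \Cref{cond:spacetime-verification} is vacuous. $\mcal{S}_0$ outputs nothing. In $\mathtt{sim}$ mode, $\mcal{S}_1$ plays $P$ as follows:
\begin{itemize}
\item it sends a $(\msf{Receipt})$ from $\mathcal{F}_{\msf{Com}}$ in place of the commitment to $(\msf{st},\Set{\msf{sk}_{v,u}})$;
\item it sends uniformly random strings in place of the ciphertexts $c_{v,u}$, on the public schedule, whose timing does not depend on $\msf{st}$ because of the one-dimensional tower geometry of \Cref{sec:spatial-dimension-barrier};
\item it answers the consistency proof and the prove-phase $\mathcal{F}_{\msf{ZK}}$ calls using the bit $(\mathtt{prove\text{-}ok}/\mathtt{prove\text{-}failed})$ received from $\mathcal{F}^{\msf{ST}}_{\msf{CnP}}$.
\end{itemize}
In the hybrid world the real and ideal views are identical, because the one-time-pad keys are never revealed to $V^*$ and the ideal functionalities hide everything else.

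\textbf{Corrupted prover.} $\mcal{S}_0$ runs $\mcal{A}$ against an internally emulated $V$ and plays $\mathcal{F}_{\msf{Com}}$ itself, which lets it read off $(\msf{st}^*,\Set{\msf{sk}_{v,u}})$ online with no rewinding. It then plays $\mathcal{F}_{\msf{ZK}}$ for the consistency proof and reads the witness. If the relation fails it sets the effective input to $\bot$; otherwise it outputs $\msf{st}^*$ together with the residual state $\rho$. In $\mathtt{sim}$ mode, $\mcal{S}_1$ submits $(\mathtt{commit},\msf{st}^*,\msf{idx})$ if the consistency proof was accepted and $\bot$ otherwise. It answers each prove phase by checking the $\mathcal{F}_{\msf{ZK}}$ witness and forwarding $(\mathtt{prove},\msf{idx})$ to $\mathcal{F}^{\msf{ST}}_{\msf{CnP}}$, which then computes $f(\msf{st}^*)$ on the same value. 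Because $\mathcal{F}_{\msf{Com}}$ is perfectly binding, the witness must open to exactly $\msf{st}^*$, so the emulation is perfect and indistinguishability follows.

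For $\mathtt{prv}$ mode we take $\Pi'$ to be the trajectory verification obtained by running $\Pi_{\mathsf{PV}}$ at every point of the trajectory, as after \Cref{def:STV}. The mode first re-runs $\mcal{S}_0$ up to the moment the commitment is made. After that it follows the forwarding reduction of \Cref{thm:security-stc}:
\begin{itemize}
\item each external challenge for $\alpha\in\msf{st}^*$ is embedded as the $\Pi^{(t_v)}_{\mathsf{PV}}$ challenge given to $\mcal{A}$;
\item $\mcal{A}$'s ciphertext $c_{v,u}$ is decrypted with the already-known $\msf{sk}_{v,u}$ and relayed to the external verifier within the timing constraints.
\end{itemize}
Decryption is possible because the keys were committed at $t_{\msf{init}}$, before any challenge was sent. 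Whenever $\mcal{S}_1$ in $\mathtt{sim}$ mode would submit $\msf{st}^*$, the ZK relation guarantees that every relayed response satisfies $W_\alpha$. Hence the acceptance probability is at least $\delta(\secpar,P,\msf{st}^*)-\negl(\secpar)$, which is linear and so $\poly(\delta)$.

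The main obstacle is making this $\mathtt{prv}$-mode strategy physically admissible, i.e.\ ensuring that no part of $\mcal{S}$ occupies a point outside $\msf{S}^*$. The point is that decryption is a local classical operation: it can be performed by the coalition component that emits $c_{v,u}$, since that component could itself hold the keys fixed at $t_{\msf{init}}$. The embedded challenges travel along the same paths as the real ones. In the entangled case no new entanglement is introduced. I would argue this explicitly, and also verify that choosing $\msf{st}^*$ conditioned on the external verifier's challenges does not bias the relevant event.
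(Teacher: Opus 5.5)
Your top-level plan matches the paper's. For a corrupted prover you reuse the extract-keys-then-decrypt-and-forward reduction from Lemma~\ref{thm:security-stc}. For a corrupted verifier you replace the one-time-padded responses by dummies and simulate the commitment and the proofs.

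\textbf{The $(\mathcal{F}_{\mathsf{Com}},\mathcal{F}_{\mathsf{ZK}})$-hybrid step has a gap.} $\mathcal{F}_{\mathsf{ZK}}$ checks a fixed relation $R(x,w)$ on a \emph{public} statement $x$. The consistency statement is $x=(\mathsf{M},\mathsf{com})$, and its relation checks an opening of $\mathsf{com}$. Once $\mathsf{UcCom}$ is replaced by $\mathcal{F}_{\mathsf{Com}}$, there is no string $\mathsf{com}$ and no opening, and $\mathcal{F}_{\mathsf{ZK}}$ cannot see the value $\mathcal{F}_{\mathsf{Com}}$ recorded. So ``because $\mathcal{F}_{\mathsf{Com}}$ is perfectly binding, the witness must open to exactly $\mathsf{st}^*$'' has no content: nothing ties the $\mathcal{F}_{\mathsf{ZK}}$ witnesses to the committed $(\mathsf{st}^*,\{\mathsf{sk}_{v,u}\})$.

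Concretely, suppose an adversary occupies the points of some $\mathsf{st}'$.
\begin{enumerate}
\item It commits a physically fake $\mathsf{st}^*$ and keys $\{\mathsf{sk}_{v,u}\}$ to $\mathcal{F}_{\mathsf{Com}}$.
\item It sends ciphertexts that encrypt valid responses for $\mathsf{st}'$ under other keys $\{\mathsf{sk}'_{v,u}\}$.
\item It gives $\mathcal{F}_{\mathsf{ZK}}$ the witness $(\mathsf{st}',\{\mathsf{sk}'_{v,u}\})$.
\end{enumerate}
The hybrid-world verifier accepts, and later accepts proofs of $f(\mathsf{st}')=1$. Your simulator instead submits $\mathsf{st}^*$, so $\mathcal{F}^{\mathsf{ST}}_{\mathsf{CnP}}$ answers according to $f(\mathsf{st}^*)$, and indistinguishability fails. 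Your $\mathtt{prv}$ mode decrypts with the $\mathcal{F}_{\mathsf{Com}}$ keys and relays garbage even though $\delta(\secpar,P,\mathsf{st}^*)$ is large, so spatiotemporal verification fails too. Spotting the mismatch does not help, because the hybrid-world verifier has already accepted. Extracting from the $\mathcal{F}_{\mathsf{ZK}}$ witness instead is too late: it arrives at $t_{\mathsf{final}}$, after the challenges that $\mathtt{prv}$ mode must answer in real time.

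\textbf{How to repair it.} Keep the commitment concrete, with $\mathsf{com}$ inside the ZK statement, as the paper does. For the corrupted prover the paper simply invokes the extractability argument of Lemma~\ref{thm:security-stc}. That argument extracts $(\mathsf{st}^*,\{\mathsf{sk}_{v,u}\})$ straight-line at commit time. ZK soundness plus binding of the commitment with respect to its extractor then force any accepted witness to open $\mathsf{com}$ to the extracted value. For $\mathsf{UcCom}$ you must derive this extraction-binding property from its UC security. The paper's corrupted-verifier case treats the commitment concretely too, running the $\mathsf{UcCom}$ simulator to produce $\widetilde{\mathsf{com}}$. Alternatively, use a joint commit-and-prove functionality in the style of Canetti--Lindell--Ostrovsky--Sahai's $\mathcal{F}_{\mathsf{CP}}$ instead of independent $\mathcal{F}_{\mathsf{Com}}$ and $\mathcal{F}_{\mathsf{ZK}}$.

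\textbf{Where to split $\mathcal{S}_0$ and $\mathcal{S}_1$.} Place the split right after the commit-time extraction, as the paper's $\mathsf{Ext}_0$ does. Move the challenge phase and the consistency proof into $\mathtt{sim}$ mode, which may choose between $\mathsf{st}^*$ and $\bot$. With your $\mathcal{S}_0$ running through the consistency proof, $\rho$ is already past the challenge window, so $\mathcal{S}_1(\mathtt{prv},\rho)$ cannot answer $\mathcal{V}'$. ``Re-running $\mathcal{S}_0$'' is not available, since $\mathcal{A}$'s possibly quantum advice has been consumed.
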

    
\begin{proof}
We consider the cases of a corrupted prover and a corrupted verifier separately.

\para{Corrupted prover.} The security of the this follows directly from the extractability argument in the proof of \Cref{thm:security-stc}. Combining this argument with the composition theorem in \Cref{thm:suc-comp} establishes the desired security claim.

\para{Corrupted verifier.} We establish the security of the commit phase and the prove phase separately. For each case, we construct an ideal-world simulator $\mathcal{S}=(\mathcal{S}_0,\mathcal{S}_1)$ such that, for every non-uniform QPT real-world adversary $\mathcal{A}$ and every non-uniform QPT environment $\mathcal{Z}$, the output of $\mathcal{Z}$ in the real execution is computationally indistinguishable from its output in the ideal execution with $\mathcal{S}$ and $\mathcal{F}_\msf{CnP}^\msf{ST}$. The simulator internally runs $\mathcal{A}$ and simulates the interfaces and messages of the honest parties. 
\begin{itemize}
    \item  \textbf{Commit Phase.} 
    \begin{enumerate}
        \item $\mathcal{S}_0$ sets the extracted trajectory to $\msf{st}^*=\bot$ and outputs the corresponding residual state $\rho$ needed to continue the execution.
        \item Since $\msf{st}^*=\bot$, $\mathcal{S}_1$ does not forward information to the $\mcal{F}$. It then internally runs $\mathcal{A}$ and emulates the honest prover $P$. Upon receiving $(\msf{commit\text{-}done},\msf{idx})$ from $\mathcal{F}_\msf{CnP}^\msf{ST}$, indicating that the honest prover has committed under index $\msf{idx}$, $\mathcal{S}_1$ samples an independent uniformly random key $\Set{\widetilde{\msf{sk}}_{v, u}}_{v \in [n], u \in [|\msf{L}|]}\sample\{0,1\}^{\ell(\lambda)}$ and selects an arbitrary fixed dummy trajectory $\widetilde{\msf{st}}\in\mathsf{T}$ of the appropriate format.
        \item $\mathcal{S}_1$ invokes the simulator of the underlying $\msf{UcCom}$ scheme to generate a simulated commitment $\widetilde{\msf{com}}$ corresponding to $(\Set{\widetilde{\msf{sk}}_{v, u}}_{v \in [n], u \in [|\msf{L}|]},\widetilde{\msf{st}})$ and sends $\widetilde{\msf{com}}$ to $\mathcal{A}$ according to the message schedule prescribed by the protocol.
        \item $\mathcal{S}_1$ pretends to be any possible location and returns all dummy ciphertexts $\widetilde{c}_{u,v}\gets\widetilde{\msf{sk}_{u,v}}\oplus\bot$, using an appropriate fixed-length encoding of $\bot$. These ciphertexts are delivered with the same lengths, number, ordering, and timing as the ciphertexts in an honest execution. And then give a zero-knowledge proof of doing a valid behavior.
        \item At the end of the commit phase, $\mathcal{S}_1$ outputs $\widetilde{\rho}=(\widetilde{\msf{M}},\widetilde{\msf{com}}, \widetilde{\pi})$.
    \end{enumerate}

    \subpara{Indistinguishability.} We show that no QPT environment $\mathcal{Z}$ can distinguish the real commit phase execution from the ideal execution generated by $\mathcal{S}_1$. In the real execution, each position verification response is masked by a fresh uniformly random one-time pad; hence, replacing the encrypted responses, one at a time, with encryptions of $\bot$ preserves their distribution. We then replace the honest commitment with a simulated commitment to an independently sampled dummy pair. By the property of the UC-secure commitment scheme, UC-secure zero-knowledge and \Cref{thm:suc-comp}, this replacement is computationally indistinguishable to $\mathcal{Z}$. The resulting hybrid is distributed exactly as the ideal execution generated by $\mathcal{S}_1$. Therefore, for every QPT environment, it can distinguish the real commit phase execution from the ideal execution with only negligible probability.

   \item \textbf{Prove Phase.} The simulator proceeds as follows.
   \begin{enumerate}
    
    \item
    $\mathcal{S}_1$ invokes the simulator of the UC-secure zero-knowledge proof system to generate a simulated accepting proof for the prove phase statement associated with the transcript produced during the simulated commit phase. In particular, $\mathcal{S}_1$ need not know the honest trajectory, secret key, or decommitment information. It interacts with $\mathcal{A}$ according to the prescribed message schedule and provides it with the resulting simulated proof.

    \item Finally, $\mathcal{S}_1$ records the resulting adversarial state and outputs the view of $\mathcal{A}$ together with the prove-phase outcome. By construction, this outcome agrees with that prescribed by $\mathcal{F}_\msf{CnP}^\msf{ST}$.
    \end{enumerate}

    \subpara{Indistinguishability.} Consider a hybrid in which the real zero-knowledge proof generated using the honest witness is replaced by a simulated proof. By the UC zero-knowledge property of the proof system and \Cref{thm:suc-comp}, this replacement is computationally indistinguishable to the corrupted verifier and any QPT environment. Combining this property with the indistinguishability of the simulated commitment and encrypted responses established for the commit phase, we conclude that the corrupted verifier's joint view across the commit and prove phases is computationally indistinguishable from its view in the real execution.

\end{itemize}
Also, note that in the corrupted-verifier case, $\mcal{S}_1$ do not need to submit any information to the functionality. No spatiotemporal verification argument is required.

This finishes the proof of \Cref{thm:st-cnp}.

\end{proof}

Following \Cref{sec:instantiation}, we obtain the following instantiations.

\para{Instantiation.} The position-verification protocol of \cite{ITCS:LiuLiuQia22} is secure against unentangled adversaries, whereas the protocol of \cite{C:Unruh14} is secure against entangled adversaries in the QROM; moreover, both satisfy \Cref{def:nice}. The required UC-secure commitment scheme and zero-knowledge proof system can also be instantiated either in the CRS model under the LWE assumption \cite{C:DFLSS09,EC:ABKK23} or in the QROM \cite{EC:ABKK23}. We therefore obtain the following corollary.

\begin{corollary}\label{corollary:cnp}
Depending on the adversarial model and setup assumptions, the following instantiations exist:
\begin{itemize}
    \item In the CRS model, assuming QLWE, there exists an SUC-secure commit-and-prove scheme against unentangled QPT adversaries.
    \item In QROM, there exists an SUC-secure commit-and-prove scheme against entangled QPT adversaries.
\end{itemize}
\end{corollary}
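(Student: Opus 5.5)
The corollary follows by instantiating \Cref{thm:st-cnp} with concrete building blocks in each of the two settings. Once its hypotheses are checked, the SUC guarantee, in both its indistinguishability and spatiotemporal-verification parts, transfers directly. The work is therefore to verify that each hypothesis is met by a known primitive, and that the primitives are compatible with the adversarial model.

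\textbf{First bullet (CRS, QLWE, unentangled).} I would take $\Pi_{\mathsf{PV}}$ to be the protocol of \cite{ITCS:LiuLiuQia22}. It is a two-message challenge--response protocol with a deterministic classical verification predicate $W_\alpha$, so it satisfies \Cref{def:nice}, and it is sound against unentangled QPT coalitions. For $\msf{UcCom}$ and $\Pi_{\msf{UC\text{-}ZK}}$ I would use the QLWE-based CRS constructions of \cite{C:DFLSS09}. These give UC commitments realizing $\mathcal{F}_{\msf{Com}}$, and, via the observation after \Cref{def:zero-knowledge}, UC zero-knowledge realizing $\mathcal{F}_{\msf{ZK}}$.

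One point needs an explicit check. \Cref{UC-ST-CnP} proves in zero knowledge an $\NP$ statement about the validity of the $\msf{UcCom}$ transcript. This requires that the commitment's verification be an explicit polynomial-size circuit given the CRS, which holds in the CRS model. Applying \Cref{thm:st-cnp} then yields the first bullet.

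\textbf{Second bullet (QROM, entangled).} I would take $\Pi_{\mathsf{PV}}$ from \cite{C:Unruh14}, which is nice and secure against entangled adversaries making polynomially many oracle queries. For the commitment and zero-knowledge components I would take the QROM constructions of \cite{EC:ABKK23}.

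The main obstacle is the same one flagged before \Cref{cor:entangled}. The consistency statement in the commit phase refers to a commitment computed with random-oracle calls, so it is not an explicit $\NP$ relation, and \Cref{thm:st-cnp} cannot be applied verbatim. I would handle this exactly as for \Cref{cor:entangled}. The commitment step and the consistency proof would be instantiated jointly by a black-box extractable commit-and-prove in the style of \cite{C:CCLY22}, made UC by using the UC commitment of \cite{EC:ABKK23} as the black-box component. I would then check that the two consequences of \Cref{thm:st-cnp}'s hypotheses actually used in its proof still hold for this joint component:
\begin{itemize}
\item UC extraction with simulation, for the corrupted-prover case via \Cref{thm:security-stc};
\item UC zero-knowledge simulation, for the corrupted-verifier case.
\end{itemize}
Given these, the proof of \Cref{thm:st-cnp} goes through unchanged.

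Finally, I would confirm that the entangled setting is preserved. The reductions forward challenges to $C^*$ and decrypt its responses online. This introduces neither new spacetime locations nor new pre-shared entanglement, and the QROM oracle is shared in the same way in both the reduction and the PV experiment. Hence soundness of \cite{C:Unruh14} applies, and \Cref{thm:suc-comp} gives the second bullet.
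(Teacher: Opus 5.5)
Your proposal is correct and follows the paper's argument: the paper obtains the corollary simply by instantiating \Cref{thm:st-cnp} with the nice position-verification protocols of \cite{ITCS:LiuLiuQia22} (unentangled) and \cite{C:Unruh14} (entangled, QROM), together with the UC commitment and UC zero-knowledge of \cite{C:DFLSS09,EC:ABKK23} in the CRS model under QLWE, or of \cite{EC:ABKK23} in the QROM. The paper leaves the oracle-dependence of the consistency relation implicit here (deferring to the discussion preceding \Cref{cor:entangled}), so your explicit \cite{C:CCLY22}-style black-box commit-and-prove fills in that step faithfully; make sure the same black-box component also carries the prove-phase proof of $f(\msf{st})=1$, since that relation likewise refers to the random-oracle-based commitment.
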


\begin{remark}[Local vs. Global Setup for UC Security]
  \label{rmk:local-global-setup}
For UC security with setups like CRS or ROM, we distinguish between local and global setup, following
the treatment of globally shared setup
in~ \cite{TCC:CDPW07,CCS:CanJaiSca14,EC:CDGLN18}.
In the local-setup model considered here, the setup is
a subroutine of the protocol instance under consideration;
in a global-setup model, it is also accessible to the
environment and other protocol instances. 
The limitation of local setup is that the resulting security
guarantee does not automatically extend to arbitrary external
protocols sharing the same CRS or random oracle. In particular,
a proof using independent local setup instances does not,
by itself, justify replacing them with a single globally
shared setup.

Nevertheless, this restriction does not reduce our goal
to stand-alone security: quantum UC security still supports
modular composition and concurrent execution, provided
the composition respects the modeled setup
boundaries~\cite{EC:Unruh10}.
Achieving this guarantee remains nontrivial, since the
simulator must maintain an indistinguishable ongoing
interaction with a quantum environment. In our setting,
it must additionally support certification of the extracted
spatiotemporal claims, as described above. Thus, local setup
provides a meaningful setting in which to establish
composable spatiotemporal security, while leaving
unrestricted setup sharing outside our present scope.
We leave extending our construction to an explicitly
specified global-setup model for future work.

\end{remark}

\subsection{Construction: Spatiotemporal MPC}
In this section, we present a construction for spatiotemporal multiparty computation (MPC). Each party $P_i$ holds a classical input $x_i$ and follows a spatiotemporal trajectory $\msf{st}_i$. The parties jointly evaluate the classical functionality
\[
    F((x_1,\msf{st}_1),\ldots,(x_n,\msf{st}_n)),
\]
while ensuring that the trajectory associated with each MPC message is physically verified at the time the message is transmitted.

The protocol consists of the following two phases:
\begin{itemize}
    \item \textbf{Commit phase.} Each party commits to its trajectory.
    
    \item \textbf{Compute phase.} The parties execute the MPC protocol in conjunction with the UC-secure commit-and-prove protocol for spatiotemporal trajectories. The MPC protocol evaluates the desired functionality, while the commit-and-prove protocol verifies the physical validity of the trajectories associated with the computation.
\end{itemize}

Because the MPC protocol may consist of multiple communication rounds, spatiotemporal trajectory verification is performed separately in each round. More precisely, whenever a party sends an MPC message, it must prove the physical validity of the corresponding portion of its trajectory at the time of transmission. The protocol must additionally guarantee that all such per-round trajectories are consistent with the trajectories to which the parties committed during the commit phase. To formalize this consistency requirement, we utilize a commitment-consistent MPC functionality.

\para{Commitment-consistent PQ-MPC ($\mathsf{CC\text{-}MPC}$).}
Let $C=(C_1,\ldots,C_n)$ be the public commitment vector, where $C_i=\mathsf{Com}(\msf{st}_i;r_i)$. The commitment-consistent MPC functionality $\mathcal{F}_ {\mathsf{CC\text{-}MPC}}$ receives from every party $P_i$ a private input $(x_i,\msf{st}_i,\mathsf{decom}_i)$. It checks that
\[
    \mathsf{Verify}(C_i,\msf{st}_i,\mathsf{decom}_i)=1
\]
for every $i\in[n]$. If any check fails, it outputs $\mathsf{abort}$. Otherwise, it outputs
\[
    F((x_1,\msf{st}_1),\ldots,(x_n,\msf{st}_n)).
\] 

Thus, the functionality guarantees that the trajectories used as inputs to the MPC computation are exactly those fixed by the public commitments.

\para{Round structure and Time Framing.} Recall that, in our treatment to  position/trajectory verification, we assume that local computations are performed instantaneously. This assumption is reasonable because some position/trajectory-verification protocols require only simple local computations, such as measuring qubits in the computational or Hadamard basis. Moreover, even when local computations are not instantaneous, one can typically adjust the verifiers' decision function to account for their duration, under a reasonable, commonly agreed-upon upper bound on the local computation time.

For MPC protocols, however, we choose to make the local computation time explicit for the following reasons: Depending on the target functionality, the local computations in an MPC protocol may be considerably more involved than those in a position/trajectory-verification protocol, and their duration may therefore be significant. Furthermore, modifying the decision function to account for local computation time may become increasingly complicated as the number of parties grows and the network topology becomes more complex. We therefore explicitly account for local computation time in our definition of spatiotemporal MPC.

To this end, we assume that each round $r$ of an MPC protocol is executed within a time interval $(t_{r-1},t_r]$ agreed upon by all parties at the beginning of the protocol. This time interval is further divided int two sub-intervals using a parameter $\Delta_{\msf{proof}}$, which denote the execution time of the prove stage of the trajectory commit-and-prove protocol. Looking forward in \Cref{UC-ST-MPC}:
\begin{itemize}
    \item

During the time interval $(t_{r-1},t_r - \Delta_{\msf{proof}})$, each party performs its local computation for round $r$. At time $t_r - \Delta_{\msf{proof}}$, all parties send their messages to the designated recipients or broadcast them over the public channel. In particular, even if a party completes its local round-$r$ computation before $t_r - \Delta_{\msf{proof}}$, it does not send its messages until time $t_r - \Delta_{\msf{proof}}$. This convention ensures that all round-$r$ messages are sent simultaneously, allowing their spatiotemporal verification to be performed at a common time point.

\item During the time interval $(t_r - \Delta_{\msf{proof}},t_r]$, each party performs the prove stage of the trajectory commit-and-prove protocol to verify the physical validity of its spatiotemporal trajectory during $(t_{r-1} - \Delta_{\msf{proof}},t_r - \Delta_{\msf{proof}}]$. 
\end{itemize}

We will utilize an $L$-round post-quantum MPC protocol $\msf{CC\text{-}MPC}$ that UC-realize the functionality  $\mathcal{F}_{\mathsf{CC\text{-}MPC}}$ defined above. The messages sent in round $r \in [L]$ may depend only on the parties' inputs, local randomness, and messages received in rounds $1,\ldots,r-1$. We associate the $L$ rounds with a strictly increasing sequence of times
\(
    t_0 < t_1 < \cdots < t_L,
\)
and define 
\begin{equation}\label{eq:round-interval}
I_{r, \msf{comp}} \coloneqq(t_{r-1},t_r - \Delta_{\msf{proof}}], ~~~ I_{r, \msf{proof}} \coloneqq(t_r - \Delta_{\msf{proof}},t_r]
\end{equation}  
as the time intervals corresponding to round $r$. 

We assume that the interval $[t_0,t_L]$ is a superset of the entire time span of all parties' spatiotemporal trajectories. This is necessary because, as explained in \Cref{sec:overview-cnp}, a spatiotemporal computation providing physical authentication guarantees cannot compute a future time point that lies beyond the duration of the protocol execution. This assumption can be achieved even if the original protocol $\msf{CC\text{-}MPC}$ can be executed over a shorter time interval: we can deliberately introduce a delay into each interval $I_r$, thereby extending $[t_0,t_L]$ to cover the entire time span of all parties' spatiotemporal trajectories. Specifically, the delay can be implemented by requiring each party to wait for a predetermined amount of time before sending its messages in each round.

For each party $P_i$ that sends a message in round $r$, let
\begin{equation}\label{eq:st-restriction}
    \msf{st}_{i,r, \msf{comp}}\coloneqq\msf{st}_i|_{I_{r, \msf{comp}}}, ~~~~ \msf{st}_{i,r, \msf{proof}}\coloneqq\msf{st}_i|_{I_{r, \msf{proof}}}
\end{equation}
denote the restriction of its trajectory to $I_{r, \msf{comp}}$ and $I_{r, \msf{proof}}$, respectively. The consistency predicate for round $r$ ensures that the trajectory segment used in the corresponding commit-and-prove instance is precisely the restriction of the trajectory committed by $P_i$ during the initial commit phase. Consequently, all per-round physical verifications are tied to a single globally committed trajectory.

\paragraph{Building blocks.}
The construction uses the following building blocks:
\begin{itemize}
    \item a post-quantum commitment scheme $\msf{Com}$ (e.g., Naor's commitment).
    \item a commit-and-prove of spatiotemporal trajectory protocol $\mathsf{TCnP}$ that SUC-realizes the functionality \Cref{F-ST-CnP}. Let $\Delta_{\msf{proof}}$ denote the maximum time required to complete the prove stage of $\mathsf{TCnP}$.
    \item a post-quantum MPC protocol $\mathsf{CC\text{-}MPC}$ in $L$ rounds that UC-realizes the classical functionality $\mathcal{F}_{\mathsf{CC\text{-}MPC}}$.
\end{itemize}

\begin{ProtocolBox}[label={UC-ST-MPC}]{SUC Spatiotemporal MPC}

\para{Setup.} The protocol is executed by $n$ parties and is parameterized by a security parameter $1^\lambda$ and a nonempty set $\msf{T}$ of admissible spatiotemporal trajectories. 
The execution of $\msf{CC\text{-}MPC}$ spans the entire time interval $[t_{\msf{min}}(\msf{T}),t_{\msf{max}}(\msf{T})]$. 

Each party $P_j$ is assisted by two coordinating servers that act as verifiers and perform the required spatiotemporal verification and computation. One of these servers is designated as the main server; it verifies the spatiotemporal proofs submitted by the other parties and exchanges $\msf{CC\text{-}MPC}$ messages with their main servers. Meanwhile, $P_j$ moves along a specific trajectory and acts as a prover, proving the validity of its spatiotemporal trajectory to the other parties without revealing the trajectory itself. Consequently, the exchange of $\msf{CC\text{-}MPC}$ messages reveals no spatiotemporal information about $P_j$.

\para{Private input:} The protocol runs by $n$ parties $\Set{P_j}_{j\in n}$. Each party $P_j$ holds an input $x_j\in\{0,1\}^*$ and moves along with a claimed spatiotemporal trajectory $\msf{st}_j \in \msf{T}$.

\para{Phase 1: Commitment.}
The parties interact so that each party $P_j$ commits to its  trajectory $\msf{st}_j$ using the post-quantum commitment scheme $\msf{Com}$. We denote the commitment generated by $P_j$ as $C_j = \mathsf{Com}\bigl(\msf{st}_j;r_j\bigr)$. Let $C=(C_1,\ldots,C_n)$ denote the public commitment vector. Each party $P_j$ stores its own trajectory $\msf{st}_j$ and the corresponding decommitment information $\mathsf{decom}_j$ for later use in the MPC execution.

\para{Phase 2: Concurrent MPC.}
The underlying protocol $\mathsf{CC\text{-}MPC}$ proceeds according to its prescribed rounds $r=1,\ldots,L$. At each round $r$, the protocol proceeds according to the following interval schedule:
\begin{itemize}

        \item  \textbf{(PQ-MPC execution for round $r$.)}  During the time interval $(t_{r-1},t_r - \Delta_{\msf{proof}})$: each party $P_j$'s main server computes $P_j$'s  round-$r$ message $m^{(j)}_r$ based on its input $x_j$, local randomness, and the messages received in all previous rounds. The messages are sent at time point $t_r - \Delta_{\msf{proof}}$ to the designated main servers.

    In parallel, during the round-$r$ computation time interval $I_{r, \msf{comp}}$ as defined in \Cref{eq:round-interval}, each party $P_j$ acts the committer to execute the commit stage of a $\mathsf{TCnP}$ instance, committing to the trajectory segment $\msf{st}_{j,r, \msf{comp}}$ defined in \Cref{eq:st-restriction}.

        \item \textbf{(TCnP prove phase for round $r$.)}  During the time interval $(t_r - \Delta_{\msf{proof}},t_r]$: each party $P_j$ acts as the prover to execute the prove stage of the $\mathsf{TCnP}$ instance for round $r$, proving that its trajectory segments $\msf{st}_{j,r-1, \msf{proof}}$ and  $\msf{st}_{j,r, \msf{comp}}$, committed using the commit phase of $\mathsf{TCnP}$ during $(t_{r-1} - \Delta_{\msf{proof}}, t_{r-1}]$ and $(t_{r-1},t_r - \Delta_{\msf{proof}}]$ respectively, are consistent with the trajectory $\msf{st}_j$ committed during the initial commit phase. The main server of each party verifies the proof and, if accepted, allows the corresponding round-$r$ message $m^{(j)}_r$ to be used in subsequent steps. If any proof fails, the protocol aborts.
        
        In parallel, during this round-$r$ (for $r=1,\ldots,L-1$) proof time interval $I_{r, \msf{proof}}$ as defined in \Cref{eq:round-interval}, each party $P_j$ acts the committer to execute the commit stage of a $\mathsf{TCnP}$ instance, committing to the trajectory segment $\msf{st}_{j,r, \msf{proof}}$ defined in \Cref{eq:st-restriction}. 
        
        We note that when $r=L$, we do not need to run the commit sage to commit to $\msf{st}_{j,L, \msf{proof}}$. This is because the last round message $m^{(j)}_L$ is already sent at time $t_L - \Delta_{\msf{proof}}$, representing the end of the real computation. No $st_j$ should contain any $(L, T)$ with $T>t_L - \Delta_{\msf{proof}}$.

\end{itemize}

\para{Output.}
If all required $\mathsf{TCnP}$ instances accept and $\mathsf{CC\text{-}MPC}$ terminates with local output $y_j$, party $P_j$ outputs $y_j$. Otherwise it outputs $\mathsf{abort}$.
\end{ProtocolBox}

\begin{theorem}[SUC-realization of~\Cref{UC-ST-MPC}]\label{thm:sucmpc}
    Let $\mathcal{F}_{\msf{MPC}}^{\msf{ST}}$ denote the spatiotemporal MPC functionality that computes $F\bigl((x_1,\msf{st}_1),\ldots,(x_n,\msf{st}_n)\bigr).$ Suppose that $\msf{TCnP}$ SUC-realizes the ideal functionality in \Cref{UC-ST-CnP} and that $\msf{CC\text{-}MPC}$ UC-realizes the functionality $\mathcal{F}_{\mathsf{CC\text{-}MPC}}$. Then the protocol in \Cref{UC-ST-MPC} SUC-realizes $\mathcal{F}_{\msf{MPC}}^{\msf{ST}}$.

\end{theorem}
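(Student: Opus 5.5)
We prove \Cref{thm:sucmpc} in the hybrid model and then lift it with \Cref{thm:suc-comp}. Concretely, we first analyze the protocol of \Cref{UC-ST-MPC} in the $(\mathcal{F}^{\msf{ST}}_{\msf{CnP}},\mathcal{F}_{\mathsf{CC\text{-}MPC}})$-hybrid model, where every $\mathsf{TCnP}$ instance is replaced by a call to $\mathcal{F}^{\msf{ST}}_{\msf{CnP}}$ and the $\mathsf{CC\text{-}MPC}$ execution by $\mathcal{F}_{\mathsf{CC\text{-}MPC}}$. Once SUC-realization of $\mathcal{F}^{\msf{ST}}_{\msf{MPC}}$ holds in this hybrid, \Cref{thm:suc-comp} yields the real-model statement. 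It does so because the hybrid simulator only composes with the simulators of $\mathsf{TCnP}$, which already respect the physical-resource constraints, and no new communication path or entanglement is introduced. The only remaining real component is $\msf{Com}$, used in Phase~1.

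For a set $\mcal{I}$ of corrupted parties, the hybrid simulator $\mcal{S}=(\mcal{S}_0,\mcal{S}_1)$ runs $\mcal{A}$ internally. $\mcal{S}_0$ plays Phase~1 honestly on the corrupted side. For honest parties, it sends commitments to a fixed dummy trajectory. It then emulates all $\mathcal{F}^{\msf{ST}}_{\msf{CnP}}$ commit calls and the $\mathcal{F}_{\mathsf{CC\text{-}MPC}}$ input phase. From these ideal-functionality interfaces it reads off, for each $i\in\mcal{I}$:
\begin{itemize}
\item the effective input $(x_i^*,\msf{st}_i^*,\mathsf{decom}_i^*)$ submitted to $\mathcal{F}_{\mathsf{CC\text{-}MPC}}$;
\item the per-round segments $\msf{st}^*_{i,r,\msf{comp}},\msf{st}^*_{i,r,\msf{proof}}$ recorded by the emulated $\mathcal{F}^{\msf{ST}}_{\msf{CnP}}$ instances.
\end{itemize}
It outputs $\{(x_i^*,\msf{st}_i^*)\}_{i\in\mcal{I}}$ together with its internal state $\rho$. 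In $\mathtt{sim}$ mode, $\mcal{S}_1$ continues round by round. It answers each prove request of $\mcal{A}$ by evaluating the consistency predicate on the recorded segments. It forwards $(x_i^*,\msf{st}_i^*)$ to $\mathcal{F}^{\msf{ST}}_{\msf{MPC}}$ only if $\mathsf{Verify}(C_i,\msf{st}_i^*,\mathsf{decom}_i^*)=1$ and all per-round proofs of $P_i$ accept; otherwise it submits $\bot$. It then returns the output to $\mcal{A}$ as $\mathcal{F}_{\mathsf{CC\text{-}MPC}}$ would.

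The indistinguishability argument is a short hybrid sequence. In the hybrid model, honest parties' $\mathcal{F}^{\msf{ST}}_{\msf{CnP}}$ and $\mathcal{F}_{\mathsf{CC\text{-}MPC}}$ interactions reveal nothing beyond headers and accept/abort bits. The one real change is switching honest commitments $C_j$ to commitments of dummy trajectories, which is covered by computational hiding of $\msf{Com}$. The consistency of $\mcal{S}_1$'s accept/abort decisions with the real protocol follows from statistical binding of $\msf{Com}$: the trajectory opened inside $\mathcal{F}_{\mathsf{CC\text{-}MPC}}$ and the segments that the consistency predicate checks against $C_i$ are the same string except with negligible probability. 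The round-by-round gating ensures no output is released before the corresponding proofs are accepted, so the abort behavior matches in both worlds.

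The main obstacle is the spatiotemporal verification condition, i.e.\ turning the per-instance certificates into one $\mathtt{prv}$-mode prover for the whole trajectory $\msf{st}_i^*$. The plan is to take $\Pi'$ to be the trajectory verification obtained by concatenating, over the rounds, the auxiliary verifiers that \Cref{thm:st-cnp} (via \Cref{thm:security-stc}) associates with each $\mathsf{TCnP}$ instance. In $\mathtt{prv}$ mode, $\mcal{S}_1$ reruns the execution from $\rho$ and relays external challenges inside each round's commit stage, exactly as in the extraction argument for \Cref{protocol:stc}. Whenever $\delta(\secpar,i,\msf{st}_i^*)\ge\kappa$, all segment proofs accept with at least that probability. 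By binding, these accepting segments cover $\msf{st}_i^*$, and the relay never places a component outside $\msf{S}^*$, giving acceptance probability $\poly(\delta)$.

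The delicate point is the per-instance extraction events, each holding with probability $\poly(\delta)$. Because they come from a single continued run rather than independent runs, they must be shown to hold jointly with probability $\poly(\delta)$, not merely individually. I would handle this by treating the whole sequence of commit stages as one spatiotemporal commitment, with extraction from one $\msf{ExtCom}$-style simulation, rather than multiplying separate bounds.
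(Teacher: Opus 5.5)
\textbf{Indistinguishability.} This part follows the paper's route. You replace $\mathsf{CC\text{-}MPC}$ and every $\mathsf{TCnP}$ instance by $\mathcal{F}_{\mathsf{CC\text{-}MPC}}$ and $\mathcal{F}^{\mathsf{ST}}_{\mathsf{CnP}}$, read the corrupted party's $(x_i^*,\mathsf{st}_i^*,\mathsf{decom}_i^*)$ off the $\mathcal{F}_{\mathsf{CC\text{-}MPC}}$ interface, use binding of $\mathsf{Com}$ to tie accepted segments to $\mathsf{st}_i^*$, and rely on round-by-round gating to match abort behavior. Your explicit switch of the honest $C_j$ to dummy commitments under computational hiding is needed, and the paper's proof leaves it implicit.

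One small repair: round-$(r+1)$ segments are committed adaptively, after the round-$r$ prove outcomes and MPC messages. So $\mathcal{S}_0$ cannot collect them before $\mathcal{S}_1$ starts. $\mathcal{S}_0$ only needs to output $(x_i^*,\mathsf{st}_i^*)$, and the segments must be recorded by $\mathcal{S}_1$ during $\mathtt{sim}$ mode.

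\textbf{The gap: spatiotemporal verification.} What you propose does not establish this condition under the theorem's hypothesis. That hypothesis gives each $\mathsf{TCnP}$ instance only the guarantee of \Cref{def:uc-smpc}. This is a $\mathtt{prv}$ mode run in its own experiment, starting from a fresh first stage and producing no continuation state from which the outer execution can reach the next round. Hence the per-instance certificates cannot be spliced into one prover for your concatenated $\Pi'$, and nothing bounds the joint success probability. This is exactly the issue you flag.

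There is also an inconsistency in your framing. In the $(\mathcal{F}^{\mathsf{ST}}_{\mathsf{CnP}},\mathcal{F}_{\mathsf{CC\text{-}MPC}})$-hybrid model you set up, there are no physical challenges to relay. The relay you describe therefore lives in the real $\mathsf{TCnP}$ instances. So this condition is not something you can prove in the hybrid and then lift; that lifting is precisely what the sketch of \Cref{thm:suc-comp} asserts without argument.

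Your remedy is a single straight-line run in which every round's PV challenges are replaced by the external verifier's, with responses decrypted using one-time-pad keys extracted straight-line. This does work for the concrete \Cref{UC-ST-CnP}. The relay leaves the adversary's view unchanged, so the simulated continuation proceeds through all rounds, and joint acceptance is at least $\delta-\mathsf{negl}$ on the event that $\mathcal{S}_1$ would submit $\mathsf{st}_i^*$.

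However, this remedy opens up $\mathsf{TCnP}$. It proves the theorem only for that instantiation, or under a stronger hypothesis: a certification mode that runs concurrently with the simulated continuation rather than in place of it.

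It also has to be a single forward run against the fixed candidate $\mathsf{st}_i^*$. Observation times cannot be revisited, and Phase~1 uses a non-extractable commitment, so the full trajectory is not extractable before the first window. ``Rerunning the execution from $\rho$'' therefore only makes sense if $\rho$ precedes all observation windows.

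The paper's proof does not close this gap either. It simply asserts that invoking the per-segment simulators certifies $\mathsf{st}_i^*$ with probability $\delta(\lambda)$.
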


\begin{proof}[Proof]
\subpara{Indistinguishability.} We prove indistinguishability by a sequence of hybrid games. The first hybrid $\mathcal{H}_0$ is the real execution of $\Cref{UC-ST-MPC}$, while the final hybrid $\mathcal{H}_3$ is the ideal execution with $\mathcal{F}_\msf{MPC}^\msf{ST}$. In every hybrid, the environment $\mathcal{Z}$ interacts with the same externally visible interfaces, and an abort in any invoked subprotocol causes the entire execution to abort. In each hybrid, we describe the differences from the previous hybrid.

\begin{itemize}
    \item \textbf{Hybrid $H_0$}. This is the real execution of Protocol~\ref{UC-ST-MPC}.
    \item \textbf{Hybrid $H_1$}. In this hybrid, we replace the execution of $\mathsf{CC\text{-}MPC}$ is replaced by an invocation of $\mathcal{F}_{\mathsf{CC\text{-}MPC}}$. Phase~1 and every $\mathsf{TCnP}$ instance remain real.
    \item \textbf{Hybrid $H_2$}. 
    In this hybrid, we additionally replace each round-$r$ instance of $\mathsf{TCnP}$ with an invocation of $\mathcal{F}^{\msf{ST}}_{\msf{CnP}}$. Upon receiving a prove request for predicate $g_r$ corresponding to the statement proven by the round-$r$ proof, the functionality accepts only if the committed trajectory segment is physically valid and satisfies $g_r$. In particular, any accepted segment must be consistent with the trajectory committed by the sender during the commit phase.
    
    \item \textbf{Hybrid $H_3$}. This is the ideal execution with $\mathcal{F}_\msf{MPC}^\msf{ST}$. 
\end{itemize}

We now prove indistinguishability between each pair of consecutive hybrids.

\begin{itemize}
    \item $\mathcal H_0 \cind \mathcal H_1$:
    The only difference between $\mathcal H_0$ and $\mathcal H_1$ is that the real execution of $\mathsf{CC\text{-}MPC}$ is replaced by an invocation of $\mathcal F_{\mathsf{CC\text{-}MPC}}$ together with its UC simulator. Since $\mathsf{CC\text{-}MPC}$ UC-realizes $\mathcal F_{\mathsf{CC\text{-}MPC}}$, and the surrounding protocol satisfies the required subroutine-respecting and subroutine-exposing conditions, \Cref{thm:suc-comp} implies that this replacement remains computationally indistinguishable in the complete spatiotemporal execution. Hence, $\mathcal H_0 \cind \mathcal H_1.$

    \item $\mathcal{H}_1 \cind \mathcal{H}_2$: The two hybrids differ only in that each real instance of $\mathsf{TCnP}$ in $\mathcal{H}_1$ is replaced in $\mathcal{H}_2$ by an invocation of $\mathcal{F}_{\mathsf{ST\text{-}CnP}}$ together with the corresponding SUC simulator. Consider a sequence of intermediate hybrids that replaces the $\mathsf{TCnP}$ instances one at a time. By the SUC security of $\mathsf{TCnP}$ and the universal composition guarantee of \Cref{thm:suc-comp}, every pair of adjacent hybrids is computationally indistinguishable. Since the protocol invokes only polynomially many $\mathsf{TCnP}$ instances, a standard hybrid argument yields $\mathcal{H}_1 \cind \mathcal{H}_2$.
    
    \item $\mathcal{H}_2\cind\mathcal{H}_3$: We construct an ideal-world simulator $\mathcal{S}=(\mathcal{S}_0,\mathcal{S}_1)$ such that the ideal execution with $\mathcal{F}_\msf{MPC}^\msf{ST}$ is computationally indistinguishable from $\mathcal{H}_2$. Let $\mathcal{A}$ be the adversary in $\mathcal{H}_2$. The simulator $\mathcal{S}$ runs an internal copy of $\mathcal{A}$, simulates the public interfaces of the $\mathcal{H}_2$ execution, and forwards all messages between $\mathcal{Z}$ and the internal copy of $\mathcal{A}$. It also records the public commitment vector $C=(C_1,\ldots,C_n)$ generated in Phase~1. At this stage, $\mathcal{S}$ neither learns nor uses the inputs of the honest parties; the commitment $C_i$ serves only to bind the trajectory used by $P_i$ in the MPC execution to the trajectory segments subsequently verified by the round-specific instances of $\mathcal{F}_\msf{CnP}^\msf{ST}$.

For every corrupted party $P_i\in\mcal{I}$, the simulator proceeds as follows:
\begin{enumerate}
    \item When the internal adversary supplies $(x_i^*,\mathsf{st}_i^*,\mathsf{decom}_i^*)$ to $\mathcal{F}_{\mathsf{CC\text{-}MPC}}$, the simulator emulates the honest parties and checks whether $\mathsf{Verify}(C_i,\mathsf{st}_i^*,\mathsf{decom}_i^*)=1.$
    
    If this check fails, $\mathcal{S}_1$ simulates the corresponding abort. Otherwise, it records $(x_i^*,\mathsf{st}_i^*)$ as the input of $P_i$.

    \item Whenever the internal adversary invokes $\mathcal{F}^\msf{ST}_{\msf{CnP}}$ in round $r$, $\mathcal{S}_0$ extracts and records the trajectory segment submitted by $\mcal{A}$. The component $\mathcal{S}_1$ then checks the outcome of the ideal functionality and records the extracted trajectory segment if it is accepted. By the definition of the predicate $g_r$, every accepted segment is the corresponding portion of a trajectory opening the same Phase~1 commitment $C_i$. Since $\mathsf{Com}$ is computationally binding, except with negligible probability, all accepted segments are consistent with the trajectory $\mathsf{st}_i^*$ supplied to $\mathcal{F}_{\mathsf{CC\text{-}MPC}}$ and with the unique trajectory committed in $C_i$.
\end{enumerate}

Once the required trajectory segments have been accepted and jointly cover the prescribed execution interval, $\mathcal{S}_1$ submits $(x_i^*,\mathsf{st}_i^*)$ to $\mathcal{F}_\msf{MPC}^\msf{ST}$ on behalf of $\mcal{A}$. If any required invocation of $\mathcal{F}_\msf{CnP}^\msf{ST}$ rejects, or if the submitted trajectory fails to open $C_i$, the simulator reproduces the corresponding abort. Otherwise, the corrupted-party inputs submitted to $\mathcal{F}_\msf{MPC}^\msf{ST}$ are exactly those accepted by $\mathcal{F}_{\mathsf{CC\text{-}MPC}}$ in $\mathcal{H}_2$, while the honest parties supply their inputs directly to the ideal functionality. Consequently, the output returned by $\mathcal{F}_\msf{MPC}^\msf{ST}$ is identical to the output returned by $\mathcal{F}_{\mathsf{CC\text{-}MPC}}$ in $\mathcal{H}_2$, and $\mathcal{S}_1$ forwards this output to the internal adversary.

It follows that the output and abort behavior in $\mathcal{H}_3$ agree with those in $\mathcal{H}_2$, except with negligible probability arising from a violation of the computational binding security of $\mathsf{Com}$. Therefore, $\mathcal{H}_2\cind\mathcal{H}_3.$

\end{itemize}

Combining the hybrid transitions, we obtain the indistinguishability of the real and ideal executions.

\subpara{Spacetime verification.} Building on the proof that $\mathcal{H}_2 \cind \mathcal{H}_3$, we establish the \Cref{cond:spacetime-verification} of \Cref{thm:sucmpc}. Consider any tuple $(\msf{st}_i^*,x_i^*,\rho)$ extracted by the simulator $\mcal{S}_0$. Whenever $\mcal{S}_1$ forwards $(\msf{st}_i^*,x_i^*)$ to the ideal functionality, $\mcal{F}_{\msf{CnP}}^{\msf{ST}}$ must have received the segment of $(\msf{st}_i^*)$ from its internal simulator. By the definition of $\mcal{F}_{\msf{CnP}}^{\msf{ST}}$, we may invoke its simulator $\mcal{S}_2^i$ on each segment of $\msf{st}_i^*$, thereby verifying every corresponding segment. Consequently, if $\mcal{S}_1$ forwards $(\msf{st}_i^*,x_i^*)$ to the ideal functionality with probability $\delta(\lambda)$, then the extracted trajectory $\msf{st}_i^*$ satisfies the required spacetime-verification condition with probability $\delta(\lambda)$. This completes the proof.

\end{proof}

\para{Instantiation.} By \Cref{corollary:cnp}, an SUC-secure commit-and-prove scheme can be instantiated against unentangled QPT adversaries in the CRS model under the LWE assumption and against entangled QPT adversaries in the QROM. Moreover, the required post-quantum commitment scheme and UC-secure zero-knowledge proof system can be instantiated in the respective models \cite{C:DFLSS09,EC:ABKK23}. We therefore obtain the following corollary (see also \Cref{rmk:local-global-setup}).

\begin{corollary}
Depending on the adversarial model and setup assumptions, the following instantiations exist:
\begin{itemize}
    \item In the CRS model, assuming QLWE, there exists an SUC-secure spatiotemporal MPC protocol against unentangled QPT adversaries.
    \item In QROM, there exists an SUC-secure spatiotemporal MPC protocol against entangled QPT adversaries.
\end{itemize}
\end{corollary}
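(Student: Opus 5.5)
The plan is to obtain the corollary directly from \Cref{thm:sucmpc}, which reduces SUC security of \Cref{UC-ST-MPC} to three ingredients: a post-quantum (statistically binding, computationally hiding) commitment $\msf{Com}$, a protocol $\msf{TCnP}$ that SUC-realizes $\mathcal{F}^{\msf{ST}}_{\msf{CnP}}$, and an $L$-round post-quantum protocol $\msf{CC\text{-}MPC}$ that UC-realizes $\mathcal{F}_{\mathsf{CC\text{-}MPC}}$. For each bullet, I will instantiate these three ingredients in the stated model and then apply \Cref{thm:sucmpc}. Composition is handled by \Cref{thm:suc-comp}, with the setup treated as local in the sense of \Cref{rmk:local-global-setup}.

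\textbf{CRS/QLWE case.} I will take $\msf{TCnP}$ from the first item of \Cref{corollary:cnp}. It combines the nice PV protocol of \cite{ITCS:LiuLiuQia22}, which is sound against unentangled adversaries, with UC commitments and UC-ZK in the CRS model from QLWE \cite{C:DFLSS09}. For $\msf{Com}$, I will use Naor's commitment built from a post-quantum PRG; the PRG exists under QLWE, since QLWE implies post-quantum one-way functions. Note that $\mathcal{F}_{\mathsf{CC\text{-}MPC}}$ is just an ordinary classical functionality: it computes $F$ after checking $\mathsf{Verify}(C_i,\msf{st}_i,\mathsf{decom}_i)$. It is therefore UC-realizable by any post-quantum UC MPC in the CRS model, for instance the constant-round QLWE-based protocols \cite{C:PeiVaiWat08,EC:GarSri18a}, or the GMW compiler applied to semi-honest PQ-MPC using the UC commitments and UC-ZK above. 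I will give each subprotocol its own independent CRS, so that the local-setup composition in \Cref{thm:suc-comp} applies.

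\textbf{QROM case.} I will take $\msf{TCnP}$ from the second item of \Cref{corollary:cnp}. It uses the entanglement-secure PV protocol of \cite{C:Unruh14} together with QROM-based extractable commitments and ZK \cite{EC:ABKK23}; the oracle-dependent consistency statement is handled by the black-box commit-and-prove adaptation of \cite{C:CCLY22}, exactly as in \Cref{cor:entangled}. For $\msf{Com}$, a random-oracle-based commitment $H(\msf{st}\|r)$ with a sufficiently long output is computationally binding and hiding in the QROM; alternatively, Naor's commitment with the oracle serving as the PRG. For $\msf{CC\text{-}MPC}$, I plan to build UC-secure OT in the QROM using quantum communication together with the QROM's extractable and equivocal commitments, following the MiniQCrypt-style results \cite{C:BCKM21b,EC:GLSV21}. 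I will then lift this OT to UC MPC for $\mathcal{F}_{\mathsf{CC\text{-}MPC}}$ via the standard OT-based compilers. Since $\msf{Verify}$ queries the oracle, the check inside $\mathcal{F}_{\mathsf{CC\text{-}MPC}}$ is an oracle-aided computation. This is acceptable because the MPC evaluates it through a black-box use of $H$ supplied by each party's own evaluation, with consistency enforced inside the MPC; alternatively, I can use the PRG-based Naor variant to keep the checked circuit explicit.

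\textbf{Main obstacle.} I expect the hardest step to be ensuring that the \emph{spatiotemporal verification} condition survives composition in the entangled setting. The composed simulator's $\mathtt{prv}$ mode must remain an admissible entangled coalition for the soundness of \cite{C:Unruh14}: none of the other subprotocol simulators (the MPC and commitment simulators, including any oracle programming or extraction they perform) may place a component outside $\msf{S}^*$ or inject entanglement or communication beyond what $\mathcal{A}$ already has. My approach is to check that all these simulators act locally on messages that the adversary's own components send and receive, so the reduction inside \Cref{thm:suc-comp} goes through unchanged. A secondary point is that oracle programming by several simulators sharing one random oracle must stay consistent; I would address this by using domain-separated oracles per subprotocol, which is again local setup.
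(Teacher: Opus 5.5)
Your route is the paper's. Its proof of this corollary is just the instantiation paragraph, which plugs \Cref{corollary:cnp}, a post-quantum commitment, and UC building blocks into \Cref{thm:sucmpc} with local setup (\Cref{rmk:local-global-setup}). You are more explicit than the paper, which never says how $\mathsf{CC\text{-}MPC}$ is instantiated, and your CRS/QLWE bullet is fine.

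The gap is in the QROM bullet, at the point where you go beyond the paper. If $\mathsf{Com}$ queries the random oracle, then $\mathsf{Verify}$ inside $\mathcal{F}_{\mathsf{CC\text{-}MPC}}$ is not an explicit circuit, and neither of your remedies fixes this.
\begin{itemize}
\item Letting each party supply its own evaluation of $H$ gives no binding. The functionality cannot check a claimed value $h_i=H(\mathsf{st}_i\|r_i)$ on secret inputs, so a corrupted party may input any $\mathsf{st}'_i$ together with $h_i=C_i$. A circuit-based MPC also cannot query the oracle on secret-shared data.
\item Naor's commitment with the oracle as the PRG does not make the check explicit either. Verification recomputes $G(s)$ and hence still queries the oracle.
\end{itemize}
The same oracle dependence also affects the round predicates $g_r$ of $\mathsf{TCnP}$, which must check that the authenticated segment is a restriction of a trajectory opening $C_i$. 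It is not confined to $\mathcal{F}_{\mathsf{CC\text{-}MPC}}$.

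The paper's instantiation paragraph does not discuss this either; it only cites \cite{EC:ABKK23} for the commitment. So what you need is a correct repair, not a different overall approach.
\begin{itemize}
\item The simplest repair is Naor's commitment with an \emph{explicit} post-quantum PRG. Then the check in $\mathcal{F}_{\mathsf{CC\text{-}MPC}}$ and the opening check in each $g_r$ are ordinary circuits, so your OT-hybrid UC MPC applies. The black-box commit-and-prove adaptation of \cite{C:CCLY22} then only has to handle the oracle-based commitments inside $\mathsf{TCnP}$, as in \Cref{cor:entangled}.
\item This costs a PQ-OWF assumption. The paper itself notes that post-quantum MPC implies PQ-OWF, and semi-honest PQ-MPC is already assumed in \Cref{thm:informal:spatiotemporal-mpc}.
\item To stay purely in the QROM, you would instead need a black-box commit-and-compute mechanism linking the MPC inputs to the trajectory commitments. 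That is a genuinely additional construction, not a routine citation.
\end{itemize}
The rest of your QROM plan is sound and consistent with the paper's sketched composition argument. This covers UC OT from quantum communication plus extractable/equivocal commitments, and statistically UC-secure MPC in the OT-hybrid model. It also covers domain-separated oracles, so that no simulator needs to program the oracle used by the external verifier of \cite{C:Unruh14}.
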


\section{Spatiotemporal Multiparty Quantum Computation}
In this section, we extend our construction to quantum functionalities. The main difference from the classical setting is that each party's input is a quantum--classical pair $(\reg{x_i},\msf{st}_i)$, where $\reg{x_i}$ is a quantum register and $\msf{st}_i$ remains classical. Accordingly, the desired functionality $\mcal{Q}$ may perform quantum computation on the registers $\reg{x_1},\ldots,\reg{x_n}$ and classical computation on the trajectories $\msf{st}_1,\ldots,\msf{st}_n$. We therefore employ an MPQC protocol capable of evaluating such a hybrid quantum--classical functionality.

Because the trajectory components remain classical throughout the execution, their physical validity and commitment consistency can be enforced using the same approach as in \Cref{UC-ST-MPC}. Specifically, each party $P_i$ commits to its classical trajectory $\msf{st}_i$, and the MPQC protocol evaluates a modified functionality $\mcal{Q}'$. This functionality first verifies that each supplied trajectory, together with its corresponding decommitment, correctly opens the associated commitment. If all commitment checks succeed, it evaluates
\[
    \mcal{Q}
    \bigl(
        (\reg{x_1},\msf{st}_1),\ldots,
        (\reg{x_n},\msf{st}_n)
    \bigr);
\]
otherwise, it outputs $\bot$. Concurrently, the \textsf{TCnP} protocol verifies the physical validity of each classical trajectory and proves that the physically verified trajectory is consistent with the trajectory bound to the corresponding commitment. This is performed in exactly the same manner as in \Cref{UC-ST-MPC}.

Thus, the MPQC protocol performs the desired computation, whereas \textsf{TCnP} provides physical verification of the classical trajectory components. Since the trajectories remain classical, the commitment-consistency mechanism from the classical construction applies without modification.

\para{On the SUC framework for MPQC.} Technically speaking, the SUC framework we defined in \Cref{def:uc-smpc} is applicable to classical functionalities. However, the framework can be adapted to quantum functionalities with only minor modifications. In particular, the real-world execution, ideal process, and notion of SUC realization are defined as in \Cref{def:uc-smpc}, except that each party's input register $\reg{x_i}$ may now contain a quantum state, while its trajectory $\msf{st}_i$ remains classical. Accordingly, the ideal functionality may receive, process, and output quantum states. Since all other definitions and components remain unchanged from the classical framework, we do not restate them here and throughout use the notion of SUC realization given in \Cref{def:uc-smpc}.

\section{Acknowledgements and AI Disclosure}

\para{Acknowledgements.} Fuyuki Kitagawa and Xiao Liang would like to thank Tomoyuki Morimae, the organizer of \emph{the 2nd Kyoto Quantum Crypto Workshop} \cite{YITPWorkshopKyotoQuantumCrypto2}. The idea for this work was first conceived and discussed during their participation in the workshop.

\para{AI Disclosure.} Large Language Models  were used solely to polish the language and improve the clarity of certain parts of the manuscript. They were not used to generate paragraphs or to develop any of the ideas, techniques, or proofs presented in this work.

\clearpage

{
	\phantomsection
	\addcontentsline{toc}{section}{References}
}
\newcommand{\etalchar}[1]{$^{#1}$}

\end{document}